\newtheorem{theorem}{Theorem}
\newtheorem{definition}[theorem]{Definition}
\newtheorem{proposition}[theorem]{Proposition}
\newtheorem{lemma}[theorem]{Lemma}
\newtheorem{corollary}[theorem]{Corollary}
\newtheorem{remark}[theorem]{Remark}
\newcommand{\Tau}{\tau}
\newcommand{\cM}{\mathcal M}
\newcommand{\cP}{\mathcal P}
\newcommand{\cB}{\mathcal B}
\newcommand{\cA}{\mathcal A}
\newcommand{\cJ}{\mathcal J}
\newcommand{\cZ}{\mathcal Z}
\newcommand{\cO}{\mathcal O}
\newcommand{\CC}{\mathbb C}
\newcommand{\ZZ}{\mathbb Z}
\newcommand{\pa}{\partial}
\newcommand{\Spec}{\mathrm{Spec}\,}
\newcommand{\git}{\mathbin{
  \mathchoice{/\mkern-6mu/}
    {/\mkern-6mu/}
    {/\mkern-5mu/}
    {/\mkern-5mu/}}}
\tikzset{
->-/.style={decoration={
  markings,
  mark=at position .5 with {\arrow{>}}},postaction={decorate}},
->--/.style={decoration={
  markings,
  mark=at position .3 with {\arrow{>}}},postaction={decorate}},
-->-/.style={decoration={
  markings,
  mark=at position .7 with {\arrow{>}}},postaction={decorate}},
-<-/.style={decoration={
  markings,
  mark=at position .5 with {\arrow{<}}},postaction={decorate}},
-<--/.style={decoration={
  markings,
  mark=at position .3 with {\arrow{<}}},postaction={decorate}},
--<-/.style={decoration={
  markings,
  mark=at position .7 with {\arrow{<}}},postaction={decorate}},
subset/.style={
    draw=none,
    every to/.append style={
      edge node={node [sloped, allow upside down, auto=false]{$\subset$}}}
  },
equal/.style={
    draw=none,
    every to/.append style={
      edge node={node [sloped, allow upside down, auto=false]{$=$}}}
  }
assign/.style={
    draw=none,
    every to/.append style={
      edge node={node [sloped, allow upside down, auto=false]{$=$}}}
  }
}
\tikzset{
  symbol/.style={
    draw=none,
    every to/.append style={
      edge node={node [sloped, allow upside down, auto=false]{$#1$}}}
  }
}
\title{Superintegrable Systems on Moduli Spaces of Flat Connections}
\author{S.~Arthamonov}
\address{S.A.: Department of Mathematics, University of California, Berkeley,
CA 94720, USA \& ITEP, Moscow, Russia}
\email{artamonov@berkeley.edu}
\author{N.~Reshetikhin}
\address{N.R.: Department of Mathematics, University of California, Berkeley,
CA 94720, USA \& Physics Department, St. Petersburg University, Russia \&KdV Institute for Mathematics, University of Amsterdam,
Science Park 904, 1098 XH Amsterdam, The Netherlands.}
\email{reshetik@math.berkeley.edu}
\begin{document}

\begin{abstract}
The main result of this paper is the construction of a family of superintegrable Hamiltonian systems on
moduli spaces of flat connections on a principle $G$-bundle on
a surface. The moduli space is a Poisson variety with Atiyah-Bott Poisson structure. Among particular cases of such systems are spin
generalizations of Ruijsenaars-Schneider models.
\end{abstract}

\maketitle

\section*{Introduction}

Let $G$ be a simple linear algebraic group over $\mathbb C$. In this paper we construct superintegrable Hamiltonian systems on moduli spaces of flat $G$-connections over any oriented surface with nonempty boundary. Hamiltonians of such systems are traces of holonomies along non intersecting non self-intersecting curves. The construction naturally works in the same way for various real forms of $G$, for example for compact simple Lie groups or for split real forms.

The fact that Hamiltonian integrable systems appear in the
context of gauge theories is not new, with Hitchin systems \cite{H}
being one of the well known examples. Calogero-Moser type systems were put in the
context of gauge theory earlier 1990's,  see  \cite{GN1}, \cite{GN2} and references therein.
See \cite{FK}\cite{AO} for more recent devolopments and references.

The main result of this paper involves the notion of a superintegrable Hamiltonian system
which we review in section \ref{SI}, and of graph functions which we review in section \ref{MGF}.
Denote the moduli space of flat $G$-connections on
$\Sigma$ as $\mathcal M^G_\Sigma$. Our main result is the following theorem.

\begin{theorem}\label{theorem:SuperIntSystemsCyclesIntro}
Let $\Sigma$ be an oriented surface of genus $g$ with $b>0$ boundary components and $G$ be a simple linear algebraic group over $\mathbb C$. For each disjoint union of pairwise nonhomotopic simple closed curves $C=C_1\sqcup\dots\sqcup C_k$ in $\Sigma$, none of which is homopotic to a boundary component, the following inclusions of Poisson algebras define an affine superintegrable system
\begin{align*}
    Z_{\partial\Sigma}\subset B_C\subset J_{\Sigma\backslash C}\subset \mathcal O[\mathcal M^G_\Sigma].
\end{align*}
Here $Z_{\partial\Sigma}$ is the algebra of $G$-invariant functions on holonomies around boundary components, $B_C$ is a subalgebra of the coordinate ring $O[\mathcal M^G_\Sigma]$ generated by graph functions $F_{\Gamma,V,c}$ with $\Gamma$ being contractible to $C\sqcup\partial\Sigma$. Similarly, $J_{\Sigma\backslash C}$ is a subalgebra generated by graph functions $F_{\Gamma,V,c}$ with $\Gamma\subset (\Sigma\backslash C)$.
\end{theorem}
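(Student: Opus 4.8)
The plan is to verify, one by one, the conditions defining an affine superintegrable system in Section~\ref{SI} for the chain $Z_{\partial\Sigma}\subset B_C\subset J_{\Sigma\setminus C}\subset\mathcal O[\mathcal M^G_\Sigma]$: that $B_C$ is Poisson-commutative, that $J_{\Sigma\setminus C}$ is its full Poisson centraliser in $\mathcal O[\mathcal M^G_\Sigma]$, that $B_C$ is the Poisson centre of $J_{\Sigma\setminus C}$, that $Z_{\partial\Sigma}$ is the Poisson centre of $\mathcal O[\mathcal M^G_\Sigma]$, and that the relevant transcendence degrees obey the superintegrability identity. I would lean on two structural inputs. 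First, the combinatorial formula for the Atiyah--Bott bracket of graph functions from Section~\ref{MGF}: $\{F_{\Gamma,V,c},F_{\Gamma',V',c'}\}$ is supported at the transverse intersection points of $\Gamma$ and $\Gamma'$, so it vanishes whenever these graphs admit disjoint representatives, and the bracket of two graph functions drawn in $\Sigma\setminus C$ is again a combination of graph functions drawn in $\Sigma\setminus C$. Second, the identification of cutting along $C$ with Poisson reduction: restriction of flat connections is a Poisson map $\mathrm{res}\colon\mathcal M^G_\Sigma\to\mathcal M^G_{\Sigma\setminus C}$ onto the Poisson subvariety where, for each $i$, the holonomies around the two banks $C_i^{+},C_i^{-}$ of the $i$-th cut are mutually inverse up to conjugacy, and the fibres of $\mathrm{res}$ are the orbits of the gluing (twist) directions along the $C_i$.

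The inclusions are then immediate. A graph contractible to $\partial\Sigma$ is in particular contractible to $C\sqcup\partial\Sigma$, and each $C_i$, like $\partial\Sigma$, can be homotoped off itself into $\Sigma\setminus C$, so $Z_{\partial\Sigma}\subset B_C\subset J_{\Sigma\setminus C}$; the last inclusion is tautological. Each subalgebra is Poisson-closed: $Z_{\partial\Sigma}$ consists of Casimirs (indeed it is the whole Poisson centre of $\mathcal O[\mathcal M^G_\Sigma]$, by the standard description of the Casimirs of a surface-with-boundary moduli space as boundary-holonomy invariants); $B_C$ is Poisson-commutative because its generators can be realised in pairwise disjoint annular neighbourhoods of the $C_i$ and of $\partial\Sigma$; and $J_{\Sigma\setminus C}$ is closed by the last remark of the preceding paragraph, being the image of the Poisson algebra $\mathcal O[\mathcal M^G_{\Sigma\setminus C}]$ under $\mathrm{res}^{*}$. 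The same disjointness argument gives $\{B_C,J_{\Sigma\setminus C}\}=0$, since a generator of $B_C$ can be drawn in a thin annulus around some $C_i$ disjoint from any prescribed graph in $\Sigma\setminus C$. So far $J_{\Sigma\setminus C}$ is contained in the centraliser of $B_C$ and $B_C$ in the centre of $J_{\Sigma\setminus C}$.

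The two reverse (maximality) inclusions are the heart of the matter. For $B_C=\mathrm{centre}(J_{\Sigma\setminus C})$ I would apply, componentwise over $\Sigma\setminus C$, the description of the Poisson centre of a surface moduli algebra as the algebra generated by boundary-holonomy invariants: the boundary of $\Sigma\setminus C$ is $\partial\Sigma$ together with the banks $C_i^{\pm}$, and since $\mathrm{res}$ only imposes the conjugacy-matching between $C_i^{+}$ and $(C_i^{-})^{-1}$ — a condition on Casimirs — the conjugacy classes of the $C_i$ survive as residual Casimirs on $\overline{\mathrm{im}\,\mathrm{res}}$, so the centre of $\mathcal O[\overline{\mathrm{im}\,\mathrm{res}}]\cong J_{\Sigma\setminus C}$ is exactly the algebra of $\partial\Sigma$- and $C_i$-holonomy invariants, which is $B_C$. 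The genuinely delicate point, and what I expect to be the main obstacle, is $J_{\Sigma\setminus C}=\mathrm{centraliser}(B_C)$ in $\mathcal O[\mathcal M^G_\Sigma]$: one must show that a regular function Poisson-commuting with all $C_i$-holonomy invariants already factors through $\mathrm{res}$. I would argue that the Hamiltonian vector fields of the $C_i$-holonomy invariants span precisely the generalised twist directions along the $C_i$ — a Goldman-type statement, valid for arbitrary $G$ — and that these span the tangent spaces to the fibres of $\mathrm{res}$, by a dimension count ($k\cdot\mathrm{rk}\,G$ on both sides) together with connectedness of the fibres (they are orbits of the connected gluing group); hence $\{f,B_C\}=0$ forces $f$ constant along the fibres of $\mathrm{res}$, i.e.\ $f\in J_{\Sigma\setminus C}$. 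A more combinatorial route, which may be what the paper follows, is to expand $f$ in a basis of graph functions whose graphs meet $\bigsqcup_i C_i$ minimally and show via the bracket formula that vanishing of the brackets with all differently decorated $C_i$-loops forces every such graph off $C$; either way, controlling this reverse inclusion (and the associated connectedness and non-degeneracy) is where the real work lies.

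Finally the numerics. From the free presentations of $\pi_1(\Sigma)$ and of the connected pieces of $\Sigma\setminus C$ — none of which is a disc or annulus, because the $C_i$ are essential, pairwise non-isotopic and not boundary-parallel — one computes $N:=\mathrm{trdeg}\,\mathcal O[\mathcal M^G_\Sigma]=(2g+b-2)\dim G$, $\mathrm{trdeg}\,Z_{\partial\Sigma}=b\cdot\mathrm{rk}\,G$ (the corank of the Atiyah--Bott bracket, so a generic symplectic leaf has dimension $N-b\cdot\mathrm{rk}\,G$), $\mathrm{trdeg}\,B_C=(b+k)\cdot\mathrm{rk}\,G$, and $\mathrm{trdeg}\,J_{\Sigma\setminus C}=\dim\overline{\mathrm{im}\,\mathrm{res}}=N-k\cdot\mathrm{rk}\,G$. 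Hence $\mathrm{trdeg}\,B_C+\mathrm{trdeg}\,J_{\Sigma\setminus C}=N+b\cdot\mathrm{rk}\,G=N+\mathrm{trdeg}\,Z_{\partial\Sigma}$, which is exactly the superintegrability identity (the sum of the two transcendence degrees equals $\dim\mathcal M^G_\Sigma$ plus the corank); the system is properly superintegrable when $(b+k)\,\mathrm{rk}\,G<\tfrac{1}{2}\big(N+b\,\mathrm{rk}\,G\big)$ and degenerates to a Liouville integrable system otherwise, both being instances of the definition in Section~\ref{SI}. (For $k=0$ the statement is the trivial system $Z_{\partial\Sigma}=B_C\subset J_{\Sigma\setminus C}=\mathcal O[\mathcal M^G_\Sigma]$.)
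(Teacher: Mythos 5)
Your route is genuinely different from the paper's, and the first thing to say is that you have set yourself a strictly harder target than the theorem requires. The paper's Definition \ref{def:SuperIntDim} of an affine superintegrable system asks only for $\{Z,A\}=0=\{B,J\}$ together with the dimension balance (\ref{eq:SuperIntDim}); it does \emph{not} ask that $J$ be the full Poisson centraliser of $B$, that $B$ be the Poisson centre of $J$, or that $Z$ be the full Poisson centre of $A$. The step you single out as ``the heart of the matter'' and leave unproven --- that a function commuting with all $C_i$-holonomy invariants factors through the restriction to $\Sigma\setminus C$ --- is therefore not needed, and indeed the authors themselves only state it as an expectation in a Remark after the proof (``it is natural to expect that $Z_{\partial\Sigma}$ is the full Poisson center\dots\ this implies that $J_{\Sigma\backslash C}$ is the full Poisson centralizer of $B_C$''). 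So read against the actual definition, your admitted gap is not a gap in a proof of the stated theorem; but your framing (``the conditions defining an affine superintegrable system'') misreads the definition, and your parenthetical assertion that $Z_{\partial\Sigma}$ \emph{is} the whole Poisson centre is something the paper only knows in the compact case.

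With that excised, what remains of your argument --- inclusions and all commutativity statements via disjoint representatives and formula (\ref{eq:PoissonBracketGraphFunctions}), plus the transcendence-degree bookkeeping $\dim B_C=(b+k)\,\mathrm{rank}\,G$, $\dim J_{\Sigma\setminus C}=\dim\mathcal M^G_\Sigma-k\,\mathrm{rank}\,G$, $\dim Z_{\partial\Sigma}=b\,\mathrm{rank}\,G$ --- is the correct content and the arithmetic matches (\ref{eq:SuperIntDim}). The difference in route is in how those two dimension assertions are justified. You argue globally: cutting along $C$ gives a restriction map whose image is cut out by the conjugacy-matching of the two banks of each $C_i$ (hence codimension $k\,\mathrm{rank}\,G$), and you identify $J_{\Sigma\setminus C}$ with the pullback of the coordinate ring of that image. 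The paper instead inducts on the number of connected components of $\Sigma\setminus C$: the base case (a nonseparating collection) is handled by writing an explicit free presentation of $\pi_1(\Sigma,p)$ adapted to the curves and reading off both codimensions directly (Proposition \ref{prop:NonseparatingCurve}), and the inductive step is a gluing proposition (Proposition \ref{prop:SeparatingCurve}) expressing $B$ and $J$ as tensor products over $H\simeq(\mathcal O[G]^G)^{\otimes r}$ and computing Krull dimensions by additivity of transcendence degree. Your picture is conceptually cleaner, but to make it rigorous you still need exactly the ingredients the paper's induction supplies: that $J_{\Sigma\setminus C}$, defined as the span of graph functions supported in $\Sigma\setminus C$, really is $\bigotimes_H\mathcal O[\mathcal M^G_{\Sigma_j}]$ (this is Theorem \ref{prop:InvariantSpanningSet} applied componentwise plus Lemma \ref{lemm:HIntersection}), that no component of $\Sigma\setminus C$ is a cylinder (where the dimension formula (\ref{eq:dimrepscheme}) degenerates), and that the $k$ matching conditions are independent so the image has the stated dimension. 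So the two arguments buy the same thing at essentially the same cost; yours packages the combinatorics into one global reduction statement, the paper's distributes it over an induction.
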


We also introduce the notion of a {\it refinement} of one superintegrable systems
by the other. The refinement defines a partial order on the space of superintegrable
systems.

For a torus with one puncture these systems are spin Ruijsenaars systems (also known as relativistic spin Calogero-Moser systems) \cite{FGNR}\cite{Reshetikhin'2016}.
For a torus with two punctures they are relativistic deformations
of two sided spin Calogero-Moser systems \cite{R2} corresponding to
symmetric pair $G\subset G\times G$ where $G$ is embedded diagonally. A particular case of our systems on a torus with $n$ punctures and rank one conjugation orbits of $SL_n$ is closely related to the systems studied in \cite{CF}. In this case the systems are actually Liouville integrable and they admit a superintegrable refinement. The maximal integrability in this case is achieved by a superintegrable system
with $r=\mathrm{rank}\,G$ independent Poisson commuting integrals. The minimal integrability is
a Liouville integrability with $b\,\mathrm{rank}\,G$ Poisson commuting integrals. In our framework
this correspond to a different choice of cycles $C$.

The mapping class group acts on the moduli space of flat connections by Poisson automorphisms. If $C$ and $C'$ are collections of cycles on surface $\Sigma$ which belong to the same orbit of the Mapping Class Group action, the corresponding superintegrable systems are isomorphic. An example of such isomorphism in the genus one case for involution exchanging the equator and meridian of the torus gives rise to a self-duality of the Ruijsenaars system \cite{FGNR,FockRosly'1999,FeherKlimcik'2011}. For a spin version of this duality see \cite{Reshetikhin'2016}.

The plan of this paper is as follows.
In the first section we recall the definition of superintegrable (degenerately integrable systems).
There we also define the notion of an affine superintegrable system in the algebro-geometrical setting.
The second section  is an overview of basic
notions about moduli spaces of flat connections on a surface. In this section we also recall the definition of
graph functions and the description of Poisson brackets between two such functions. In the third section we describe
the main result, the construction of a family of Hamiltonian systems defined by a choice of non intersecting, non selfinersection cycles and prove their superintegrability. At the end of this section we introduce
the notion of a a partial order on such systems.
In section 4 we explain how solutions to equations of motion of these superintegrable systems
can be solved using the projection method.  Section five has some  genus one examples.
In the conclusion we define a conjecturally superintegrable system on the space of chord diagrams,
discuss the case of non-generic conjugation orbits and some further directions.

{\bf Acknowledgements.} N.R. is grateful to J.~Stokman for numerous discussions. S.A. is grateful to V.~Roubtsov and L.~Feher for useful remarks. The work of N.R. was partially supported by grants NSF DMS-1601947 and RSF-18-11-00-297. The work of S.A. was partially supported by RSF-18-01-00926 and 19-51-50008-YaF.

\section{Superintegrable Systems}\label{SI}

\subsection{Superintegrable systems} For an overview of superintegrable systems see for example \cite{SInt}\cite{Reshetikhin'2016}.
Here we recall briefly basic definitions and introduce the notion of an affine
superintegrable system.

The notion of a superintegrable Hamiltonian systems were introduced in \cite{Nekhoroshev'1972} (where he called them degenerate integrable systems) as a generalization of the Liouville integrable systems. Superintegrable system on a $2n$ dimensional smooth symplectic manifold has $m>n$ independent\footnote{On a smooth manifold $\mathcal M$ (complex or real) we call functions $f_1,\dots,f_k\in C^\infty(\mathcal M)$ \textit{independent} if the corresponding differentials $\mathrm df_1,\dots,\mathrm df_k\in \Gamma(T^*\mathcal M)$ are linearly independent at every point in $\mathcal M$.} first integrals, however only $k$ of them are in involution and $r+k=2n$. A particular case when $m=k=n$ corresponds to Liouville integrable systems.  First examples of such systems
appeared earlier \cite{Pa}\cite{F}\cite{FMSW}. A family of examples of superintegrable systems associated to Lie groups was given  in \cite{MishchenkoFomenko'1978}. More recent examples include charateristic systems on simple Poisson-Lie groups \cite{R3}, spin Calogero-Moser and Ruijsenaars systems \cite{Reshetikhin'2003} and their relativistic counterparts \cite{Reshetikhin'2016}.

Geometrically, a superintegrable system of rank $k$ on a symplectic manifold $\cM_{2n}$ consists of a Poisson manifold $\cP_{2n-k}$,
a manifold $\cB_k$\footnote{Typically $\cB_k$ is not smooth but stratified by smooth strata
with one smooth higher dimensional stratum. In is frequently an orbifold.} (considered as a Poisson manifold with trivial symplectic structure) and Poisson projections

\begin{align}
\cM_{2n}\stackrel{p_1}{\longrightarrow}\cP_{2n-k}\stackrel{p_2}{\longrightarrow}\cB_k
\label{eq:PoissonProjections}
\end{align}

Denote by $A=C^\infty(\cM_{2n})$ the algebra of smooth functions on $\cM_{2n}$. It is equipped with a Poisson bracket $\{,\}$ determined by the symplectic structure. In terms of functions on $\cM_{2n}$ superintegrable system is determined by two
Poisson subalgebras of $A$:
\begin{align}
B\subset J\subset A
\label{eq:SuperintegrableSystemDef}
\end{align}
where $J=C^\infty(\cP_{2n-k})$ and $B=C^\infty(\cB_k)$. The subalgebra
$B$ is a Poisson commutative subalgebra and $J$ is the Poisson centralizer of $B$, i.e. $J$ is the maximal subalgebra
in $A$ such every element from $J$ Poisson commutes with every element from $B$.

Elements of subalgebra $B$ define derivations of $A$
\begin{align}
D_b: A\rightarrow A,\qquad D_b(a)=\{b,a\}, \ \ D_b(a_1a_2)=a_1D_b(a_2)+D_b(a_1)a_2, \ \ b\in B, \ \ a,a_1,a_2\in A.
\label{eq:HamiltonFlowDef}
\end{align}
Because $B$ is Poisson commutative, these derivations commute
\begin{align*}
\qquad D_{b_1}D_{b_2}-D_{b_2}D_{b_1}=0
\end{align*}
Derivation $D_b$ is the Lie derivative along the Hamiltonian vector field on $\cM_{2n}$ generated by $b$.
We refer to elements of $B$ as \textit{Hamiltonians} of a superintegrable system (\ref{eq:SuperintegrableSystemDef}).
On the other hand, elements of subalgebra $J$ are precisely the first integrals of the Hamilton flow
generated by $b$, i.e. they are constant on flow lines of the Hamiltonian vector field generated by $b$.

Fix $b_1,\dots,b_k\in B$, a choice of independent Hamiltonians. Corresponding vector fields $D_{b_1},\dots,D_{b_k}\in T\mathcal M_{2n}$ are independent at every point $x\in\mathcal M_{2n}$ and hence form a basis on the $k$-dimensional subspace of $T_x\mathcal M_{2n}$ defined by the level set of $J$. Algebraically, these vector fields $D_{b_1},\dots,D_{b_k}\in Der_J(A)$ define derivations of $A$ relative to $J$. Balance of dimensions in (\ref{eq:PoissonProjections}) is equivalent to the fact that $Der_J(A)$ is generated as an $A$-module by Hamiltonian vector fields of the form $D_b$ for $b\in B$.

\subsection{Affine superintegrable systems} In many interesting examples superintegrable systems appear in families on affine Poisson varieties, we refer to such families as \textit{affine superintegrable systems}.

\begin{definition}
Let $\cA$ be an affine variety\footnote{More generally, one can take $\mathcal A=\Spec A$ to be an integral affine scheme.} over $\mathbb C$ with a ring of regular functions $ A=\mathcal O[\cA]$ equipped with a Poisson bracket $\{,\}:A\otimes A\rightarrow A$ of maximal rank $2n$. We say that a chain of inclusions of finitely generated Poisson subalgebras
\begin{align}
Z\subset B\subset J\subset A
\label{eq:ZBJAinclusion}
\end{align}
is an \textbf{affine superintegrable system} if
\begin{align*}
\{Z,A\}=0=\{B,J\}
\end{align*}
and
\begin{align}
\dim\mathcal B+\dim\mathcal J-2\dim\mathcal Z=2n=\dim\mathcal A-\dim\mathcal Z,
\label{eq:SuperIntDim}
\end{align}
where $\mathcal J=\Spec J,\;\mathcal B=\Spec B,\;\mathcal Z=\Spec Z$ stand for the spectrum of prime ideals of $J,B,Z$ respectively.
\label{def:SuperIntDim}
\end{definition}

Inclusions (\ref{eq:ZBJAinclusion}) are equivalent to the chain of dominant maps of affine schemes which preserve the Poisson bracket
\begin{align*}
\cA\stackrel{p_1}{\longrightarrow}\cJ \stackrel{p_2}{\longrightarrow} \cB\stackrel{p_3}{\longrightarrow}\cZ.
\end{align*}
Definition \ref{def:SuperIntDim} implies that general fibers of $p_2$ are symplectic leaves of $\mathcal J$\footnote{More precisely, $p_2^{-1}(b),\;b\in\mathcal B$ is an algebraic subset of $\mathcal J$, it contains an irreducible component of maximal dimension which is an affine Poisson variety. For a general $b$, this variety is equipped with a Poisson bracket of maximal rank equal to the dimension.}, while general fibers of the composition map $p$
\begin{equation*}
\begin{tikzcd}
\cA\arrow[r, "p_1"]\arrow[rrr,bend right=17,"p"']&\mathcal J\arrow[r,"p_2"]&\mathcal B\arrow[r,"p_3"]&\mathcal Z
\end{tikzcd}
\end{equation*}
are symplectic leaves of $\mathcal A$.

As a corollary, open subset $\mathcal M_{2n}(z)$ of irreducible component of a general fiber $p^{-1}(z),\;z\in\mathcal Z$ can be viewed as a phase space of a superintegrable system. $\mathcal M_{2n}(z)$ comes equipped with two poisson projections of the form (\ref{eq:PoissonProjections}). Thus, as we already mentioned, affine superintegrable systems should be regarded as families of superintegrable systems.

\subsection{Refinement of a superintegrable system}

Here we will introduce the notion of a {\it refinement} of a superintegrable system. It gives
a partial order on all superintegrable systems on a given Poisson algebra $A$.
A minimal object in this partial order is known as a maximally superintegrable
system. Maximal superintegrability means that all invariant tori are one dimensional \cite{SInt}.

Let
\begin{equation}\label{2sys}
    Z\subset B_1\subset J_1\subset A,\qquad Z\subset B_2\subset J_2\subset A
\end{equation}
be two superintegrable systems on a Poisson algebra $A$ with the Poisson center $Z$.

\begin{definition}\label{refnmnt}
The first superintegrable system is a refinement of the second if the following
chain of Poisson inclusions hold:
\[
Z\subset B_1\subset B_2\subset J_2\subset J_1\subset A
\]
\end{definition}

Clearly a refinement defines a partial order on the set of integrable systems.

Another important relation is an equivalence of integrable systems

\begin{definition}
Two integrable systems (\ref{2sys}) are {\it equivalent} if there is a Poisson automorphism $\varphi:A\to A$
such that $J_2=\varphi(J_1)$ and $B_1=\varphi(B_2)$.
\end{definition}

\section{Moduli Spaces of Flat Connections}
\label{sec:ModuliSpaceOfFlatConnections}

\subsection{Character Variety of the Fundamental Group}

Let $\Sigma_{g,b}$ be an oriented surface of genus $g$ with $b$ boundary components. We will consider only
surfaces with $b>0$. The fundamental group $\pi_1(\Sigma,p)$ with the base point $p$ of such surface is generated by elements $x_1,\cdots, x_g, y_1,\cdots, y_g,z_1,\cdots z_{b}$ with one defining relation:
\begin{align*}
\pi_1(\Sigma_{g,b})=\left\langle x_1,\dots,x_g,y_1,\dots,y_g,z_1,\dots,z_b\,\big|\,[x_1,y_1]\dots[x_g,y_g]z_1\dots z_b=\mathrm{Id}_{p} \right\rangle.
\end{align*}
Here $x_i$ and $y_i$ are fundamental cycles for $\Sigma$ and $z_i$ is the cycle for $i$-th boundary component.\footnote{Throughout the text we always assume that the composition of paths is read from right to left. For example, $[x_1,y_1]z_1z_2$ corresponds to the contractible path along the boundary of a disc in $\Sigma_{1,2}$ on Figure \ref{fig:ChoiceOfArcsG1B2}.}
\begin{figure}
\begin{tikzpicture}[scale=1.3]
  \draw[thick, -->-] (1.5,0.2) to[out=155,in=0] (0,0.9) to[out=180,in=90] (-1,0.1) to[out=270,in=180] (0,-0.7) to[out=0,in=225] (1.5,0.2);
  \draw[thick,--<-] (1.5,0.2) to[out=100,in=-20] (0.9,0.95) to[out=160,in=220,looseness=0.6] (0,1.5);
  \draw[thick,dashed] (0,1.5) to[out=0,in=0,looseness=0.5] (0,0.2);
  \draw[thick] (0,0.2) to[out=180,in=200] (0.2,0.4) to[out=20,in=180] (1.5,0.2);
  \draw[thick,->-] (1.5,0.2) to[out=260,in=90] (1.3,-0.5) to[out=270,in=180] (1.6,-0.8) to[out=0,in=270] (1.9,-0.5) to[out=90,in=280] (1.5,0.2);
  \draw[thick,->-] (1.5,0.2) to[out=300,in=90] (2.2,-0.5) to[out=270,in=180] (2.5,-0.8) to[out=0,in=270] (2.8,-0.5) to[out=90,in=320] (1.5,0.2);
  \fill (1.5,0.2) circle (0.06);
  \fill[gray] (1.5,0.2) circle (0.04);
  \draw[ultra thick] (0,1.5) to[out=0,in=180] (2,0.8);
  \draw[ultra thick] (0,1.5) arc (90:270:1.5);
  \draw[ultra thick] (0,-1.5) to[out=0,in=180] (2,-1) to[out=0,in=270] (3.7,-0.2) to[out=90,in=0] (2,0.8);
  \draw[ultra thick] (-0.5,0) to[out=50,in=130] (0.5,0);
  \draw[ultra thick] (-0.7,0.2) to[out=-60,in=240] (0.7,0.2);
  \fill [pattern= north east lines] (1.6,-0.5) circle (0.15);
  \draw [ultra thick] (1.6,-0.5) circle (0.15);
  \fill [pattern= north east lines] (2.5,-0.5) circle (0.15);
  \draw [ultra thick] (2.5,-0.5) circle (0.15);
  \draw (0.85,1.65) node[left] {$y_1$};
  \draw (-0.2,-0.9) node[below] {$x_1$};
  \draw (1.6,-1.2) node[below] {$z_2$};
  \draw (2.7,-1.2) node[below] {$z_1$};
  \draw (1.6,0.35) node[above] {$p$};
\end{tikzpicture}
\caption{One choice of arcs corresponding to generators of $\pi_1(\Sigma_{1,2},p)$.}
\label{fig:ChoiceOfArcsG1B2}
\end{figure}
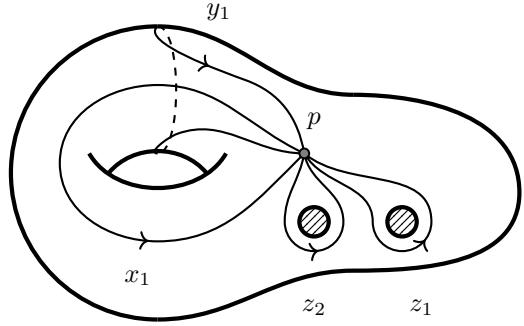

The \textit{character variety} of $\pi_1(\Sigma_{g,b})$ in a simple complex linear algebraic group $G$ is the categorical quotient of the space of group homomorphisms from $\pi_1(\Sigma_{g,b})$ to $G$ with respect to conjugations by $G$:\footnote{Note that despite its name, ${\mathcal M}_\Sigma^G$ is not a variety but rather an affine scheme. Throughout the text we will always think of $\mathcal M_\Sigma^G$ in terms of its coordinate ring, which is precisely the $G$-invariant subring of the coordinate ring of representation variety.}
\[
\cM^G_\Sigma=(\pi_1(\Sigma_{g,b})\to G)\git G\quad:=\quad\Spec\left(\mathcal O[\mathrm{Hom}(\pi_1(\Sigma_{g,b}),G)]^G\right)
\]
For $b>0$ we have an obvious isomorphism:
\[
\cM^G_\Sigma\simeq G^{\times (2g+b-1)}\git G
\]

We can choose this isomorphism as the mapping which assigns
to a group homomorphism $\rho: \pi_1(\Sigma_{g,b})\to G$ an element
$(\rho(x_1),\dots, \rho(x_g), \rho(y_1), \dots, \rho(y_g), \rho(z_1),\dots, \rho(z_{b-1}))\in G^{\times (2g+b-1)}$ and then projects it to the conjugacy class in $G^{\times (2g+b-1)}\git G$.

\subsection{Moduli space of surface graph connections}\label{MGF}
Denote by $V(\Gamma)$ and $E(\Gamma)$ the set of vertices and edges of an oriented graph $\Gamma$. Graph connection is an assignment of a parallel transport $g_e\in G$ to each oriented edge $e\in E(\Gamma)$.

The gauge group $G(V)\simeq G^{V(\Gamma)}=\{ V(\Gamma)\to G, \ \ v\mapsto h_v\}$
acts on a graph connection $g$ as
\[
g_e\mapsto h_{t(e)} g_e h_{s(e)}^{-1}.
\]
where $t(e)$ is the target vertex for an oriented edge $e$ and $s(e)$ is its source vertex.

Assume that  $\Gamma\subset\Sigma$ is an oriented embedded graph such that $\Sigma\backslash\Gamma$ is a disjoint collection of disks and $b$ annuli corresponding to the boundary components of $\Sigma=\Sigma_{g,b}$. We will call such graphs {\it simple}. Hereinafter we allow graphs to have multiple edges and loops.

Let $F(\Gamma)$ be the set of contractible faces, i.e. the set of disks in $\Sigma_{g,b}\backslash\Gamma$.
For a disc $D$ with boundary $\pa D\subset \Gamma$ define
\[
g_{\partial D}=g_{e_k}^{\epsilon_k}\dots g_{e_1}^{\epsilon_k}
\]
as the holonomy of graph connection along the boundary of a disk. This holonomy is the product of elements $g_{e_j}^{\epsilon_j}\in G$ associated to each edge $e_j\in\partial D$ in a cyclic order induced by the orientation of the surface\footnote{The total order is irrelevant when we pass to the moduli space of flat connections.}.
The relative orientation $\epsilon_j$ is $+1$ when the
orientations of $e_j$ and $\partial D$ coincide and $-1$ otherwise.

A graph connection over $\Gamma\subset \Sigma$ is called {\it flat} if $g_{\partial D}=1$ for all $D\in F(\Gamma)$. Define the space of {\it flat graph $G$-connections} $\cA_{(\Sigma, \Gamma)}$ on $\Gamma\subset \Sigma$ as the space of such flat connections.

When $G$ is a linear algebraic group, the space of flat graph connections is an algebraic subset of $G^{|E(\Gamma))|}$ equipped with a regular action of $G(V)$. The {\it moduli space of flat graph connections}  is the categorical quotient
\begin{align*}
\mathcal M_{(\Sigma,\Gamma)}^G=\mathcal A_{(\Sigma,\Gamma)}\git G(V)\quad:=\quad\Spec\left(\mathcal O[\mathcal A_{(\Sigma,\Gamma)}]^{G(V)}\right).
\end{align*}

Let $e\in E(\Gamma)$ be an edge which connects two distinct vertices, denote as $\Gamma_e$ an embedded graph obtained from $\Gamma$ by contracting an edge $e$. Similarly, for any contractible face $D\in F(\Gamma)$ which has an edge $e\in\partial D$ that appears only once in $\partial D$ we can define an ebedded graph $\Gamma^e=\Gamma\backslash e$. Recall that $\Gamma$ was simple,
i.e. its complement in $\Sigma$ is the union of disks and annuli. It is clear that $(\Sigma\backslash\Gamma^e)$ is also simple. The flatness condition $g_{\partial D}=1$ guarantees the isomorphism of schemes.

\begin{theorem} The following schemes are isomorphic:

\begin{equation}\label{moduli-isom}
\cM^G_{(\Sigma, \Gamma)}\simeq \cM^G_{(\Sigma, \Gamma_e)}\simeq \cM^G_{(\Sigma, \Gamma^e)}
\end{equation}

\end{theorem}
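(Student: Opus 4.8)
The plan is to establish the two isomorphisms in \eqref{moduli-isom} separately, in each case producing an explicit gauge-equivariant identification of the spaces of flat graph connections \emph{before} passing to the categorical quotient; since $\mathcal M^G_{(\Sigma,\bullet)}$ is by definition $\Spec\big(\mathcal O[\mathcal A_{(\Sigma,\bullet)}]^{G(V)}\big)$, any such identification descends to an isomorphism of the moduli schemes. For the contraction the identification is a free quotient by a direct factor of the gauge group, exhibited by an explicit global slice; for the deletion it is an honest isomorphism of the ambient affine schemes cut out by flatness, obtained by solving one flatness relation for the transport $g_e$.

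\textbf{Edge contraction $\Gamma\rightsquigarrow\Gamma_e$.} Write $v_0=s(e)$, $v_1=t(e)$ for the two (distinct) endpoints of $e$, and regard $V(\Gamma_e)=V(\Gamma)\setminus\{v_1\}$ with $v_0$ playing the role of the merged vertex, so that as groups $G(V(\Gamma))=G_{v_1}\times G(V(\Gamma_e))$ with $G_{v_1}$ a normal (direct) factor. The factor $G_{v_1}$ acts on $\mathcal A_{(\Sigma,\Gamma)}$ by $g_e\mapsto h_{v_1}g_e$ (and by left/right translations on the other edges incident to $v_1$); this action is free, and the closed locus $\{g_e=\mathrm{Id}\}$ is a global slice, every $G_{v_1}$-orbit meeting it in the single point obtained by $h_{v_1}=g_e^{-1}$. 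Hence $\mathcal A_{(\Sigma,\Gamma)}\simeq G_{v_1}\times\{g_e=\mathrm{Id}\}$, the quotient $\mathcal A_{(\Sigma,\Gamma)}/G_{v_1}$ is geometric and equals the slice, and $\mathcal O[\mathcal A_{(\Sigma,\Gamma)}]^{G_{v_1}}=\mathcal O[\{g_e=\mathrm{Id}\}]$. On the slice the holonomy $g_{\partial D}$ of any contractible face is unchanged after deleting the now-trivial letter $g_e^{\pm1}$ from its boundary word, which is exactly the boundary word of the corresponding face of $\Gamma_e$ (recall $F(\Gamma_e)=F(\Gamma)$ and that $\Gamma_e$ is again simple); moreover the induced $G(V(\Gamma))/G_{v_1}\simeq G(V(\Gamma_e))$-action on the slice — concretely, the restriction of the original action to the diagonal $h_{v_1}=h_{v_0}$ — is precisely the gauge action on $\mathcal A_{(\Sigma,\Gamma_e)}$. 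Taking the remaining invariants gives
\[
\mathcal O[\mathcal A_{(\Sigma,\Gamma)}]^{G(V(\Gamma))}=\big(\mathcal O[\mathcal A_{(\Sigma,\Gamma)}]^{G_{v_1}}\big)^{G(V(\Gamma_e))}\simeq\mathcal O[\mathcal A_{(\Sigma,\Gamma_e)}]^{G(V(\Gamma_e))},
\]
that is, $\mathcal M^G_{(\Sigma,\Gamma)}\simeq\mathcal M^G_{(\Sigma,\Gamma_e)}$.

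\textbf{Edge deletion $\Gamma\rightsquigarrow\Gamma^e$.} Let $D\in F(\Gamma)$ be a contractible face in which $e$ occurs exactly once, and cyclically rotate the boundary word so that the flatness relation attached to $D$ reads $g_e^{\epsilon}\,w=\mathrm{Id}$, where $w$ is the product of the transports of the remaining edges of $\partial D$ and $\epsilon=\pm1$ is the relative orientation. Because $e$ occurs only once, this relation determines $g_e$ uniquely and regularly by $g_e=w^{-\epsilon}$; in particular $e$ is not a pendant edge, so $V(\Gamma^e)=V(\Gamma)$ and the gauge groups coincide. The forgetful morphism $G^{E(\Gamma)}\to G^{E(\Gamma^e)}$ therefore restricts to a $G(V(\Gamma))$-equivariant morphism $\mathcal A_{(\Sigma,\Gamma)}\to\mathcal A_{(\Sigma,\Gamma^e)}$ with regular inverse $g_e=w^{-\epsilon}$: it is well defined because deleting $e$ merges $D$ with the face (or boundary annulus) on the other side of $e$, and the flatness equation of the merged region is, up to conjugation, the product of those of $D$ and its neighbour, hence implied by the $\Gamma$-flatness equations; conversely, once $g_e$ is reconstituted from $g_{\partial D}=\mathrm{Id}$ the neighbour's equation is recovered. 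Thus $\mathcal A_{(\Sigma,\Gamma)}\simeq\mathcal A_{(\Sigma,\Gamma^e)}$ equivariantly, and passing to $G(V)$-invariants yields $\mathcal M^G_{(\Sigma,\Gamma)}\simeq\mathcal M^G_{(\Sigma,\Gamma^e)}$.

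\textbf{Main obstacle.} The only genuine subtlety is that the categorical (GIT) quotient need not a priori be geometric, so ``gauge-fixing $g_e=\mathrm{Id}$'' must be justified as a statement about invariant rings rather than about set-theoretic orbit spaces; freeness of the $G_{v_1}$-action together with the explicit global slice is exactly what makes this rigorous, after which transitivity of taking invariants for the direct-product group $G_{v_1}\times G(V(\Gamma_e))$ closes the contraction case. For the deletion the one nontrivial point is the bookkeeping of which face or boundary annulus lies across $e$ and the resulting matching of the flatness equations — routine, but it is where the hypotheses that $e$ appears once in $\partial D$ and that $\Gamma$ is simple are actually used. Finally, one should remark that both isomorphisms are built solely from the gauge action and the flatness relations, so they automatically intertwine the Atiyah--Bott Poisson structures, which is what is needed for the applications in later sections.
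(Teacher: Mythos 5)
Your proof of the second isomorphism (deletion of $e$) is essentially the paper's own argument: resolve the single flatness relation $g_{\partial D}=1$ for $g_e$, obtain a $G(V)$-equivariant isomorphism of the spaces of flat graph connections, and pass to invariants; you simply spell out the face-merging bookkeeping that the paper leaves implicit. For the first isomorphism (contraction of $e$) you take a genuinely different, and in my view tighter, route. The paper works with the contraction functor $\pi_{e_0}\colon\Pi(\Gamma)\to\Pi(\Gamma_{e_0})$ of fundamental groupoids, chooses a section $\phi$, obtains a projection and section between the connection spaces and their quotients, and then concludes these are isomorphisms ``by comparing dimensions,'' finally remarking that $[f]$ is independent of $\phi$. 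You instead split off the vertex factor $G_{v_1}$ of the gauge group, observe that it acts freely with the explicit global slice $\{g_e=\mathrm{Id}\}$ (your slice is, in effect, a concrete choice of the paper's section $\phi$), identify $\mathcal O[\mathcal A_{(\Sigma,\Gamma)}]^{G_{v_1}}$ with the coordinate ring of the slice, match the residual twisted $G(V(\Gamma_e))$-action with the gauge action for $\Gamma_e$, and finish by taking invariants in stages for the direct product. What your approach buys is that the isomorphism of invariant rings is exhibited directly, without appealing to a dimension count (which for a dominant map of affine schemes does not by itself give an isomorphism and is the weakest step of the paper's argument), and the independence of the choice of slice/section becomes automatic. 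What the paper's approach buys is brevity and a functorial groupoid-level picture that generalizes to contracting several edges at once. Both arguments are correct; yours is the more careful one at the level of schemes.
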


\begin{proof} Resolving one of the relations $g_{\partial D}=1$ we obtain an isomorphism of $G(V)$-modules $\mathcal O[\mathcal A_{(\Sigma,\Gamma)}]\simeq\mathcal O[\mathcal A_{(\Sigma,\Gamma^e)}]$. Taking the $G(V)$-invariant part we conclude
\begin{align*}
    \mathcal O[\mathcal A_{(\Sigma,\Gamma)}]^{G(V)}\simeq\mathcal O[\mathcal A_{(\Sigma,\Gamma^e)}]^{G(V)}.
\end{align*}
and therefore $\cM^G_{(\Sigma, \Gamma)}\simeq\cM^G_{(\Sigma, \Gamma^e)}$.

Now let us prove the first isomorphism in (\ref{moduli-isom}).
Let $e_0$ be an edge of $\Gamma$ with adjacent vertices $v=s(e_0)$ to $w=t(e_0)$. Here $s(e),t(e)$ stand for the source and target of $e$ respectively. Contracting of $e_0$ maps vertices $v$ and $w$ to one vertex $w_0$ of
the new graph $\Gamma_{e_0}$ which is the contraction of $\Gamma$. Denote this mapping by $\pi_{e_0}$.

The contraction of $\Gamma$ to $\Gamma_{e_0}$ defines a functor between the fundamental grouppoids
which we will denote by the same letter $\pi_{e_0}: \Pi(\Gamma)\to \Pi(\Gamma_{e_0})$. It is clear that
this is a projection. Choose a section $\phi: \Pi(\Gamma_{e_0})\to \Pi(\Gamma)$ of $\pi_{e_0}$, so that $\pi_{e_0}\phi=id$. It is
clear that $\phi$ does not have to be unique, but it always exists.
This is illustrated on Figure \ref{fig:ContractionEdgeGraphConnection}.
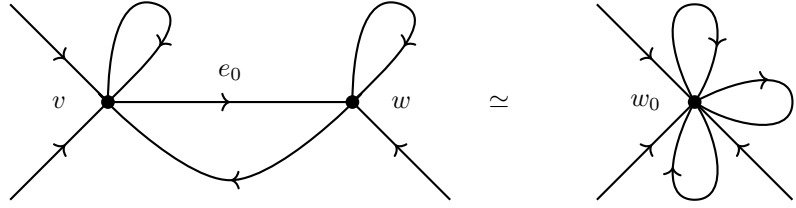
\begin{figure}
\begin{tikzpicture}[scale=1.3]
\draw[thick,-<-] (0,0) to (-1,1);
\draw[thick,-<-] (0,0) to (-1,-1);
\draw[thick,-->-] (0,0) to[out=90,in=160] (0.5,1) to[out=-20,in=45] (0,0);
\draw[thick,->-] (0,0) to (2.5,0);
\draw[thick,->-] (2.5,0) to[in=0] (1.25,-0.8) to[out=180] (0,0);
\draw[thick,-->-] (2.5,0) to[out=90,in=160] (3,1) to[out=-20,in=45] (2.5,0);
\draw[thick,-<-] (2.5,0) to (3.5,-1);
\fill (0,0) circle (0.07);
\fill (2.5,0) circle (0.07);
\draw (-0.5,0) node {$v$};
\draw (3,0) node {$w$};
\draw (1.25,0.3) node {$e_0$};
\draw (4,0) node {$\simeq$};
\draw[thick,-<-] (6,0) to (5,1);
\draw[thick,-<-] (6,0) to (5,-1);
\draw[thick,-<-] (6,0) to (7,-1);
\draw[thick,-->-] (6,0) to[out=-55,in=0] (6,-1) to[out=180,in=-120] (6,0);
\draw[thick,-<--] (6,0) to[out=55,in=0] (6,1) to[out=180,in=120] (6,0);
\draw[thick,--<-] (6,0) to[out=55-90,in=0-90] (7,0) to[out=180-90,in=120-90] (6,0);
\fill (6,0) circle (0.07);
\draw (5.5,0) node {$w_0$};
\end{tikzpicture}
\caption{Contraction of an edge}
\label{fig:ContractionEdgeGraphConnection}
\end{figure}

The functors $\pi_{e_0}$ and $\phi$ define the projection $p_{e_0}: \cA_{(\Sigma, \Gamma)}\to \cA_{(\Sigma, \Gamma_{e_0})}$ and its section $f: \cA_{(\Sigma, \Gamma_{e_0})}\to \cA_{(\Sigma, \Gamma)}$.
Taking quotients with respect to the gauge group we obtain the projection $[p_{e_0}]: \cM_{(\Sigma, \Gamma)}\to \cM_{(\Sigma, \Gamma_{e_0})}$ and its section $[f]: \cM_{(\Sigma, \Gamma_{e_0})}\to \cM_{(\Sigma, \Gamma)}$.
Comparing dimensions we conclude that these mappings are isomorphisms. The section $f$ depends on the choice of
$\phi$. It is easy to see that the corresponding mapping $[f]$ between moduli spaces does not.

\end{proof}

\begin{corollary} For any oriented embedded graph $\Gamma\in\Sigma_{g,b}$ such that $\Sigma_{g,b}\backslash\Gamma$ is a disjoint union of discs and $b$ annuli, the moduli space of graph connections is isomorphic to the $G$-character variety
    \begin{align*}
        \mathcal M_{\Sigma}^G\simeq\mathcal M_{(\Sigma,\Gamma)}^G.
    \end{align*}
\end{corollary}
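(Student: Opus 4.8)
The plan is to reduce the statement to the theorem just proved by exhibiting, for any simple graph $\Gamma\subset\Sigma_{g,b}$, a finite chain of edge contractions and edge deletions that transforms $\Gamma$ into one of the standard graphs whose moduli space is already identified with $\mathcal M_\Sigma^G$. Concretely, first I would recall that the wedge of $2g+b-1$ loops based at a single vertex $p$ — with loops labelled $x_1,\dots,x_g,y_1,\dots,y_g,z_1,\dots,z_{b-1}$ arranged so that the single complementary disc has boundary word $[x_1,y_1]\cdots[x_g,y_g]z_1\cdots z_b$ (with $z_b$ read off from the others) — is a simple graph $\Gamma_0$, and that $\mathcal M^G_{(\Sigma,\Gamma_0)}\simeq G^{\times(2g+b-1)}\git G\simeq\mathcal M_\Sigma^G$ directly from the definition of the character variety in the previous subsection: a flat graph connection on $\Gamma_0$ is an arbitrary assignment of group elements to the $2g+b-1$ loops (the single flatness relation $g_{\partial D}=1$ is exactly the defining relation of $\pi_1(\Sigma_{g,b})$, and it already holds automatically once $g_{z_b}$ is defined as the required word), and the residual gauge group is $G(V)=G$ acting by simultaneous conjugation.

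Next I would argue that an arbitrary simple $\Gamma$ can be connected to $\Gamma_0$ (or to \emph{some} bouquet of this type) by moves of the two kinds covered by the theorem. If $\Gamma$ has more than one vertex, it has an edge $e$ joining two distinct vertices; contracting $e$ strictly decreases $|V(\Gamma)|$ while keeping the complement a union of discs and $b$ annuli, so after finitely many contractions we may assume $|V(\Gamma)|=1$. Now $\Gamma$ is a bouquet of loops at $p$; its first Betti number is $|E(\Gamma)|-1+|F(\Gamma)|$... more precisely $\chi(\Sigma_{g,b})=2-2g-b=1-|E(\Gamma)|+|F(\Gamma)|$ gives $|E(\Gamma)|=2g+b-1+|F(\Gamma)|$, so if $|F(\Gamma)|>1$ there is a contractible face $D$ with an edge $e$ appearing only once along $\partial D$ — such an $e$ exists because a face all of whose edges occur twice would force $\Sigma\setminus\Gamma$ to miss other faces — and deleting $e$ merges $D$ with the adjacent region while keeping the graph simple and decreasing $|F(\Gamma)|$. (This is the step I expect to require the most care: one must check that at least one edge on the boundary of a non-unique contractible face is genuinely non-repeated, using that the union of the contractible faces together with $\Gamma$ is all of $\Sigma$ minus the $b$ annuli, hence connected in the appropriate sense.) Iterating, we reach $|F(\Gamma)|=1$ and $|E(\Gamma)|=2g+b-1$, i.e. a bouquet with a single complementary disc, which by the classification of surfaces realized by a fixed word length is isotopic to $\Gamma_0$, and graph connections depend only on the combinatorial type.

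Finally I would assemble the chain: each move in the reduction is an isomorphism of moduli schemes by \eqref{moduli-isom}, and composing them gives
\[
\mathcal M^G_{(\Sigma,\Gamma)}\;\simeq\;\mathcal M^G_{(\Sigma,\Gamma_0)}\;\simeq\;\mathcal M^G_\Sigma,
\]
where the last isomorphism is the identification recalled in the first paragraph. Since the composite isomorphism of the theorem is canonical (the maps $[f]$ between moduli spaces are independent of the auxiliary section $\phi$, as noted in the proof above, and deletion isomorphisms are canonical), the resulting identification $\mathcal M^G_\Sigma\simeq\mathcal M^G_{(\Sigma,\Gamma)}$ is independent of the chosen reduction path. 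The only genuine obstacle is the combinatorial lemma guaranteeing that the contraction/deletion moves can always be performed until the standard form is reached; everything else is bookkeeping on top of the theorem.
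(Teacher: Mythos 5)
Your argument is correct and is exactly the intended one: the paper states this corollary without proof as an immediate consequence of the contraction/deletion theorem, and the reduction you describe (contract edges until one vertex remains, then delete edges of contractible faces until none remain) is the standard route to a normal form, while the combinatorial lemma you rightly flag does hold — a disc face all of whose boundary edges appear twice in its own boundary walk would make its closure clopen in $\Sigma$, forcing $\Sigma$ to be closed and contradicting $b>0$, so every disc face admits a deletable edge. The only blemish is an off-by-one at the end: by your own Euler-characteristic count $|F(\Gamma)|=1$ forces $|E(\Gamma)|=2g+b$, and the cleanest terminal object is the spine with $|F(\Gamma)|=0$ and $|E(\Gamma)|=2g+b-1$ (complement equal to the $b$ boundary annuli, hence no flatness relations), for which $\mathcal M^G_{(\Sigma,\Gamma_0)}\simeq G^{\times(2g+b-1)}\git G\simeq\mathcal M^G_\Sigma$ holds tautologically, making the appeal to an isotopy classification of one-vertex graphs unnecessary.
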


\begin{corollary}
The dimensions of moduli spaces of graph connections on surface $\Sigma_{g,b}$ with $b>0$ are:
\begin{align}
\dim\left(\cM^G_{(\Sigma, \Gamma)}\right)=\left\{\begin{array}{cc}
0&g=0,\;b=1,\\
rk(G)&g=0,\;b=2,\\
(2g+b-2)\dim G&\textrm{otherwise}.
\end{array}\right.
\label{eq:dimrepscheme}
\end{align}
\end{corollary}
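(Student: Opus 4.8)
The plan is to reduce the assertion to the description $\cM^G_{(\Sigma,\Gamma)}\simeq\cM^G_\Sigma\simeq G^{\times(2g+b-1)}\git G$ established above, and then to read off the dimension of that categorical quotient from the dimension of a generic orbit of the diagonal conjugation action of $G$ on $N:=G^{\times(2g+b-1)}$. Write $n:=2g+b-1$; since $b>0$ the three lines of the formula correspond exactly to $n=0$, $n=1$, and $n\geq 2$.

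I would first recall the standard fact that for a reductive group $G$ acting on an irreducible affine variety $X$ one has $\dim\bigl(X\git G\bigr)=\dim X-d_{\max}$, where $d_{\max}$ is the maximal dimension of a $G$-orbit in $X$, and the locus of orbits of that dimension is dense and open. As $N$ is a product of copies of the connected group $G$, it is irreducible, so the whole computation comes down to finding $d_{\max}$ for the conjugation action, i.e. $d_{\max}=\dim G-\dim G_x$ with $G_x=\bigcap_i Z_G(g_i)=Z_G\bigl(\langle g_1,\dots,g_n\rangle\bigr)$ for $x=(g_1,\dots,g_n)$ in general position.

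The cases $n=0$ and $n=1$ are immediate: when $n=0$ the variety $N$ is a point, so $\dim(N\git G)=0$; when $n=1$ a generic element $g_1\in G$ is regular semisimple, so $G_x=Z_G(g_1)$ is a maximal torus, $\dim G_x=\mathrm{rk}(G)$, and $\dim(N\git G)=\dim G-(\dim G-\mathrm{rk}(G))=\mathrm{rk}(G)$. For $n\geq 2$ the claim is that $G_x$ is finite for $x$ in general position; granting this, and using that $G$ simple forces $Z(G)$ to be finite, we get $d_{\max}=\dim G$ and hence $\dim(N\git G)=n\dim G-\dim G=(2g+b-2)\dim G$, which is the stated value. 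So the one substantive step is the finiteness of the generic stabilizer for $n\geq 2$.

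For that step it is enough to treat $n=2$, since enlarging the tuple only shrinks the joint centralizer, and it is enough to exhibit a single pair $(g_1,g_2)$ with $Z_G(g_1,g_2)$ finite: indeed $\dim Z_G(g_1,g_2)=\dim\bigl(\mathfrak g^{g_1}\cap\mathfrak g^{g_2}\bigr)$ depends upper-semicontinuously on $(g_1,g_2)$, so the locus where it vanishes is Zariski open, and nonempty implies dense. The cleanest way to produce such a pair is to invoke the classical fact that a generic pair of elements of a semisimple group over $\CC$ generates a Zariski-dense subgroup, whose centralizer is then $Z(G)$; alternatively one can argue by hand, taking $g_1$ regular semisimple with centralizer a fixed maximal torus $T$ and $g_2$ regular semisimple with centralizer a maximal torus $T'$ in sufficiently general position, and checking that $T\cap T'$ is finite --- if it contained a nontrivial subtorus $S$ then $Z_G(S)$ would be a proper connected reductive subgroup (proper since $Z(G)$ is finite) containing both $T$ and $T'$, which cannot happen for $T'$ generic because the maximal tori lying in a fixed proper reductive subgroup form a lower-dimensional family and there are only countably many candidate subtori $S$ while $\CC$ is uncountable. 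I expect this transversality point to be the only real obstacle; everything else is bookkeeping with the quotient-dimension formula and with centralizers of regular semisimple elements.
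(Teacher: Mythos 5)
Your reduction to the quotient $G^{\times(2g+b-1)}\git G$ and the case split $n=0,1,\ge 2$ (for $n=2g+b-1$) is exactly what the paper intends: the corollary is stated there without proof, as an immediate consequence of $\cM^G_{(\Sigma,\Gamma)}\simeq\cM^G_\Sigma\simeq G^{\times(2g+b-1)}\git G$ together with an orbit-dimension count. Your treatment of the generic stabilizer is sound: regular semisimple elements for $n=1$; for $n\ge 2$, finiteness of $Z_G(g_1,g_2)$ for one suitable pair, propagated by upper semicontinuity of $\dim(\mathfrak g^{g_1}\cap\mathfrak g^{g_2})$. The transversality argument also goes through: a positive-dimensional $T\cap T'$ would contain a nontrivial subtorus $S\subset T$, there are only countably many such $S$, each $Z_G(S)$ is a proper connected reductive subgroup because $Z(G)$ is finite, and the maximal tori it contains form a family of dimension $\dim Z_G(S)-\mathrm{rk}(G)<\dim G-\mathrm{rk}(G)$, so a very general $T'$ avoids all of them.

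The one step I would not let stand as written is the opening ``standard fact'' that $\dim(X\git G)=\dim X-d_{\max}$ for any reductive action on an irreducible affine variety. That statement is false in general: for $\CC^\times$ acting on $\CC^2$ by scaling, the generic orbit has dimension $1$ but the only invariants are constants, so the quotient is a point. What holds unconditionally is the inequality $\dim(X\git G)\le\dim X-d_{\max}$; equality requires that the generic fiber of $X\to X\git G$ be a single orbit, i.e.\ that generic orbits be closed. In your situation this is available from the very genericity facts you already invoke: for $n=1$ a regular semisimple conjugacy class is closed, and for $n\ge 2$ a tuple generating a Zariski-dense subgroup lies in no proper parabolic subgroup, hence has closed orbit under diagonal conjugation (Hilbert--Mumford/Kempf, or Richardson's criterion). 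Adding that one sentence closes the gap; the rest is bookkeeping, as you say.
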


\subsection{Colored Immersed Graphs}
Let $\Gamma$ be an oriented graph.
\begin{definition}
An edge coloring of $\Gamma$ is an assignment of a finite-dimensional $G$-representation $V_\epsilon$ to each edge $e\in\Gamma$ of the oriented graph. We will denote edge colored graph as $(\Gamma, V)$.
\end{definition}

We say that an oriented graph $\Gamma$ is \textit{immersed} in a surface $\Sigma_{g,b}$ if the inclusion mapping $I:\Gamma\rightarrow\Sigma_{g,b}$ is locally an embedding\footnote{Immersed graphs may intersect on a surface and intersections does not have to be transversal.}. If the intersection of $\Gamma$ on $\Sigma_{g,b}$  are transversal, we will call it {\it transversally immersed}.

Orientation of the surface induces a cyclic order on edges adjacent to a given vertex. For each vertex $v\in V(\Gamma)\subset \Sigma$ let us choose a total ordering $e_1,\dots, e_n$ of the adjacent edges which agrees with the cyclic ordering induced by the orientation of $\Sigma$. We will refer to this as {\it edge ordering} of $\Gamma\subset \Sigma$. For a finite dimensional vector space $V$
define $V^+=V$ and $V^-=V^*$. For each vertex of edge ordered graph define the space
\begin{equation}\label{V-space}
V(v)=V_{e_1}^{\epsilon_1}\otimes \dots \otimes V_{e_n}^{\epsilon_n}
\end{equation}
Here $\epsilon_i=+1$ is oriented outward from $v$, i.e. $v=s(e_i)$ and $\epsilon=-1$ otherwise,
i.e. when $v=t(e)$\footnote{Note that because the category of $G$-modules is symmetric, there exists a canonical isomorphism between $V(v)$ for different choices of edge ordering. Still, it will be convenient for us to assume that we have chosen a total order of edges adjoint to the given vertex.}.

\begin{definition}
A vertex coloring of $(\Gamma,V)$ is an assignment of a vector $c_v\in V(v)^G$ to each vertex $v$ of an edge ordered oriented graph $\Gamma$.
\label{def:VertexColoring}
\end{definition}
Here $V(v)^G$ is the space of $G$-invariant vectors in a representation $V(v)$.

We define \textit{colored graph} $\mathbf\Gamma=(\Gamma,V,c)$ as an edge ordered oriented immersed graph $\Gamma$ equipped with an edge coloring $V$ and a vertex coloring $c$.

\subsection{Invariant Functions on Connections associated to Immersed Graphs}

\subsubsection{} For each colored graph $(\Gamma,V,c)$ embedded into $\Sigma$ one can associate a function $F_{\Gamma,V,c}$ on  connections on the principal $G$-bindle over $\Sigma_{g,b}$ [reference] as follows.

For a graph $\Gamma$ with an edge coloring $V$ define the following $G$-module
\begin{align}
{\mathbf V}(\Gamma)=\bigotimes_{v\in V(\Gamma)} V(v)
\label{eq:VSpaceDef}
\end{align}
Here we assume a choice of a linear ordering on vertices. The space $V(v)$ are defined in (\ref{V-space}).
Note that $V(v)$ can also be written as
\[
V(v)=\bigotimes_{e\in S(v)} V(e)^{\epsilon(e,v)}
\]
Here $S(v)$ is the star of vertex $v$\footnote{In our case this is the set of edges adjacent to $v$.} and $\epsilon(e,v)=+$ if $e$ is outgoing and $\epsilon(e,v)=-$
if it is incoming.

By changing the order of factors in tensor product we obtain a natural isomorphism of $G$-modules
\begin{equation}\label{reenum-V}
{\mathbf V}(\Gamma)\simeq \bigotimes_{e\in E(\Gamma)} (V_e\otimes V_e^*)
\end{equation}

The space ${\mathbf V}(\Gamma)$ comes equipped with a symmetric bilinear form $\langle,\rangle$ defined on pure tensors $\mathbf u,\mathbf w\in\mathbf V$ by the following rule
\begin{equation}
\langle \mathbf u,\mathbf w\rangle=\prod_{e\in E(\Gamma)}u_{e,t(e)}(w_{e,s(e)})\,w_{e,t(e)}(u_{e,s(e)}),\label{eq:BilinearFormOnV}
\end{equation}
where
\[
\mathbf u=\bigotimes_{v\in V(\Gamma)} \left(\bigotimes_{e\in S(v)} u_{e,v} \right),
\]
\[
\mathbf w=\bigotimes_{v\in V(\Gamma)} \left(\bigotimes_{e\in S(v)} w_{e,v)}\right).
\]
Here $u_{e,s(e)}, w_{e,s(e)}\in V_e$ and $u_{e,t(e)}, w_{e,t(e)}\in V_e^*$

\subsubsection{} It is clear that a principal $G$-bundle $E_\Sigma$  over $\Sigma$ defines a principal $G$-bundle $E_\Gamma$
over vertices of $\Gamma\subset \Sigma$. A connection $A$ on $E_\Sigma$ defines a graph connection
on $E_\Gamma$ with the parallel transport along the edge $e$ being the parallel transport along $e\subset \Sigma$ with respect to connection $A$. The evaluation of the parallel transport along $e$ in the representation $V_e$
we will denote by $P^A_e: V_e\to V_e$.

Define the vector $\mathbf \pi^A\in\mathbf V$ in the $G$-module (\ref{eq:VSpaceDef})
as
\begin{align}
\pi^A=\bigotimes_{e\in E(\Gamma)} P^A_e
\label{eq:ParallelTransortVector}
\end{align}
Here we assume a choice of total ordering on the set of edges and the identification with the
tensor product with the reordered tensor product as in (\ref{reenum-V}). The holonomy map $P_e$ we regard as a vector in $V_e\otimes V_e^*$.

From the coloring of vertices we  we get another vector $\mathbf c^\Gamma\in {\mathbf V}(\Gamma)$
\begin{align*}
\mathbf c^\Gamma=\bigotimes_{v\in V(\Gamma)}c^\Gamma_v.
\end{align*}
Here $c_v^\Gamma$ is the coloring of the vertex $v$ i.e. a vector in $V(v)^G\subset V(v)$.

\begin{definition} Define graph functions using the symmetric bilinear form (\ref{eq:BilinearFormOnV}) as
\begin{align}
F_{\Gamma,V,c}(A)=\langle\pi^A,\mathbf c^\Gamma\rangle.
\label{eq:FDefinition}
\end{align}
\end{definition}

\begin{proposition}
Graph functions have the following properties:
\[
F_{\Gamma,V,c}(A^g)=F_{\Gamma,V,c}(A)
\]
where $A^g$ is a flat connection $A$ after gauge transformation $g$. If the
connection is flat, the function
$F_{\Gamma,V,c}(A)$ depends only on the isotopy class $[\Gamma]$ of $\Gamma$.
\end{proposition}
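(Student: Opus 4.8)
The plan is to prove the two assertions separately, establishing gauge invariance first by a direct algebraic manipulation of the contraction (\ref{eq:FDefinition}), and then deducing isotopy invariance for flat connections from it.

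For gauge invariance, recall that a gauge transformation $g\in G(V)$, $v\mapsto h_v$, sends the holonomy $P^A_e$ to $\rho_{V_e}(h_{t(e)})\,P^A_e\,\rho_{V_e}(h_{s(e)})^{-1}$, where $\rho_W$ denotes the action of $G$ on a module $W$. Viewing $P^A_e$ as a vector in $V_e\otimes V_e^*$ as in (\ref{eq:ParallelTransortVector}), this amounts to acting by $\rho_{V_e}(h_{t(e)})$ on the $V_e$-factor and by $\rho_{V_e^*}(h_{s(e)})$ on the $V_e^*$-factor. I would substitute $\pi^{A^g}$ for $\pi^A$ in $F_{\Gamma,V,c}(A^g)=\langle\pi^{A^g},\mathbf c^\Gamma\rangle$ and use the $G$-equivariance of the canonical pairings $V_e^*\otimes V_e\to\CC$ out of which the form (\ref{eq:BilinearFormOnV}) is built to move every inserted group element from $\pi^{A^g}$ across the pairing and onto $\mathbf c^\Gamma$. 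The bookkeeping point is that the group elements landing on a fixed vertex $v$ act on its slots $V_e^{\epsilon(e,v)}$, $e\in S(v)$, and by (\ref{V-space}) their product is exactly the operator $\rho_{V(v)}(h_v^{-1})$ on the tensor factor $V(v)$ of $\mathbf V(\Gamma)$. Since $c_v\in V(v)^G$ we get $\rho_{V(v)}(h_v^{-1})c_v=c_v$, so $\mathbf c^\Gamma$ is unchanged and $F_{\Gamma,V,c}(A^g)=F_{\Gamma,V,c}(A)$. This argument uses only the transformation law of $\pi^A$, so gauge invariance in fact holds for an arbitrary, not necessarily flat, connection.

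For isotopy invariance I would use that a flat connection is a local system, so its parallel transport depends only on the homotopy class of the path relative to its endpoints and therefore defines a representation of the fundamental groupoid $\Pi(\Sigma)$. Let $H\colon\Gamma\times[0,1]\to\Sigma$ be an isotopy carrying the colored graph $(\Gamma,V,c)$ to $(\Gamma',V',c')$, let $\delta_v(t)=H(v,t)$ be the track of a vertex $v$, and set $h_v=P^A_{\delta_v}$. Restricting $H$ to $e\times[0,1]$ for each edge $e$ produces a homotopy rel endpoints which gives, in $\Pi(\Sigma)$, the identity $[e']=[\delta_{t(e)}]\,[e]\,[\delta_{s(e)}]^{-1}$; hence for flat $A$
\[
P^A_{e'}=\rho_{V_e}(h_{t(e)})\,P^A_e\,\rho_{V_e}(h_{s(e)})^{-1}.
\]
Since $H$ varies continuously it preserves the cyclic order of edges at every vertex, so $V'_e=V_e$ and, under the ensuing canonical identification of the spaces $V(v)$, also $c'_v=c_v$. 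Under this identification the displayed relation is precisely the effect on holonomies of the gauge transformation $g\in G(V)$, $v\mapsto h_v$, and therefore $F_{\Gamma',V',c'}(A)=F_{\Gamma,V,c}(A^g)=F_{\Gamma,V,c}(A)$ by the first part.

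The step I expect to be the main obstacle is the index bookkeeping in the proof of gauge invariance: one has to check, straight from (\ref{V-space}), (\ref{reenum-V}) and (\ref{eq:BilinearFormOnV}) --- keeping careful track of which tensor factor of $V_e\otimes V_e^*$ sits at $s(e)$ versus $t(e)$, and of the ``crossing'' of endpoints built into the form --- that the group elements transported to a vertex $v$ really do assemble into the single operator $\rho_{V(v)}(h_v^{-1})$ with the duals dictated by $\epsilon(e,v)$, and not into some mismatched product. For the isotopy statement the only subtlety is to make the groupoid identity precise; it is unaffected by self-intersections, which carry no vertex, and hence applies verbatim to immersed graphs.
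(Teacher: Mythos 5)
Your proof is correct and takes essentially the same route as the paper's, which is far terser: the paper simply asserts that gauge invariance follows from $c_v\in V(v)^G$ and that flatness gives invariance under regular homotopy, whereas you supply the bookkeeping (assembling the transported group elements into $\rho_{V(v)}(h_v^{\pm1})$ acting on the invariant vector) and the reduction of isotopy invariance to gauge invariance via the vertex tracks $\delta_v$. Both of these fill in exactly the steps the paper leaves implicit, so there is nothing to correct.
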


\begin{proof}
By Definition \ref{def:VertexColoring}, $c_p^\Gamma\in V(p)^G$ is an invariant vector for each vertex $p$ of the graph. Hence, each function $F_{\Gamma,V,c}$ is invariant under gauge transformations of $A$. Moreover, when $A$ is flat connection, $F_{\Gamma_1,V,c}(A)=F_{\Gamma_2,V,c}(A)$ whenever an immersed graph $\Gamma_1$ can be deformed to $\Gamma_2$ by the regular homotopy (i.e. homotopy through graph immersions).

As a result, for each triple $([\Gamma],V,c)$ where $[\Gamma]$ is the  isotopy class of an immersed oriented ordered graph $\Gamma$,  $V$ is an edge coloring and $c$ is a vertex coloring  we have a function $F_{[\mathbf\Gamma], V,c}=F_{\Gamma,V,c}$ on the moduli space of flat connections on $\Sigma_{g,b}$. As a consequence, $F$ is a function of the moduli space of flat connections.

\end{proof}

\begin{theorem}\cite{AndersenMattesReshetikhin'1996}
Functions $F_{[\Gamma],V,c}$ span the coordinate ring (\ref{moduli-isom}) of the moduli space of $G$-represen\-tations of the fundamental group $\pi_1(\Sigma_{g,b},p)$ .
\label{prop:InvariantSpanningSet}
\end{theorem}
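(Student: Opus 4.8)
The plan is to prove that the graph functions $F_{[\Gamma],V,c}$ span the coordinate ring $\mathcal O[\mathcal M_\Sigma^G]$ by reducing to the classical description of the invariant ring via matrix coefficients of holonomies. First I would fix a simple embedded graph $\Gamma_0 \subset \Sigma$ (for instance a wedge of $2g+b-1$ circles realizing the standard generators $x_i,y_i,z_j$ of $\pi_1(\Sigma_{g,b},p)$), so that by the preceding corollary $\mathcal M_\Sigma^G \simeq \mathcal M_{(\Sigma,\Gamma_0)}^G$ and the coordinate ring is $\mathcal O[G^{\times(2g+b-1)}]^G$. By the Peter--Weyl decomposition (or, algebraically, the fact that $\mathcal O[G]$ is the sum of $V\otimes V^*$ over irreducible $G$-modules $V$), $\mathcal O[G^{\times N}]$ is spanned by products of matrix coefficients $\rho_{V}(g_i)^{a}_{b}$, and after taking $G$-invariants every element of $\mathcal O[\mathcal M_\Sigma^G]$ is a linear combination of full contractions of tensor products $\bigotimes_i \rho_{V_i}(g_{e_i})$ against $G$-invariant tensors.

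The key step is to recognize that such a contraction is, essentially tautologically, a graph function. Given a monomial in matrix coefficients, one builds a colored graph $\Gamma$ by placing an edge colored $V_i$ for each holonomy factor $\rho_{V_i}(g_{e_i})$, routed along the corresponding loop in $\Sigma$; the pattern of index contractions, which by invariance factors through $G$-invariant tensors, is precisely the data of a vertex coloring $c_v \in V(v)^G$ at each vertex where edge-endpoints are glued, with the space $V(v)$ as in (\ref{V-space}). The bilinear form (\ref{eq:BilinearFormOnV}) is exactly the recipe that pairs the holonomy vector $\pi^A$ (the tensor product of the $P^A_{e}=\rho_{V_e}(g_e)$, viewed in $V_e\otimes V_e^*$) with $\mathbf c^\Gamma$, so by Definition (\ref{eq:FDefinition}) this contraction equals $F_{\Gamma,V,c}(A)$. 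Thus every spanning monomial of the invariant ring is realized as a graph function, which gives the surjectivity claim. Conversely each $F_{\Gamma,V,c}$ is a regular function by construction, so the span is exactly $\mathcal O[\mathcal M_\Sigma^G]$.

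The main obstacle is bookkeeping rather than conceptual: one must check that an arbitrary contraction pattern among matrix-coefficient indices can genuinely be organized into a legitimate colored graph in the sense of the earlier definitions — in particular that the local gluing data at each vertex assembles into a single invariant vector in $V(v)=\bigotimes_{e\in S(v)} V_e^{\epsilon(e,v)}$ respecting edge orientations (incoming edges contributing $V_e^*$, outgoing edges $V_e$), and that one may always route the relevant loops through a common basepoint or through auxiliary vertices so that the resulting graph is embedded (or at least immersed) with the prescribed homotopy classes. A clean way to handle this is to first normalize: every holonomy around a generator is a product of the chosen generators, so one only needs holonomies along the fixed generating loops, all based at $p$; then all contractions happen at the single vertex $p$, $V(p)$ is a tensor power of the $V_i\otimes V_i^*$, and a basis of $V(p)^G$ together with the holonomy vector reproduces a basis of the invariant ring. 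Handling multiplicities and the passage between the ``all at one vertex'' picture and general embedded graphs is where the care lies, but the edge-contraction/face-collapse isomorphisms (\ref{moduli-isom}) already established guarantee nothing is lost in these manipulations.
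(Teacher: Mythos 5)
Your argument is essentially the paper's own proof: both reduce to the one-vertex bouquet graph on the free generators of $\pi_1(\Sigma_{g,b},p)$, invoke the algebraic Peter--Weyl decomposition of $\mathcal O[G^{\times(2g+b-1)}]$, and identify each $G$-invariant component of an invariant polynomial with the vertex coloring of that single-vertex colored graph, so that the contraction against the holonomy tensor is tautologically a graph function. The proposal is correct and takes the same route, just with some extra (unneeded, given the normalization to one vertex) discussion of general contraction patterns.
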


\begin{proof}
By an algebraic analogue of Peter-Weyl theorem, the coordinate ring $R=\mathcal O[G^N]$ of $G$-representations of the fundamental group can be decomposed as a $G^N\times\left(G^N\right)^{op}$-module
    \begin{align}
        \mathcal O[G^N]
        \simeq\bigoplus_{\lambda_1,\dots,\lambda_N\in\Lambda}\left(\bigotimes_{j=1}^N\quad W_{\lambda_j}\otimes W_{\lambda_j}^*\right),
        \label{eq:AlgebraicPeter-Weyl}
    \end{align}
    where $W_\lambda$ stands for a finite-dimensional irreducible representation with highest weight $\lambda$.

    In particular, (\ref{eq:AlgebraicPeter-Weyl}) is an isomorphism of $G$-modules, where $G$ acts by conjugation. Now let $Q\in\mathcal O[G^N]^G$ be a $G$-invariant polynomial, by (\ref{eq:AlgebraicPeter-Weyl}) it can be decomposed as
    \begin{align*}
        Q=\sum_{j=1}^mQ_j,\qquad Q_j\in \left(W_{\lambda_{j,1}}\otimes W^*_{\lambda_{j,1}}\otimes\dots\otimes W_{\lambda_{j,N}}\otimes W^*_{\lambda_{j,N}}\right)^G
    \end{align*}
    where each $Q_j$ is a graph function associated to the ribbon graph with a single vertex and $N=2g+b-1$ morphisms corresponding to free generators of $\pi_1(\Sigma_{g,b},p)$. The edge coloring of edges corresponding to generators of the fundamental group is given by $\lambda_{j,1},\dots,\lambda_{j,N}$, while $Q_j$ is an invariant vector which defines the coloring of the single vertex $p$.
\end{proof}

\subsection{Equivalence of Graph Functions}
In this subsection we list elementary operations on colored immersed graphs which leave the associated graph function (\ref{eq:FDefinition}) invariant. These operations include collapsing a pair of neighbouring vertices (Figure \ref{fig:Collapsing}) and thinning a pair of neighbouring edges (Figure \ref{fig:Thinning}). In each case shown on Figure \ref{fig:EquivalenceImmersedGraphs} we refer to the left graph as $\Gamma_1$ and to the right graph as $\Gamma_2$.

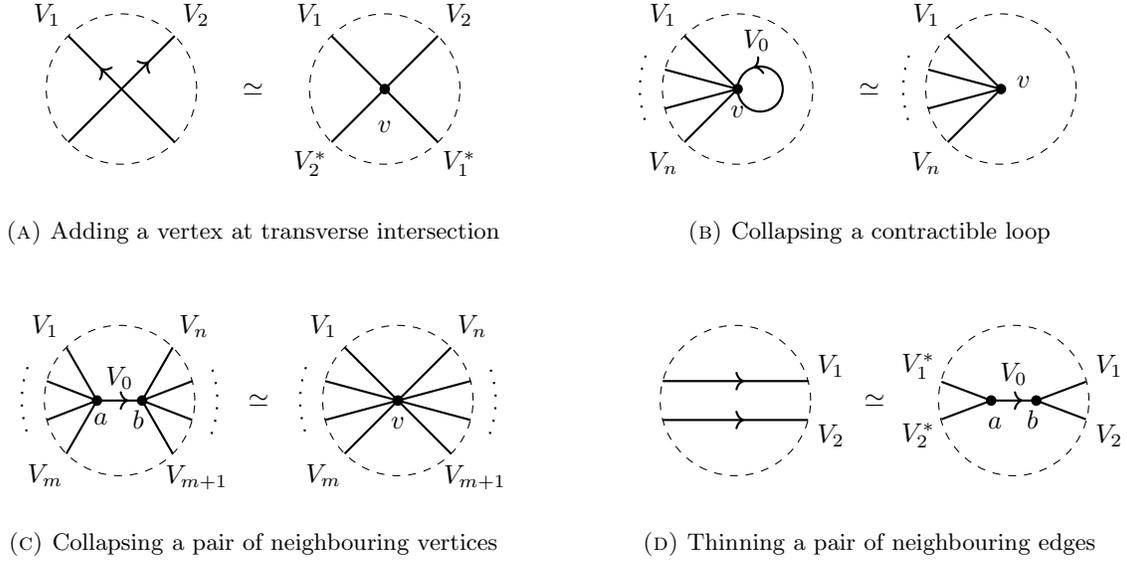
\begin{figure}
\begin{subfigure}[b]{0.45\linewidth}
\centering
\begin{tikzpicture}[scale=1]
\draw[white] (-1.7,-1.5) rectangle (5.2,1.5);
\draw (0,0) [dashed] circle (1);
\draw[thick,-<--] (45:1) to (225:1);
\draw[thick,-->-] (-45:1) to (-225:1);
\draw (135:1.4) node {$V_1$};
\draw (45:1.4) node {$V_2$};
\draw (1.75,0) node {$\simeq$};
\draw (3.5,0) [dashed] circle (1);
\draw[thick] (3.5,0) to +(45:1);
\draw[thick] (3.5,0) to +(-45:1);
\draw[thick] (3.5,0) to +(225:1);
\draw[thick] (3.5,0) to +(-225:1);
\fill (3.5,0) circle (0.07);
\draw (3.5,0)+(135:1.4) node {$V_1$};
\draw (3.5,0)+(45:1.4) node {$V_2$};
\draw (3.5,0)+(-135:1.4) node {$V_2^*$};
\draw (3.5,0)+(-45:1.4) node {$V_1^*$};
\draw (3.5,-0.5) node {$v$};
\end{tikzpicture}
\caption{Adding a vertex at transverse intersection}
\label{fig:AddingVertexTransverse}
\vspace*{0.5cm}
\end{subfigure}
\qquad
\begin{subfigure}[b]{0.45\linewidth}
\centering
\begin{tikzpicture}[scale=1]
\draw[white] (-1.7,-1.5) rectangle (5.2,1.5);
\draw (0,0) [dashed] circle (1);
\draw[thick] (0,0) to +(135:1);
\draw[thick] (0,0) to +(165:1);
\draw[thick] (0,0) to +(195:1);
\draw[thick] (0,0) to +(225:1);
\draw[thick,->--] (0.3,0) circle (0.3);
\fill (0,0) circle (0.07);
\draw (0.25,0.65) node {$V_0$};
\draw (0,-0.3) node {$v$};
\draw (135:1.4) node {$V_1$};
\draw[thick, loosely dotted] (160:1.3) arc (160:200:1.3);
\draw (225:1.4) node {$V_n$};
\draw (1.75,0) node {$\simeq$};
\draw (3.5,0) [dashed] circle (1);
\draw[thick] (3.5,0) to +(135:1);
\draw[thick] (3.5,0) to +(165:1);
\draw[thick] (3.5,0) to +(195:1);
\draw[thick] (3.5,0) to +(225:1);
\fill (3.5,0) circle (0.07);
\draw (3.5,0)+(135:1.4) node {$V_1$};
\draw[thick, loosely dotted] (3.5,0)+(160:1.3) arc (160:200:1.3);
\draw (3.5,0)+(225:1.4) node {$V_n$};
\draw (3.8,-0.1) node[above] {$v$};
\end{tikzpicture}
\caption{Collapsing a contractible loop}
\label{fig:CollapsingLoop}
\vspace*{0.5cm}
\end{subfigure}
\qquad
\begin{subfigure}{0.45\linewidth}
\centering
\begin{tikzpicture}
\draw[white] (-1.7,-1.5) rectangle (5.2,1.5);
\draw (0,0) [dashed] circle (1);
\draw[thick,-->-] (-0.3,0) to (0.3,0);
\draw[thick] (-0.3,0) to (135:1);
\draw[thick] (-0.3,0) to (165:1);
\draw[thick] (-0.3,0) to (195:1);
\draw[thick] (-0.3,0) to (225:1);
\draw[thick] (0.3,0) to (-45:1);
\draw[thick] (0.3,0) to (-15:1);
\draw[thick] (0.3,0) to (15:1);
\draw[thick] (0.3,0) to (45:1);
\fill (-0.3,0) circle (0.07);
\fill (0.3,0) circle (0.07);
\draw (135:1.4) node {$V_1$};
\draw[thick, loosely dotted] (160:1.3) arc (160:200:1.3);
\draw (225:1.4) node {$V_m$};
\draw (-45:1.45) node {$V_{m+1}$};
\draw[thick, loosely dotted] (-20:1.3) arc (-20:20:1.3);
\draw (45:1.4) node {$V_n$};
\draw (0,0.3) node {$V_0$};
\draw (-0.25,-0.45) node[above] {$a$};
\draw (0.25,-0.5) node[above] {$b$};
\draw (1.85,0) node {$\simeq$};
\draw (3.7,0) [dashed] circle (1);
\draw[thick] (3.7,0) to +(135:1);
\draw[thick] (3.7,0) to +(165:1);
\draw[thick] (3.7,0) to +(195:1);
\draw[thick] (3.7,0) to +(225:1);
\draw[thick] (3.7,0) to +(-45:1);
\draw[thick] (3.7,0) to +(-15:1);
\draw[thick] (3.7,0) to +(15:1);
\draw[thick] (3.7,0) to +(45:1);
\fill (3.7,0) circle (0.07);
\draw (3.7,0)+(135:1.4) node {$V_1$};
\draw[thick, loosely dotted] (3.7,0)+(160:1.3) arc (160:200:1.3);
\draw (3.7,0)+(225:1.4) node {$V_m$};
\draw (3.7,0)+(-45:1.45) node {$V_{m+1}$};
\draw[thick, loosely dotted] (3.7,0)+(-20:1.3) arc (-20:20:1.3);
\draw (3.7,0)+(45:1.4) node {$V_n$};
\draw (3.7,-0.5) node[above] {$v$};
\end{tikzpicture}
\caption{Collapsing a pair of neighbouring vertices}
\label{fig:Collapsing}
\end{subfigure}
\qquad
\begin{subfigure}{0.45\linewidth}
\centering
\begin{tikzpicture}
\draw[white] (-1.7,-1.5) rectangle (5.2,1.5);
\draw (0,0) [dashed] circle (1);
\draw[thick,-<-] (15:1) to (165:1);
\draw[thick,-<-] (-15:1) to (195:1);
\draw (20:1.35) node {$V_1$};
\draw (-20:1.35) node {$V_2$};
\draw (1.85,0) node {$\simeq$};
\draw (3.7,0) [dashed] circle (1);
\draw[thick,-->-] (3.7-0.3,0) to (3.7+0.3,0);
\draw[thick] (3.7-0.3,0) to (3.7-0.96,0.25);
\draw[thick] (3.7-0.3,0) to (3.7-0.96,-0.25);
\draw[thick] (3.7+0.3,0) to (3.7+0.96,0.25);
\draw[thick] (3.7+0.3,0) to (3.7+0.96,-0.25);
\fill (3.7-0.3,0) circle (0.07);
\fill (3.7+0.3,0) circle (0.07);
\draw (3.7,0.35) node {$V_0$};
\draw (3.7,0)+(20:1.35) node {$V_1$};
\draw (3.7,0)+(-20:1.35) node {$V_2$};
\draw (3.7,0)+(160:1.35) node {$V_1^*$};
\draw (3.7,0)+(200:1.35) node {$V_2^*$};
\draw (3.7-0.25,-0.5) node[above] {$a$};
\draw (3.7+0.25,-0.5) node[above] {$b$};
\end{tikzpicture}
\caption{Thinning a pair of neighbouring edges}
\label{fig:Thinning}
\end{subfigure}
\caption{Graphs defining equivalent functions (Edges are oriented outwards unless specified)}
\label{fig:EquivalenceImmersedGraphs}
\end{figure}
\begin{lemma}\cite{FockRosly'1999},\cite{KashaevReshetikhinWebsterYakimov}
Let $\Gamma_1,\Gamma_2$ be a pair of colored immersed oriented graphs which coincide away from a disc shown on any of Figures \ref{fig:AddingVertexTransverse}-\ref{fig:Thinning}. Then there exists a vertex coloring of $\Gamma_2$, such that the two colored immersed graph give rise to the same invariant function on graph connections:
\begin{align*}
    F_{[\mathbf{\Gamma_1}]}=F_{[\mathbf{\Gamma_2}]}.
\end{align*}
\label{lemm:GraphEquivalence}
\end{lemma}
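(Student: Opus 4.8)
The plan is to read each graph function $F_{\Gamma,V,c}(A)$, as defined in (\ref{eq:FDefinition}) through the pairing (\ref{eq:BilinearFormOnV}), as the evaluation of a single tensor network in the symmetric monoidal category of finite-dimensional $G$-modules: each edge $e$ contributes the holonomy $P^A_e$, regarded as a vector in $V_e\otimes V_e^*$; each vertex $v$ contributes the invariant tensor $c_v\in V(v)^G$; and the bilinear form prescribes exactly which leg of an edge is contracted against which leg of the vertex tensors at its two endpoints. In each of Figures \ref{fig:AddingVertexTransverse}--\ref{fig:Thinning} the graphs $\Gamma_1$ and $\Gamma_2$ agree outside a disc, so only the local fragment of this contraction changes. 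It therefore suffices, for each move, to exhibit a vertex colouring of $\Gamma_2$ supported on the new or merged vertices, to verify it lies in the relevant space of $G$-invariants, and to verify that the local $\Gamma_2$-fragment contracts to the local $\Gamma_1$-fragment for every flat connection $A$; the untouched part of the network then gives $F_{[\mathbf\Gamma_1]}=F_{[\mathbf\Gamma_2]}$.

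Two of the moves are purely tensor-categorical. For Figure \ref{fig:AddingVertexTransverse} colour the new $4$-valent vertex by $\mathrm{id}_{V_1}\otimes\mathrm{id}_{V_2}$, regarded as an element of $(V_1\otimes V_1^*)\otimes(V_2\otimes V_2^*)$ and hence of $V(v)$ after the reordering fixed by the edge ordering; it is $G$-invariant because each $\mathrm{id}_{V_i}$ is a morphism of $G$-modules. Splitting an arc at that point replaces its holonomy $P^A_{\mathrm{arc}}$ by a composite $P^A_{e''}\circ P^A_{e'}$, and contracting the two factors against $\mathrm{id}_{V_1}\otimes\mathrm{id}_{V_2}$ restores $P^A_{\mathrm{arc}}$. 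For Figure \ref{fig:Thinning} colour the new internal edge by $V_0=V_1\otimes V_2$ and colour the two new vertices by the invariant tensors of $V(a)$ and $V(b)$ corresponding to $\mathrm{id}_{V_1\otimes V_2}$ (equivalently, built from $\mathrm{id}_{V_1}$ and $\mathrm{id}_{V_2}$); the holonomy along the new edge is $P^A_{V_1}\otimes P^A_{V_2}$, and concatenating the stubs with the two factors of the new edge through these identity vertices reproduces exactly the holonomies of the two original strands. Neither move uses flatness.

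The remaining two moves are collapses and use, respectively, flatness of $A$ and gauge invariance of $F$. In Figure \ref{fig:CollapsingLoop} the extra edge is a loop at $v$, coloured $V_0$, bounding a disc in $\Sigma$ disjoint from the rest of $\Gamma_1$ (part of the hypothesis that the two graphs agree outside a disc); for a flat connection its holonomy is $P^A_{\mathrm{loop}}=\mathrm{id}_{V_0}$, and contracting $c_v$ with $\mathrm{id}_{V_0}$ over the $V_0\otimes V_0^*$ legs of $v$ is the partial trace of $c_v$ over those legs --- again $G$-invariant, and lying in $V(v)$ for the surviving graph --- which we take as the new colouring of $v$. In Figure \ref{fig:Collapsing} the edge $e_0$ coloured $V_0$ joins the \emph{distinct} vertices $a$ and $b$; evaluating at an arbitrary flat $A$, choose a gauge transformation with prescribed values at the two endpoints of $e_0$ so that after gauging $P^A_{e_0}=\mathrm{id}_{V_0}$. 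Gauge invariance of $F$ lets us compute with this normalisation, and then the local contraction of $c_a\otimes c_b$ against $\mathrm{id}_{V_0}$ is independent of $A$: it is the composition of $c_a$ and $c_b$ along their $V_0$ and $V_0^*$ legs, a $G$-invariant element of $V(v)$ for the merged vertex of $\Gamma_2$, which we take as its colouring. (Alternatively the last case follows from the edge-contraction isomorphism (\ref{moduli-isom}) together with naturality of the construction.)

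I expect the main obstacle to be bookkeeping rather than anything conceptual: one must push the edge orientations, the cyclic edge orderings at each vertex, and the several mutually dual copies of each $V_e$ through the pairing rule (\ref{eq:BilinearFormOnV}) carefully enough that inserting an identity, composing tensors, and taking a partial trace in the descriptions above match that rule leg for leg --- in particular checking, for the thinning move, that the new edge's holonomy factors as $P^A_{V_1}\otimes P^A_{V_2}$ with the orientations as drawn, and that the splitting tensors at $a$ and $b$ are honestly $G$-invariant. One must also check that the colourings produced this way are independent of the auxiliary ordering choices, which holds because the category of $G$-modules is symmetric. Finally one should confirm that the hypothesis that the two graphs agree outside a disc really supplies, in the two collapse cases, an embedded disc bounded by the loop, respectively an honest edge between distinct vertices, so that the flatness and gauge-fixing steps apply.
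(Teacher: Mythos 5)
Your proposal is correct and follows essentially the same route as the paper: for each local move you exhibit the explicit invariant vertex colouring (identity tensors, partial traces, compositions of $c_a$ with $c_b$), trivialise the holonomy in the disc via flatness or a gauge choice where needed, and check the local contraction agrees. The one small inaccuracy is the claim that the thinning move uses no flatness: since the new edge $e''$ and the two original strands are distinct paths in the disc, one still needs flatness (equivalently, a gauge in which the connection is trivial on the disc, as the paper does) before the holonomy of the $V_1\otimes V_2$-coloured edge can be identified with $P^A_{V_1}\otimes P^A_{V_2}$ — a point you do flag as requiring verification.
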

\begin{proof}
Consider a flat connection $A$ on a principle $G$-bundle over $\Sigma_{g,b}$ and let
\begin{align}
    \pi^A_{\Gamma_1}=\bigotimes_{e\in E(\Gamma_1)}P_e\in \mathbf V(\Gamma_1),\qquad \pi^A_{\Gamma_2}=\bigotimes_{e\in E(\Gamma_2)}Q_e\in\mathbf V(\Gamma_2).
    \label{eq:HolonomyVectors}
\end{align}
be a pair of vectors in and  as in (\ref{eq:ParallelTransortVector}) defined by the parallel transport along edges of the two immersed graphs. In each of the four parts we are going to prove that
    \begin{align}
        F_{[\mathbf\Gamma_1]}(A)=\langle \pi_{\Gamma_1}^A,\mathbf c^{\Gamma_1}\rangle=\langle \pi_{\Gamma_1}^A,\mathbf c^{\Gamma_2}\rangle=F_{[\mathbf\Gamma_2]}(A)
    \label{eq:EquivalenceOfGraphConnectionsPairing}
    \end{align}
for an appropriate choice of vertex coloring.

\begin{enumerate}[(A)]
    \item Let
    \begin{align*}
        \mathrm{Id}_{V_1}=\sum_i v_i\otimes v_i^*,\qquad \mathrm{Id}_{V_2}=\sum_j w_i\otimes w_j^*
    \end{align*}
    where $\{v_i\},\{v_i^*\}$ and $\{w_i\},\{w_i^*\}$ be any two pairs of dual bases on $V_1, V_1^*$ and $V_2,V_2^*$ respectively, i.e. $\langle v_i,v_j^*\rangle=\delta_{i,j}$ and $\langle w_i,w_j^*\rangle=\delta_{i,j}$. Assign a vertex coloring of $\Gamma_2$ at $v$ as
    \begin{align*}
        c^{\Gamma_2}_v=\sum_{i,j}v_i\otimes w_j^*\otimes v_i^*\otimes w_j\quad\simeq\quad\mathrm{Id}_{V_1}\otimes\mathrm{Id}_{V_2}
    \end{align*}

    \item Let $\Gamma_1$ and $\Gamma_2$ be a pair of colored oriented immersed graphs which coincide away from a disc shown on Figure \ref{fig:CollapsingLoop}. Suppose that the vertex coloring of $\Gamma_1$ at $v$ is given by
    \begin{align*}
        c^{\Gamma_1}_v=\sum_j v^{0}_j\otimes w^{(0)}_j\otimes v^{(1)}_j\otimes\dots\otimes v^{(1)}_j.
    \end{align*}
    Define $c^{\Gamma_2}_v$ as follows
    \begin{align*}
        c^{\Gamma_1}_v=\sum_j \langle v^{0}_j, w^{(0)}_j\rangle\; v^{(1)}_j\otimes\dots\otimes v^{(1)}_j
    \end{align*}
    Because the holonomy of flat connection along contractible loop $e_0$ is trivial, we get the following contribution to holonomy vector (\ref{eq:HolonomyVectors})
    \begin{align}
        P_{e_0}=\mathrm{Id}_{V_0}=\sum_{i=1}^M{u_i\otimes u_i^*}
        \label{eq:HolonomyTrivial}
    \end{align}
    where $\{u_i\}$ and $\{u_i^*\}$ is an arbitrary pair of dual bases in $V_0$ and $V_0^*$, i.e. $\langle u_i,u_l^*\rangle=\delta_{i,l}$. Then (\ref{eq:EquivalenceOfGraphConnectionsPairing}) follows immediately by
    \begin{align}
        \sum_{j=1}^M\langle w_k^{(0)},u_j\rangle\langle u_j^*,v_j^{(0)}\rangle=\langle w_k^{(0)},v_j^{(0)}\rangle.
        \label{eq:HolonomyTrivialSum}
    \end{align}

    \item Assume further that the vertex coloring of vertices $a$ and $b$ is given by a pair of $G$-invariant vectors
    \begin{align*}
        c^{\Gamma_1}_a=&\sum_jv_j^{(0)}\otimes v_j^{(1)}\otimes\dots\otimes v_j^{(m)}\quad\in\quad (V_0\otimes\dots\otimes V_m)^G,\\[5pt]
        c^{\Gamma_1}_b=&\sum_kw_k^{(0)}\otimes w_k^{(m+1)}\otimes\dots\otimes w_k^{(n)}\quad\in\quad (V_0^*\otimes V_{m+1}\otimes\dots\otimes V_n)^G.
    \end{align*}
    Let $c_v$ be a $G$-invariant vector defined as follows:
    \begin{align*}
        c_v^{\Gamma_2}=\sum_{j,k} \left\langle w_k^{(0)},v_j^{(0)}\right\rangle\; v_j^{(1)}\otimes\dots\otimes v_j^{(m)}\otimes w_k^{(m+1)}\otimes\dots\otimes w_k^{(n)}\quad\in\quad (V_1\otimes\dots\otimes V_n)^G
    \end{align*}
    Note that because both $F_{[\mathbf\Gamma_1]}$ and $F_{[\mathbf\Gamma_2]}$ are functions on gauge equivalence classes of flat connections, without loss of generality we can assume that holonomy of $A$ in a contractible disc on Figure \ref{fig:Collapsing} is trivial. Using the same reasoning as in (\ref{eq:HolonomyTrivial}) and (\ref{eq:HolonomyTrivialSum}) we conclude that (\ref{eq:EquivalenceOfGraphConnectionsPairing}) holds.
    \item For the remaining part, again without loss of generality we can assume that connection is trivial inside contractible disk shown on Figure \ref{fig:Thinning}. For $\varphi:V_0\xrightarrow{\sim} V_1\otimes V_2$, assign $G$-invariant vectors at both vertices induced by the isomorphism $\varphi$ and its inverse, the statement (\ref{eq:EquivalenceOfGraphConnectionsPairing}) then follows by tautology.
\end{enumerate}
\end{proof}

\begin{corollary}\label{cor-isom}
   Let $\Gamma\in\Sigma$ be an oriented immersed graph with a single vertex $p$ and edges corresponding to free generators of $\pi_1(\Sigma,p)$. For any oriented immersed graph $(\Gamma_1,W,d)$ we can present the associated graph function as
    \begin{align*}
        F_{[\Gamma_1,W,d]}= F_{[\Gamma, V, c]}
    \end{align*}
    for an appropriate choice of coloring of simple oriented surface graph $\Gamma\subset \Sigma$.
\end{corollary}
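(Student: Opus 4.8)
The plan is to combine the equivalence moves of Lemma~\ref{lemm:GraphEquivalence} with the algebraic Peter--Weyl argument used in the proof of Theorem~\ref{prop:InvariantSpanningSet}. First I would use those moves to replace $(\Gamma_1,W,d)$ by an equivalent colored graph having a single vertex $p$ all of whose edges are loops at $p$; for such a graph the associated function is manifestly a regular $G$-invariant function on $G^{\times(2g+b-1)}$, i.e.\ an element of $\mathcal O[\mathcal M^G_\Sigma]$. Then I would expand it, via the algebraic Peter--Weyl decomposition, as a finite sum of graph functions on the bouquet $\Gamma$, and finally collapse that sum to a single graph function on $\Gamma$ by taking an edgewise direct sum of colorings.

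For the reduction, I would first perturb $\Gamma_1$ by a small regular homotopy so that it is transversally immersed, and then apply a further regular homotopy --- legitimate because $\Sigma$ is connected and we are deforming through graph immersions --- dragging the connected components of $\Gamma_1$ across one another until their image in $\Sigma$ is connected; by the regular-homotopy invariance of graph functions on flat connections recorded above, this does not change $F_{[\Gamma_1,W,d]}$ (isolated vertices merely contribute scalar factors, so there is no loss in assuming $\Gamma_1$ has none). Next I resolve each transverse double point into a four-valent vertex by the move of Figure~\ref{fig:AddingVertexTransverse}, producing a colored embedded graph $\Gamma_1'$, choose a spanning tree of $\Gamma_1'$, and collapse its edges one at a time by the move of Figure~\ref{fig:Collapsing}. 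The result is a colored graph $\Gamma_1''$ with a single vertex $p$ whose edges are loops at $p$; by Lemma~\ref{lemm:GraphEquivalence}, $F_{[\Gamma_1,W,d]}=F_{[\Gamma_1'']}$ (contractible loops may additionally be removed by the move of Figure~\ref{fig:CollapsingLoop}, though this is not essential).

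Each loop of $\Gamma_1''$ represents an element of $\pi_1(\Sigma,p)$, and writing its holonomy as a word in the holonomies along the loops of $\Gamma$ --- which also freely generate $\pi_1(\Sigma,p)$ --- shows, under the identification $\mathcal M^G_\Sigma\simeq G^{\times(2g+b-1)}\git G$ associated with $\Gamma$, that $F_{[\Gamma_1'']}\in\mathcal O[G^{\times(2g+b-1)}]^G=\mathcal O[\mathcal M^G_\Sigma]$. Running the decomposition (\ref{eq:AlgebraicPeter-Weyl}) in these coordinates and passing to $G$-invariants exactly as in the proof of Theorem~\ref{prop:InvariantSpanningSet} yields $F_{[\Gamma_1'']}=\sum_{j=1}^{m}F_{[\Gamma,\mu_j,c_j]}$, a finite sum of graph functions on $\Gamma$ with various edge colorings $\mu_j$. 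The final point is that a finite sum of graph functions on one and the same graph $\Gamma$ is again a graph function on $\Gamma$: given colorings $(V',c')$ and $(V'',c'')$ on $\Gamma$, put $V_e:=V'_e\oplus V''_e$ on each edge $e$ and let $c:=c'\oplus c''$ sit inside $V(p)^G$ through the two ``pure'' summands $\bigotimes_e(V'_e\otimes (V'_e)^*)$ and $\bigotimes_e(V''_e\otimes (V''_e)^*)$ of $V(p)\simeq\bigotimes_e(V_e\otimes V_e^*)$. Each holonomy $P^A_e$ preserves the subrepresentations $V'_e$ and $V''_e$, so the holonomy vector $\pi^A$ splits into components indexed by a choice of $V'_e$ or $V''_e$ at each edge; the bilinear form (\ref{eq:BilinearFormOnV}) pairs the ``all $V'$'' component against $c'$ and the ``all $V''$'' component against $c''$, all other pairings vanishing, so $\langle\pi^A,c\rangle=F_{[\Gamma,V',c']}(A)+F_{[\Gamma,V'',c'']}(A)$. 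Induction on $m$ then supplies the desired pair $(V,c)$.

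The step I expect to be the main obstacle is this last one --- verifying that the form (\ref{eq:BilinearFormOnV}), restricted to the direct-sum decomposition of $V(p)$, is block diagonal in exactly the sense claimed: after identifying $V_e^*=(V'_e)^*\oplus(V''_e)^*$ with $(V'_e)^*$ the annihilator of $V''_e$, the pairing of a diagonal component of $\pi^A$ with the $V'$-part of $c$ survives only for the ``all $V'$'' component, and similarly for $V''$. One must also note that $c=c'\oplus c''$ is $G$-invariant, which holds because the summand decomposition is $G$-stable, and that the invariant vectors $Q_j$ produced by (\ref{eq:AlgebraicPeter-Weyl}) are transported to the cyclic edge order of $\Gamma$ at $p$, which is harmless since the category of $G$-modules is symmetric. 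The surgery in the reduction step is routine once Lemma~\ref{lemm:GraphEquivalence} is available, apart from the bookkeeping needed to dispose of non-generic features of $\Gamma_1$: non-transverse or higher-multiplicity self-intersections, isolated vertices, and a disconnected image.
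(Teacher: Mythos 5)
Your proposal is correct, but it diverges from the paper's own argument in its second half. The paper's proof is entirely diagrammatic: it forms the union $\Gamma\cup\Gamma_1$, makes it transverse and then embedded via the move of Figure \ref{fig:AddingVertexTransverse}, collapses all vertices of $\Gamma_1$ onto the vertex of $\Gamma$, and then absorbs the leftover edges of $\Gamma_1$ into the generator edges of $\Gamma$ by repeated use of the moves of Figures \ref{fig:CollapsingLoop}, \ref{fig:Collapsing} and \ref{fig:Thinning}, arriving directly at a single coloring $(V,c)$ of $\Gamma$. You follow the same diagrammatic reduction only as far as turning $\Gamma_1$ into a bouquet of loops at $p$, and then switch to algebra: the Peter--Weyl decomposition (\ref{eq:AlgebraicPeter-Weyl}), as in the proof of Theorem \ref{prop:InvariantSpanningSet}, expresses the resulting invariant function as a finite \emph{sum} of graph functions on $\Gamma$, which you then recombine into a single graph function via the direct-sum-of-colorings construction $V_e=V'_e\oplus V''_e$, $c=c'\oplus c''$. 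That last step is the one genuinely new ingredient relative to the paper, and your verification of it is sound: the form (\ref{eq:BilinearFormOnV}) is indeed block diagonal because an element of $(V'_e)^*$, viewed as the annihilator of $V''_e$ inside $V_e^*$, kills the $V''_e$-component of the holonomy, so all cross terms vanish and $\langle\pi^A,c\rangle=\langle\pi^A,c'\rangle+\langle\pi^A,c''\rangle$. What each route buys: the paper's version never leaves the graphical calculus but leans on the (terse, somewhat delicate) step of realizing each loop of the reduced $\Gamma_1$ as a concatenation of generator loops so that thinning can merge it into $\Gamma$; your version sidesteps that isotopy bookkeeping entirely at the cost of invoking Peter--Weyl and establishing closure of graph functions on a fixed graph under addition --- a fact that is useful in its own right and implicit elsewhere in the paper. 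The preliminary clean-up you describe (transversality, resolving double points, connecting components by dragging, discarding isolated vertices as scalar factors) matches the paper's ``w.l.o.g.'' reductions and is unproblematic.
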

\begin{proof}
    Because $F_{[\Gamma_1,W,d]}$ depends only on homotopy (through immersed graphs) equivalence type of $\Gamma_1$, w.l.o.g. we can assume that $\Gamma\cup\Gamma_1$ has only double transverse intersection points. So by Lemma \ref{lemm:GraphEquivalence} we can turn $\Gamma\cup\Gamma_1$ first into an embedded graph, then collapse all vertices of $\Gamma_1$ into vertices of $\Gamma_2$ and, finally, by a sequence of moves (\ref{fig:CollapsingLoop}), (\ref{fig:Collapsing}), (\ref{fig:Thinning}) eliminate all vertices of $\Gamma_1$.
\end{proof}

Replace holonomies along edges by groups elements $g_e$ given by a flat
graph connection $g: E(\Gamma)\to G$ we identify the graph function $F_{[\Gamma, V, c]}$
with a function on the space flat graph connections for $\Gamma\subset \Sigma$. Taking into account isomorphisms (\ref{moduli-isom}) we can identify invariant graph functions with functions
on the representation variety $\mathcal M_{\Sigma_{g,b}}^G$.
Thus the corollary \ref{cor-isom} establishes an isomorphism between the coordinate ring $\mathcal O[\mathcal M_{(\Sigma_{g,b},\Gamma)}]^G$ of moduli space of graph connections and the coordinte ring $\mathcal O[\mathcal M_{\Sigma_{g,b}^G}]^G$.

\subsection{Poisson structure on the moduli space of flat connections} The coordinate ring $\mathcal O[\mathcal M_\Sigma^G]$ of the $G$-charater variety of $\pi_1(\Sigma_{g,b})$ comes equipped with a Poisson bracket equivariant under the action of the Mapping Class Group $\mathop{Mod}(\Sigma_{g,b})$ \cite{AtiyahBott'1983,Goldman'1986,BiswasGuruprasad'1993}. Following \cite{FockRosly'1993,AndersenMattesReshetikhin'1996,FockRosly'1999} we will define this bracket in terms of graph functions (\ref{eq:FDefinition}).

Let $([\Gamma],V,c)$ and $([\Gamma'],V',c')$ be a pair of isotopy equivalence classes of colored immersed graphs. Pick representatives $\Gamma$ and $\Gamma'$ which have only transverse double intersection points at edges.

\begin{definition}
Let $p\in\Gamma\cap\Gamma'$ be a transverse double intersection point of two colored immersed graphs $\mathbf\Gamma=(\Gamma,V,c)$ and $\mathbf\Gamma'=(\Gamma',V',c')$. Define (isotopy equivalence class of) colored immersed graph $[\mathbf\Gamma\star_p\mathbf\Gamma']=([\Gamma''],V'',c'')$ as follows:

Consider a small neighborhood of $p$ and denote by $e\in\Gamma$ and $e'\in\Gamma'$ the corresponding edges of graphs containing the intersection point $p\in e\cap e'$.
\begin{figure}
\centering
\begin{subfigure}{0.3\linewidth}
\centering
\begin{tikzpicture}[scale=1.2]
\draw [thick] (-0.707,-0.707) to (0,0);
\draw [thick,-->-] (0,0) to (0.707,0.707);
\draw [thick] (0.707,-0.707) to (0,0);
\draw [thick,-->-] (0,0) to (-0.707,0.707);
\draw (0,0) [dashed] circle (1);
\draw (1,1.1) node {$e\in\Gamma$};
\draw (-1,1.1) node {$e'\in\Gamma'$};
\end{tikzpicture}
\caption{Neighborhood of a simple intersection point $p\in\Gamma\cap\Gamma'$}
\end{subfigure}
\qquad
\begin{subfigure}{0.3\linewidth}
\centering
\begin{tikzpicture}[scale=1.2]
\draw [thick,->--] (-0.707,-0.707) to (0,0);
\draw [thick,-->-] (0,0) to (0.707,0.707);
\draw [thick,->--] (0.707,-0.707) to (0,0);
\draw [thick,-->-] (0,0) to (-0.707,0.707);
\draw [thick,->-] (-0.707/2,-0.707/2) to (0.707/2,-0.707/2);
\fill (-0.707/2,-0.707/2) circle (0.07);
\fill (0.707/2,-0.707/2) circle (0.07);
\draw (0,0) [dashed] circle (1);
\draw (-1,-1) node {$e_1$};
\draw (1,1) node {$e_2$};
\draw (1,-1) node {$e_1'$};
\draw (-1,1) node {$e_2'$};
\draw (0,-0.7) node {$e''$};
\draw (-0.5,-0.1) node {$a$};
\draw (0.5,-0.1) node {$b$};
\end{tikzpicture}
\caption{The resulting immersed graph $\mathbf\Gamma\star_p\mathbf\Gamma'$}
\label{fig:BracketImmersedGraph}
\end{subfigure}
\caption{Double intersection point at edges and associated graph $(\Gamma,\Gamma')_p$}
\label{fig:DoubleIntersectionPoint}
\end{figure}
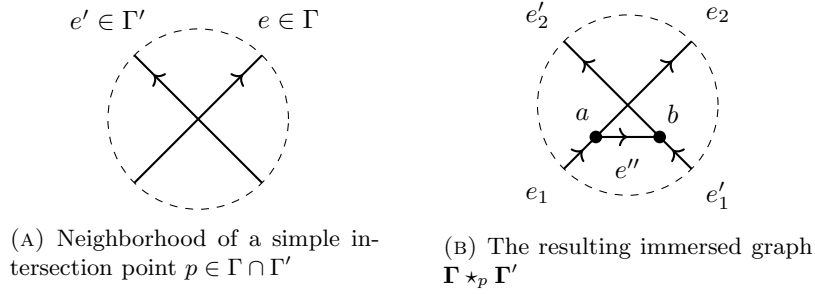
Take union of immersed graphs $\Gamma\cup\Gamma'$ and add two extra vertices $a,b$ which divide edges $e,e'$ into $e_1,e_2$ and $e_1',e_2'$ respectively as shown on Figure \ref{fig:BracketImmersedGraph}. Then add an edge $e''$ connecting $a$ and $b$ colored by adjoint representation $V_{e''}''=\mathfrak g$ of $G$.\footnote{Note that the adjoint representation is self-dual, so the orientation of $e''$ on Fig. \ref{fig:BracketImmersedGraph} can be chosen arbitrarily.} Define vertex coloring
\begin{equation}
\begin{aligned}
c''_a=&(\mathrm{ad}_{V_e}\otimes Id_{\mathfrak g})(I)\quad\in\quad V_e\otimes V_e^*\otimes \mathfrak g,\\
c''_b=&(\mathrm{ad}_{V_{e'}}\otimes Id_{\mathfrak g})(I)\quad\in\quad V_{e'}\otimes V_{e'}^*\otimes\mathfrak g,
\end{aligned}
\label{eq:NewVertexColoring}
\end{equation}
where $I=\sum_i t^i\otimes t_i\;\in\;\mathfrak g\otimes\mathfrak g$ is a quadratic Casimir. Finally, assume that edge and vertex coloring of the remaining graph is inherited from $\mathbf\Gamma$ and $\mathbf\Gamma'$.
\end{definition}

\begin{theorem}[\cite{FockRosly'1993,AndersenMattesReshetikhin'1996}]
The Poisson Bracket between $F_{[\mathbf\Gamma]}$ and $F_{[\mathbf\Gamma']}$ in terms of invariant graph functions reads
\begin{align}
\left\{F_{[\mathbf\Gamma]},F_{[\mathbf\Gamma']}\right\}=\sum_{p\in\Gamma\cap\Gamma'}\epsilon(p)F_{[\mathbf\Gamma\star_p\mathbf\Gamma']},
\label{eq:PoissonBracketGraphFunctions}
\end{align}
where the sum is taken over the intersection points $p\in\Gamma\cap\Gamma'$ and $\epsilon(p)=\pm1$ stands for the sing of an intersection point.
\end{theorem}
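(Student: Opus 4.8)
The plan is to reduce the identity \eqref{eq:PoissonBracketGraphFunctions} to a local computation near each transverse crossing and then to recognize the result of that computation as the colored graph $\mathbf\Gamma\star_p\mathbf\Gamma'$ constructed just above. Since $F_{[\mathbf\Gamma]}$ is gauge invariant and depends only on the isotopy (regular homotopy) class of the immersed colored graph, and since both sides of \eqref{eq:PoissonBracketGraphFunctions} are bilinear and obey the Leibniz rule with respect to the holonomy operators attached to the edges of $\mathbf\Gamma$ and $\mathbf\Gamma'$, it suffices to pick generic representatives $\Gamma,\Gamma'$ meeting only in finitely many transverse double points in the interiors of edges and to verify the formula for the corresponding functions on $\mathrm{Hom}(\pi_1(\Sigma_{g,b}),G)$. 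Writing $F_{[\mathbf\Gamma]}(A)=\langle\pi^A_\Gamma,\mathbf c^\Gamma\rangle$ as in \eqref{eq:FDefinition} and \eqref{eq:ParallelTransortVector}, with $\pi^A_\Gamma=\bigotimes_{e\in E(\Gamma)}P^A_e$ multilinear in the holonomy operators $P^A_e\in V_e\otimes V_e^*$, the Leibniz rule expresses $\{F_{[\mathbf\Gamma]},F_{[\mathbf\Gamma']}\}$ as a sum over ordered pairs $(e,e')\in E(\Gamma)\times E(\Gamma')$ of the pairing of $\mathbf c^\Gamma\otimes\mathbf c^{\Gamma'}$ against the tensor obtained from $\pi^A_\Gamma\otimes\pi^A_{\Gamma'}$ by replacing the two factors $P^A_e$ and $P^A_{e'}$ with the tensor of their Poisson brackets. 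Everything thus reduces to the bracket of two holonomies along single edges.

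For that bracket I would invoke the description of the Atiyah--Bott Poisson structure as the bracket induced on $G(V)$-invariant functions by the Fock--Rosly Poisson structure on the space of graph connections of an auxiliary simple graph $\widehat\Gamma$, equipped with a chosen edge ordering at each vertex, that refines $\Gamma\cup\Gamma'$ (subdivide every edge so that each crossing becomes a four-valent vertex). That Poisson structure is assembled from a classical $r$-matrix $r\in\mathfrak g\otimes\mathfrak g$ whose symmetric part is the quadratic Casimir $I=\sum_i t^i\otimes t_i$; the choices of $r$ and of the edge ordering affect only the bracket on the full ring $\mathcal O[G^{E(\widehat\Gamma)}]$ and not the bracket induced on $G(V)$-invariants, which is the canonical one and receives only the symmetric Casimir contributions localized at the crossings. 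Extracting the resulting local formula: if $e,e'$ cross transversally at a single interior point $p$ and we split $e=e_2e_1$, $e'=e_2'e_1'$ at $p$ (so $P^A_e=P^A_{e_2}P^A_{e_1}$, and likewise for $e'$), then the $p$-component of the bracket, viewed inside $(V_e\otimes V_e^*)\otimes(V_{e'}\otimes V_{e'}^*)$, equals $\epsilon(p)\,(P^A_{e_2}\otimes P^A_{e_2'})\,(\mathrm{ad}_{V_e}\otimes\mathrm{ad}_{V_{e'}})(I)\,(P^A_{e_1}\otimes P^A_{e_1'})$, with $\epsilon(p)=\pm1$ the sign of the crossing; brackets of holonomies along disjoint edges vanish, and the same formula holds verbatim at self-crossings and when $\Gamma=\Gamma'$.

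It remains to assemble the pieces. Fix a crossing $p\in e\cap e'$. Using gauge invariance we may take the connection to be trivial on a small disk around $p$, so in the colored graph $\mathbf\Gamma\star_p\mathbf\Gamma'$ the extra edge $e''$ colored by $\mathfrak g$ contributes the identity operator, i.e. the Casimir $I\in\mathfrak g\otimes\mathfrak g$, as its holonomy. Contracting this $I$ along $e''$ with the $\mathfrak g$-legs of the vertex colorings \eqref{eq:NewVertexColoring}, namely $(\mathrm{ad}_{V_e}\otimes\mathrm{Id}_{\mathfrak g})(I)$ at $a$ and $(\mathrm{ad}_{V_{e'}}\otimes\mathrm{Id}_{\mathfrak g})(I)$ at $b$, turns the pairing $\langle\pi^A_{\Gamma\star_p\Gamma'},\mathbf c^{\Gamma\star_p\Gamma'}\rangle$ into the pairing of $\mathbf c^\Gamma\otimes\mathbf c^{\Gamma'}$ against the tensor obtained from $\pi^A_\Gamma\otimes\pi^A_{\Gamma'}$ by inserting $(\mathrm{ad}_{V_e}\otimes\mathrm{ad}_{V_{e'}})(I)$ between $P^A_{e_1}$ and $P^A_{e_2}$ and between $P^A_{e_1'}$ and $P^A_{e_2'}$, all other colorings being inherited from $\mathbf\Gamma$ and $\mathbf\Gamma'$. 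By the local formula of the previous paragraph, $\epsilon(p)$ times this pairing is exactly the contribution of $p$ to the Leibniz sum, so that contribution is $\epsilon(p)\,F_{[\mathbf\Gamma\star_p\mathbf\Gamma']}(A)$. Summing over all $p\in\Gamma\cap\Gamma'$ yields \eqref{eq:PoissonBracketGraphFunctions}; as the right-hand side is a well-defined function on $\mathcal M^G_\Sigma$, the identity descends to the moduli space.

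The main obstacle is the second step: establishing the local holonomy bracket with the correct crossing sign and, above all, verifying that the choices of $r$-matrix and edge ordering do not affect the bracket induced on $G(V)$-invariants, so that only the symmetric Casimir contributions at honest geometric crossings survive. This is the genuine content of the Fock--Rosly and Andersen--Mattes--Reshetikhin descriptions of the Atiyah--Bott bracket; everything else is bookkeeping of tensor factors, dual bases, and orientations. A minor auxiliary point is that subdividing edges and passing to $\widehat\Gamma$ leaves $F_{[\mathbf\Gamma]}$ and $F_{[\mathbf\Gamma']}$ unchanged, which follows from Lemma \ref{lemm:GraphEquivalence} and Corollary \ref{cor-isom}.
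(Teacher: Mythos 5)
The paper does not actually prove this statement: it is quoted as a known result with attribution to \cite{FockRosly'1993,AndersenMattesReshetikhin'1996}, so there is no in-paper argument to compare against. Your sketch is a correct reconstruction of how those references establish the formula: the Leibniz reduction to a local computation at each transverse double point, and the identification of the Casimir insertion $(\mathrm{ad}_{V_e}\otimes\mathrm{ad}_{V_{e'}})(I)$ with the contribution of the new $\mathfrak g$-colored edge $e''$ together with the vertex colorings (\ref{eq:NewVertexColoring}) (using that $e''$ is contractible, so its holonomy is $\mathrm{Id}_{\mathfrak g}$), are both right, and your accounting of signs and tensor factors is consistent with the conventions of the paper. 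The one load-bearing step you defer --- that the Fock--Rosly bracket restricted to $G(V)$-invariant functions is independent of the choice of classical $r$-matrix and of the edge ordering at vertices, so that only the symmetric Casimir part localized at honest geometric crossings survives --- is exactly the content of the cited works, and you correctly flag it as such rather than claiming it for free. One small caution: your parenthetical remark that the local formula ``holds verbatim at self-crossings and when $\Gamma=\Gamma'$'' is glibber than the rest; in the Fock--Rosly picture, half-edges of the same graph meeting at a common vertex do pick up antisymmetric $r$-matrix contributions (cf.\ the self-bracket formulas (\ref{eq:SelfBracketOpen})--(\ref{eq:SelfBracketClosedLeft}) in Section \ref{sec:HamiltonFlowsRMatrix}), and showing these cancel on invariants is part of the same deferred step, not a verbatim repetition. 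Since the theorem as stated only concerns two graphs intersecting at edges, this does not affect your argument, but it should not be presented as immediate.
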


\section{Main Construction}
\label{sec:MainConstruction}

The following theorem describes superintegrable systems on moduli spaces flat connections.
Poisson commuting Hamiltonians of these systems are invariant functions of holonomies along
a systems of nonintersecting simple curves on the surface.

Denote by $Z_{\pa \Sigma}$ the Poisson center of the algebra of functions on the moduli space
of flat connection $A=\cO(\cM_\Sigma^G)$. The subalgebra $Z_{\pa \Sigma}$ is spanned by
$G$-invariant functions on holonomies around boundary components.

Let $C=C_1\sqcup\dots\sqcup C_k$ in $\Sigma$ be a disjoint union of pairwise nonhomotopic simple closed curves
none of which is homopotic to a boundary component. Define  $B_C\subset A$ as the subalgebra generated by graph functions $F_{\Gamma,V,c}$ with $\Gamma\subset C$. Similarly, define $J_{\Sigma\backslash C}$ as the subalgebra generated by graph functions
$F_{\Gamma,V,c}$ with $\Gamma\subset (\Sigma\backslash C)$.

\begin{theorem}
The system of subalgebras below defines an affine superintegrable system
\begin{align*}
    Z_{\partial\Sigma}\subset B_C\subset J_{\Sigma\backslash C}\subset A=\mathcal O[\mathcal M^G_\Sigma].
\end{align*}
\label{theorem:SuperIntSystemsCycles}
\end{theorem}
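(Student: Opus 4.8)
The plan is to verify the three requirements of Definition~\ref{def:SuperIntDim}: that we have a chain of inclusions of finitely generated Poisson subalgebras, that $\{Z_{\partial\Sigma},A\}=0=\{B_C,J_{\Sigma\backslash C}\}$, and the dimension identity. The first two are the easy part. Read $B_C$ as the subalgebra generated by the conjugation–invariant functions of the holonomies around $C_1,\dots,C_k$ and around the boundary components (equivalently, by graph functions contractible onto $C\sqcup\partial\Sigma$), so that $Z_{\partial\Sigma}\subset B_C$ by inspection; and a small pushoff into the open set $\Sigma\backslash C$ of a component of $C\sqcup\partial\Sigma$ is freely homotopic to it, so every generator of $B_C$ is the graph function of a graph lying in $\Sigma\backslash C$, giving $B_C\subset J_{\Sigma\backslash C}$. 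All three algebras are finitely generated (the first two are invariants of finitely many holonomies; $J_{\Sigma\backslash C}$ is the image of the finitely generated ring $\mathcal O[\mathcal M^G_{\Sigma_C}]$, see below), and they are Poisson subalgebras because the surgery $\star_p$ in the bracket rule \eqref{eq:PoissonBracketGraphFunctions} is supported near the intersection point and keeps graphs in the region where they started. For the orthogonality, $\{Z_{\partial\Sigma},A\}=0$ holds because $Z_{\partial\Sigma}$ is by hypothesis the Poisson centre; and for $\{B_C,J_{\Sigma\backslash C}\}=0$ one represents an arbitrary element of $B_C$ by a graph inside an arbitrarily thin collar of $C\sqcup\partial\Sigma$ and an arbitrary element of $J_{\Sigma\backslash C}$ by a compact graph in the open set $\Sigma\backslash C$ pushed off $\partial\Sigma$; these representatives are disjoint, so the sum in \eqref{eq:PoissonBracketGraphFunctions} is empty. (In particular $\{B_C,B_C\}=0$, since $B_C\subset J_{\Sigma\backslash C}$.)

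The substance is the dimension identity, which — using $2n=\dim\mathcal A-\dim\mathcal Z_{\partial\Sigma}$, itself immediate from the fact that $Z_{\partial\Sigma}$ is the Poisson centre (a generic symplectic leaf of the Goldman bracket is a level set of the boundary conjugacy classes, of codimension $\dim\mathcal Z_{\partial\Sigma}$) — reduces to $\dim\mathcal B_C+\dim\mathcal J_{\Sigma\backslash C}=\dim\mathcal A+\dim\mathcal Z_{\partial\Sigma}$. Here I would bring in the surface $\Sigma_C$ obtained by cutting $\Sigma$ along $C$: it has the $b$ original boundary circles and two copies $C_i^{\pm}$ of each $C_i$, no component is a disk or annulus (the $C_i$ are essential, pairwise non-isotopic and not boundary-parallel), so cutting preserves the Euler characteristic and \eqref{eq:dimrepscheme} gives $\dim\mathcal M^G_{\Sigma_C}=-\chi(\Sigma)\dim G=\dim\mathcal A$. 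Restriction of flat connections is a Poisson map $\rho\colon\mathcal M^G_\Sigma\to\mathcal M^G_{\Sigma_C}$. I would then prove two claims. \emph{Claim A:} $\dim\mathcal J_{\Sigma\backslash C}=\dim\mathcal A-k\,\mathrm{rank}\,G$. A graph in $\Sigma\backslash C$ is isotopic into $\Sigma\backslash N(C)\cong\Sigma_C$ and its function is $\rho^{*}$ of the corresponding graph function on $\mathcal M^G_{\Sigma_C}$; since graph functions span $\mathcal O[\mathcal M^G_{\Sigma_C}]$ by Theorem~\ref{prop:InvariantSpanningSet}, we get $J_{\Sigma\backslash C}=\rho^{*}\mathcal O[\mathcal M^G_{\Sigma_C}]$ and $\dim\mathcal J_{\Sigma\backslash C}=\dim\overline{\mathrm{im}\,\rho}$. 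A flat connection on $\Sigma_C$ glues back over $\Sigma$ exactly when the holonomies around $C_i^{+}$ and $C_i^{-}$ are conjugate for each $i$, so $\mathrm{im}\,\rho$ is the preimage, under the boundary–conjugacy–class map $q\colon\mathcal M^G_{\Sigma_C}\to (G\git G)^{b+2k}$, of the codimension–$k\,\mathrm{rank}\,G$ locus $\{[C_i^{+}]=[C_i^{-}]\}$; on the dense open locus where all $C_i^{\pm}$ holonomies are regular semisimple the map $q$ is submersive and transverse to this locus, so $\mathrm{codim}\,\overline{\mathrm{im}\,\rho}=k\,\mathrm{rank}\,G$. \emph{Claim B:} $\dim\mathcal B_C=\dim\mathcal Z_{\partial\Sigma}+k\,\mathrm{rank}\,G$. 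Since $\dim\mathcal B_C$ (resp.\ $\dim\mathcal Z_{\partial\Sigma}$) is the dimension of the image of $\mathcal M^G_\Sigma$ in $(G\git G)^{\{C_i\}\sqcup\partial\Sigma}$ (resp.\ in $(G\git G)^{\partial\Sigma}$), it is enough to see that the conjugacy classes of the $C_i$–holonomies vary in an independent $k\,\mathrm{rank}\,G$–dimensional family over fixed boundary classes; over $\mathbb C$ this follows because on each non-degenerate component of $\Sigma_C$ any one boundary conjugacy class sweeps out a full $\mathrm{rank}\,G$–dimensional family with the others fixed (multiplicative relations among conjugacy classes are open, not dimension–dropping, conditions), and $\rho$ identifies $C_i^{+}$ with $C_i^{-}$.

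Putting the claims together, $\dim\mathcal B_C+\dim\mathcal J_{\Sigma\backslash C}-2\dim\mathcal Z_{\partial\Sigma}=k\,\mathrm{rank}\,G+(\dim\mathcal A-k\,\mathrm{rank}\,G)-\dim\mathcal Z_{\partial\Sigma}=\dim\mathcal A-\dim\mathcal Z_{\partial\Sigma}=2n$, as required; the cases $g=0,\ b\le 2$ (where $k=0$) are degenerate and checked directly. I expect the genuinely delicate step to be the transversality bookkeeping underlying Claims A and B — showing that $\rho$ and the conjugacy–class maps attain their expected ranks and that the relevant constructible images have the expected dimension. I would handle this by working over the dense open subscheme of flat connections with regular semisimple holonomy around every curve of $C\sqcup\partial\Sigma$, where the gluing of connections along the $C_i$ is explicit and all the conjugacy–class maps are submersions, and by using the irreducibility of $\mathcal M^G_\Sigma$ (inherited from $\mathrm{Hom}(\pi_1(\Sigma),G)\cong G^{2g+b-1}$) to pass to a generic–point computation.
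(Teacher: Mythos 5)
Your proposal is correct in outline but follows a genuinely different route from the paper. The paper proves the theorem by induction on the number $m$ of connected components of $\Sigma\backslash C$: the base case ($m=1$, a nonseparating collection) is Proposition \ref{prop:NonseparatingCurve}, where the dimension balance is obtained from an explicit choice of free generators of $\pi_1(\Sigma,p)$ adapted to $C$, so that $\Spec J$ is literally a representation variety cut out by $r$ conjugacy constraints and both sides of (\ref{eq:SuperIntDim}) are read off directly; the inductive step is Proposition \ref{prop:SeparatingCurve}, a general gluing statement in which two superintegrable systems on $\Sigma_L$ and $\Sigma_R$ produce one on the glued surface with $B=B_L\otimes_H B_R$ and $J=J_L\otimes_H J_R$, the dimension count coming from additivity of transcendence degree over the fiber product. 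You instead cut along all of $C$ at once and compute $\dim\mathcal J_{\Sigma\backslash C}$ as the dimension of the closure of the image of the restriction map $\rho$ to $\mathcal M^G_{\Sigma_C}$, identified with the conjugacy-matching locus $\{[C_i^{+}]=[C_i^{-}]\}$; the arithmetic agrees ($\dim\mathcal A-\dim\mathcal J=k\,\mathrm{rank}\,G=\dim\mathcal B-\dim\mathcal Z$). Your version buys uniformity --- no separating/nonseparating case split and no induction --- at the price of the genericity statements you flag yourself: dominance of the boundary-conjugacy-class map $q$ on each component of $\Sigma_C$ (a Deligne--Simpson-type fact; for simple $G$ it follows from, e.g., surjectivity of the commutator map, but it does need to be stated) and transversality to the matching locus. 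The paper's explicit-generator computation avoids invoking dominance of $q$ altogether, and its gluing proposition is reused elsewhere (it underlies the refinement partial order), so the induction is not merely a proof device. Two small points to make explicit in your write-up: Theorem \ref{prop:InvariantSpanningSet} is stated for a connected surface while $\Sigma_C$ is disconnected, so you need the (easy) remark that the coordinate ring of the moduli space of a disjoint union is the tensor product of those of the pieces, each spanned by graph functions, before concluding $J_{\Sigma\backslash C}=\rho^{*}\mathcal O[\mathcal M^G_{\Sigma_C}]$; and you should note why no component of $\Sigma_C$ is an annulus (this uses that the $C_i$ are pairwise nonhomotopic and not boundary-parallel), since the annulus is exactly the exceptional case of the dimension formula (\ref{eq:dimrepscheme}).
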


In the rest of this section we will prove this theorem. In order to prove it we should prove the balance of dimensions.

\subsection{Trivial superintegrable system on a surface}

Let $\Sigma$ be an oriented surface of genus $g$ with $b>0$ boundary components
and let $\mathcal M_\Sigma^G$ be the moduli space of $G$-representations of the fundamental group $\pi_1(\Sigma)$. Assume that we have at least one boundary component. Denote by $Z\subset A=\mathcal O[\mathcal M_\Sigma^G]$ a subalgebra generated by  graph functions  $F_{[\mathbf\Gamma]}$, where $\mathbf\Gamma$ is a colored graph with a single vertex and a single edge homotopic to one of the boundary components of $\Sigma$.

\begin{lemma} The subalgebra  $Z\subset \mathcal O[\mathcal M_\Sigma^G]$ belongs to the Poisson center of $\mathcal O[\mathcal M_\Sigma^G]$.
\label{lemm:BoundaryPoissonCenter}
\end{lemma}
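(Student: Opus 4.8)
The plan is to use the description of the Poisson bracket in terms of graph functions, formula \eqref{eq:PoissonBracketGraphFunctions}, together with the freedom to choose isotopy representatives of immersed graphs. Let $F_{[\mathbf\Gamma']}$ be an arbitrary graph function on $\mathcal M_\Sigma^G$; by Theorem \ref{prop:InvariantSpanningSet} such functions span $A=\mathcal O[\mathcal M_\Sigma^G]$, so it suffices to show $\{F_{[\mathbf\Gamma]},F_{[\mathbf\Gamma']}\}=0$ for every generator $F_{[\mathbf\Gamma]}$ of $Z$, i.e. for $\Gamma$ a one-vertex one-edge graph whose edge is homotopic to a boundary component of $\Sigma$. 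Then extend by the Leibniz rule to all of $Z$.

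The first step is the key geometric observation: a simple closed curve homotopic to a boundary component of $\Sigma_{g,b}$ can be isotoped into an arbitrarily small annular collar of that boundary component. Given any representative $\Gamma'$ of $[\mathbf\Gamma']$, the graph $\Gamma'$ meets only a compact part of $\Sigma$, so we may push $\Gamma$ entirely into a collar neighborhood of $\partial\Sigma$ disjoint from $\Gamma'$. Hence we can choose isotopy representatives with $\Gamma\cap\Gamma'=\varnothing$. By formula \eqref{eq:PoissonBracketGraphFunctions} the bracket is a sum over intersection points, so with this choice the sum is empty and $\{F_{[\mathbf\Gamma]},F_{[\mathbf\Gamma']}\}=0$. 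Since the Poisson bracket is well-defined on isotopy classes (the bracket \eqref{eq:PoissonBracketGraphFunctions} computes a function on $\mathcal M_\Sigma^G$, which depends only on $[\mathbf\Gamma],[\mathbf\Gamma']$), this vanishing holds for all representatives.

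It remains to pass from generators of $Z$ to all of $Z$: if $\{b,F_{[\mathbf\Gamma']}\}=0$ and $\{b',F_{[\mathbf\Gamma']}\}=0$ then $\{bb',F_{[\mathbf\Gamma']}\}=b\{b',F_{[\mathbf\Gamma']}\}+\{b,F_{[\mathbf\Gamma']}\}b'=0$, and likewise $\{b+b',F_{[\mathbf\Gamma']}\}=0$; so the centralizer of the generating set is a subalgebra containing all of $Z$. Finally, since every element of $A$ is a polynomial in graph functions $F_{[\mathbf\Gamma']}$, another application of the Leibniz rule gives $\{Z,A\}=0$, i.e. $Z$ lies in the Poisson center of $\mathcal O[\mathcal M_\Sigma^G]$.

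The only subtle point — the step I would expect a referee to want spelled out — is the claim that a boundary-parallel curve can be made disjoint from an arbitrary immersed graph by isotopy. This is where one uses that $[\mathbf\Gamma']$ has a representative supported in a compact core of $\Sigma$ together with the fact that a neighborhood of each boundary component is an annulus $C_i\times[0,1)$ containing no essential part of $\Gamma'$; everything else is a routine application of the Leibniz rule and the already-established fact that graph functions span $A$.
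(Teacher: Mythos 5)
Your proof is correct and follows exactly the paper's own argument: choose disjoint isotopy representatives so that the sum over intersection points in \eqref{eq:PoissonBracketGraphFunctions} is empty, then use Theorem \ref{prop:InvariantSpanningSet} (graph functions span the coordinate ring) together with the Leibniz rule to conclude $\{Z,A\}=0$. The extra detail you supply about pushing a boundary-parallel curve into a collar is a welcome elaboration of the step the paper leaves implicit, but the route is the same.
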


\begin{proof}
Let $[\mathbf\Gamma_1],[\mathbf\Gamma_2]$ be a pair of isotopy equivalence classes of colored immersed graphs such that $[\Gamma_1]$ is homotopic to one of the boundary components. Then there exist members $\mathbf\Gamma_1$ and $\mathbf\Gamma_2$ with a trivial intersection. From (\ref{eq:PoissonBracketGraphFunctions}) we get
\begin{align}
\{F_{[\mathbf\Gamma_1]},F_{[\mathbf\Gamma_2]}\}=0.
\label{eq:BoundaryCommutes}
\end{align}
By Theorem \ref{prop:InvariantSpanningSet}, invariant graph functions of the form $F_{[\mathbf{\Gamma_2}]}$ span the entire coordinate ring, hence we conclude that $F_{[\mathbf{\Gamma_1}]}$ belongs to the Poisson center of $\mathcal O[\mathcal M_\Sigma^G]$.
\end{proof}

Note that $Z_0$ is finitely generated, indeed it is generated by $b$ copies of $\mathcal O[G]^G$, one for each boundary component. Every $\mathcal O[G]^G$ is in turn finitely generated. It has the following Krull dimension:
\begin{align}
\dim Z_0=\left\{\begin{array}{ll}
\mathop{rk}(G),&g=0,\,b=2,\\[5pt]
b\mathop{rk}(G),&\mathrm{otherwise}.
\end{array}\right.
\label{eq:dimcas}
\end{align}

\begin{proposition}
Inclusion $ Z\subset Z\subset A\subset A$ defines an affine superintegrable system.
\end{proposition}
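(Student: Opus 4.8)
The plan is to check, for the degenerate chain $Z\subset B\subset J\subset A$ with $B=Z$ and $J=A$, the two requirements of Definition~\ref{def:SuperIntDim}. The bracket conditions are immediate from what is already proved: both $\{Z,A\}=0$ and $\{B,J\}=\{Z,A\}=0$ are literally the statement of Lemma~\ref{lemm:BoundaryPoissonCenter}. In particular $Z$ is Poisson commutative, so $Z$ and $A$ are genuine Poisson subalgebras and the chain is meaningful; finite generation of $Z$ was noted above (it is generated by the $b$ copies of $\mathcal O[G]^G$ attached to the boundary holonomies), and $A=\mathcal O[\mathcal M_\Sigma^G]$ is finitely generated as a ring of invariants of a finitely generated algebra. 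Everything thus comes down to the dimension balance~\eqref{eq:SuperIntDim}.

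For $B=Z$, $J=A$ the left-hand side of~\eqref{eq:SuperIntDim} collapses tautologically, $\dim\mathcal B+\dim\mathcal J-2\dim\mathcal Z=\dim Z+\dim A-2\dim Z=\dim\mathcal A-\dim\mathcal Z$, so the only genuine point is to identify this number with the maximal rank $2n$ of the bracket, i.e. to prove
\begin{align*}
2n=\dim\mathcal M_\Sigma^G-\dim Z.
\end{align*}
I would establish the two inequalities separately. Since by Lemma~\ref{lemm:BoundaryPoissonCenter} every element of $Z$ is a Casimir, and $Z$ contains $\dim Z$ algebraically independent functions (for $g\ge1$ or $b\ge3$ the conjugacy classes of the boundary holonomies $z_1,\dots,z_b$ can be prescribed independently, while for $g=0,\,b=2$ one has $z_2=z_1^{-1}$ and correspondingly $\dim Z=\mathrm{rk}(G)=\dim\mathcal M_\Sigma^G$), every symplectic leaf is contained in a fibre of $\Spec A\to\Spec Z$, whence $2n\le\dim\mathcal M_\Sigma^G-\dim Z$. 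For the reverse inequality I would invoke the classical description of the symplectic foliation of the Atiyah--Bott--Goldman bracket on a surface with boundary \cite{AtiyahBott'1983,Goldman'1986,BiswasGuruprasad'1993}: its symplectic leaves are the loci obtained by fixing the conjugacy classes of all $b$ boundary holonomies, i.e. the fibres of the map $\mathcal M_\Sigma^G\to(G\git G)^{b}$ recording boundary conjugacy classes whose pullback is $\Spec A\to\Spec Z$, and a generic leaf, the moduli space of flat connections with generic boundary conjugacy classes $\mathcal C_1,\dots,\mathcal C_b$, is symplectic of dimension $(2g-2)\dim G+\sum_i\dim\mathcal C_i=(2g-2+b)\dim G-b\,\mathrm{rk}(G)$. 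By~\eqref{eq:dimrepscheme} and~\eqref{eq:dimcas} this equals $\dim\mathcal M_\Sigma^G-\dim Z$, giving equality. The degenerate cases $g=0,\,b=1$ (where $\mathcal M_\Sigma^G$ is a point) and $g=0,\,b=2$ (where $Z=A$ and $2n=0$) satisfy~\eqref{eq:SuperIntDim} trivially.

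The main obstacle is the reverse inequality: one must know that $Z$, built only from invariants of the \emph{individual} boundary holonomies, already exhausts the Poisson centre of $A$, and that the generic fibres of $\Spec A\to\Spec Z$ carry a maximal-rank rather than a merely lower-rank Poisson structure. A self-contained alternative to quoting \cite{AtiyahBott'1983,Goldman'1986,BiswasGuruprasad'1993} would be to read this off the bracket formula~\eqref{eq:PoissonBracketGraphFunctions} together with Theorem~\ref{prop:InvariantSpanningSet} and Corollary~\ref{cor-isom}: a graph function Poisson-commutes with every other graph function only if its supporting curve can be isotoped off every immersed curve, which — after a single transverse intersection gives an uncancelled term in~\eqref{eq:PoissonBracketGraphFunctions} — forces it, modulo the equivalences of Lemma~\ref{lemm:GraphEquivalence}, to be a function of the boundary holonomies; hence the full Poisson centre coincides with $Z$, and the dimension counts~\eqref{eq:dimrepscheme} and~\eqref{eq:dimcas} pin down $2n$. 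In either approach, once $2n=\dim\mathcal A-\dim\mathcal Z$ is in hand all requirements of Definition~\ref{def:SuperIntDim} hold, and the trivial chain $Z\subset Z\subset A\subset A$ is an affine superintegrable system.
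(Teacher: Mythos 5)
Your proof is correct and follows essentially the same route as the paper's: both reduce the matter to showing $2n=\dim\mathcal A-\dim\mathcal Z$ after noting finite generation and the centrality of $Z$ via Lemma~\ref{lemm:BoundaryPoissonCenter}, and both settle the rank statement by appealing to the classical description of the symplectic foliation by boundary conjugacy classes \cite{GuruprasadHuebschmannJeffreyWeinstein'1997,BiswasGuruprasad'1993}. Your version merely spells out the two inequalities, the explicit dimension count, and the degenerate cases that the paper leaves implicit.
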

\begin{proof}
First, note that $Z$ is finitely generated, indeed it is generated by $b$ copies of $\mathcal O[G]^G$, one for each boundary component. Every $\mathcal O[G]^G$ is in turn finitely generated.

By Lemma \ref{lemm:BoundaryPoissonCenter} we know that $Z$ is a subalgebra of the Poisson center of $A$. On the other hand, by results of \cite{GuruprasadHuebschmannJeffreyWeinstein'1997,BiswasGuruprasad'1993} we know that the general fiber of $\mathcal M_\Sigma^G\rightarrow\Spec Z$ is equipped with Poisson bracket of rank $\dim\mathcal M_\Sigma^G-\dim\Spec Z$.
\end{proof}

\subsection{Superintegrable system associated to non-separating curve}
Let $\Sigma$ be an oriented surface of genus $g$ with $b\geq 1$ boundary components. There exists no nonseparating curves on genus zero surface, so throught this section we can assume without loss of generality that $g>0$. Let $C=C_1\sqcup C_2\sqcup \dots\sqcup C_r$ be a nonseparating curve which is a disjoint union of $r,\;1\leqslant r\leqslant g$ simple closed curves on $\Sigma$ (see Figure \ref{fig:NonseparatingCurve}).
\begin{figure}
\begin{tikzpicture}[scale=1.5]
    \newcommand{\surfacecycle}[5][0]{
        \draw[thick, red] (#2,#3) to[out=180+#1,in=180+#1,looseness=0.5] (#4,#5);
        \draw[thick, dashed, red] (#2,#3) to[out=#1,in=#1,looseness=0.5] (#4,#5);
    }
    \surfacecycle{0}{3}{0}{2.27};
    \surfacecycle[225]{-2.93}{2.33}{-2.6}{1.8};
    \draw[thick,style={decoration={
  markings,
  mark=at position .32 with {\arrow{>}}},postaction={decorate}}] (0,0.5) to[out=80,in=240] (0.5,2.1) to[out=60,in=0] (0,2.6) to[out=180,in=120] (-0.5,2.1) to[out=300,in=100] (0,0.5);
    \draw[thick,->--] (0,0.5) to[out=160,in=270] (-1.9,1.4) to[out=90,in=30] (-2.75,2.1) to[out=210,in=140] (-2.5,1.1) to[out=320,in=170] (-1.3,0.5) to[out=350,in=190] (0,0.5);
    \fill (0,0.5) circle (0.06);
    \fill[gray] (0,0.5) circle (0.04);
    \draw[ultra thick] (-4.8,0) to[out=0,in=270] (-2.6,0.5) to[out=90,in=290] (-3.3,1.5) to[out=110,in=200] (-2.6,2.5) to[out=20,in=110] (-1.5,1.3) to[out=290,in=180] (-1,1) to[out=0,in=270] (-0.5,1.5) to[out=90,in=270] (-0.9,2.3) to[out=90,in=180] (0,3) to[out=0, in=90] (0.9,2.3) to[out=270,in=90] (0.5,1.5) to[out=270,in=180] (1,1) to[out=0,in=250] (1.5,1.3) to[out=70,in=160] (2.6,2.5) to[out=340,in=70] (3.3,1.5) to[out=250,in=90] (2.6,0.5) to[out=270,in=180] (4.8,0);
    \newcommand{\surfacehole}[3][0]{
        \draw[ultra thick, rotate around={#1:(#2,#3)}] (#2-0.2,#3+0.2) to[out=40,in=140] (#2+0.2,#3+0.2);
        \draw[ultra thick, rotate around={#1:(#2,#3)}] (#2-0.3,#3+0.25) to[out=320,in=220] (#2+0.3,#3+0.25);
        };
    \surfacehole{0}{2};
    \surfacehole[-45]{-2.6}{1.5};
    \surfacehole[45]{2.6}{1.5};
    \draw (0,0.4) node[below] {$p$};
    \draw (-3.1,2.5) node[red, above] {$C_1$};
    \draw (0, 3.1) node[red, above] {$C_2$};
    \draw (-1.63,1.2) node {$x_1$};
    \draw (0.68,2.15) node {$x_2$};
\end{tikzpicture}
\caption{An example of a nonseparating Curve}
\label{fig:NonseparatingCurve}
\end{figure}

The fundamental group $\pi_1(\Sigma,p)$ is a free group with $2g+b-1$ generators, it admits the following presentation
\begin{align}
    \pi_1(\Sigma,p)=&\left\langle x_1,y_1,\dots,x_g,y_g,z_1,\dots,z_{b-1}\right\rangle,
\label{eq:GeneratorsNonseparating}
\end{align}
where the first $2r$ generators $x_i,\;y_i,\;1\leqslant i\leqslant r$ are chosen for handles containing $C_1,\dots,C_r$ as shown on Figure \ref{fig:NonSeparatingCycle}. This group contains a subgroup which consists of paths that do not cross $C$
\begin{equation}
\begin{tikzcd}
\pi_1(\Sigma,p\,|\,C)\ar[r,subset]&\pi_1(\Sigma,p)
\end{tikzcd}
\label{eq:subgroupC}
\end{equation}
\begin{align*}
    \pi_1(\Sigma,p\,|\,C)=\left\langle y_1,\dots, y_k, x_1y_1x_1^{-1},\dots, x_ry_rx_r^{-1},x_{r+1},y_{r+1},\dots,x_g,y_g,z_1,\dots,z_{b-1}\right\rangle.
\end{align*}
\begin{figure}
\begin{tikzpicture}[scale=1.5]
  \draw[thick, red] (-1.5,0) to[out=90,in=90,looseness=0.9] (-0.5,0);
  \draw[thick, dashed, red] (-1.5,0) to[out=270,in=270,looseness=0.9] (-0.5,0);
  \draw[thick, ->--] (1.5,0.2) to[out=155,in=0] (0,0.9) to[out=180,in=90] (-1,0.1) to[out=270,in=180] (0,-0.7) to[out=0,in=225] (1.5,0.2);
  \draw[thick,-<-] (1.5,0.2) to[out=100,in=-20] (0.9,0.95) to[out=160,in=220,looseness=0.6] (0,1.5);
  \draw[thick,dashed] (0,1.5) to[out=0,in=0,looseness=0.5] (0,0.21);
  \draw[thick] (0,0.21) to[out=180,in=200] (0.2,0.4) to[out=20,in=180] (1.5,0.2);
  \fill (1.5,0.2) circle (0.06);
  \fill[gray] (1.5,0.2) circle (0.04);
  \draw[ultra thick] (0,1.5) to[out=0,in=180] (2,0.8);
  \draw[ultra thick] (0,1.5) arc (90:270:1.5);
  \draw[ultra thick] (0,-1.5) to[out=0,in=180] (2,-0.8);
  \draw[ultra thick] (-0.5,0) to[out=50,in=130] (0.5,0);
  \draw[ultra thick] (-0.7,0.2) to[out=-60,in=240] (0.7,0.2);
  \draw (1.5,1.3) node[left] {$y_1$};
  \draw (-0.3,1.1) node {$x_1$};
  \draw (-1.9,0) node[red] {$C_1$};
\end{tikzpicture}
\caption{Connected Component of Nonseparating Curve}
\label{fig:NonSeparatingCycle}
\end{figure}
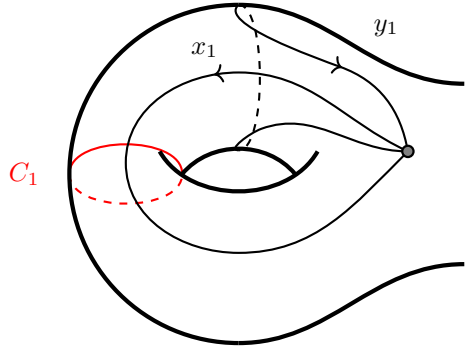

A $G$-representation of $\pi_1(\Sigma,p)$ is determined by $2g+b-1$ elements of $G$
\begin{equation}
\begin{aligned}
    x_i\mapsto X_i,\quad y_i\mapsto Y_i,\quad z_j\mapsto Z_j,\qquad 1\leqslant i\leqslant g,\quad 1\leqslant j\leqslant b-1.
\end{aligned}
\label{eq:NonseparatingCycleGRepresentation}
\end{equation}
As a corollary, $\mathrm{Hom}(\pi_1(\Sigma,p),G)$ is an affine variety with the coordinate ring $\mathcal O[\mathrm{Hom}(\pi_1(\Sigma,p),G)]\simeq\mathcal O[G]^{\otimes(2g+b-1)}$ isomorphic to the coordinate ring of $2g+b-1$ copies of $G$.

Inclusion (\ref{eq:subgroupC}) defines an inclusion of coordinate rings
\begin{equation}
\begin{tikzcd}
    \mathcal O[\mathrm{Hom}(\pi_1(\Sigma,p\,|\,C),G)]\ar[r,subset]&\mathcal O[\mathrm{Hom}(\pi_1(\Sigma,p),G)]\\
    \mathcal O[\mathrm{Hom}(\pi_1(\Sigma,p\,|\,C),G)]^G\ar[u,subset]\ar[r,subset]&\mathcal O[\mathrm{Hom}(\pi_1(\Sigma,p),G)]^G\ar[u,subset]
\end{tikzcd}
\label{eq:PoissonInclusionNonSeparating}
\end{equation}

At the same time, $\pi_1(\Sigma,p\,|\,C)$ contains $b$ infinite cyclic subgroups
generated by boundary cycles.
Let $Z_{\partial\Sigma}$ be a subalgebra of $\mathcal O[\mathcal M_\Sigma^G]$ generated by traces of powers of holonomies about the boundary components, or, equivalently, generated by $\mathcal O[\mathrm{Hom}(\langle z_i\rangle,G)]^G$ for $1\leqslant i\leqslant b$.

Similarly, let $B_C$ be a subalgebra of $\mathcal O[\mathcal M_\Sigma^G]$ generated by $Z_{\partial\Sigma}$ together with traces of holonomies about the cycles $C_i$, in other words, by $\mathcal O[\mathrm{Hom}(\langle y_i\rangle,G)]^G$ for $1\leqslant i\leqslant r$.

\begin{proposition}
The following chain of subalgebras
\begin{align*}
    Z_{\partial\Sigma}\;\subset\;B_C\;\subset\;J_{\Sigma\backslash C}\;\subset\;\mathcal O[\mathcal M_\Sigma^G],
\end{align*}
defines an affine superintegrable system. Here $J=\mathcal O[\mathrm{Hom}(\pi_1(\Sigma,p\,|\,C),G)]^G$ stands for the coordinate ring of the $G$-character variety of $\pi_1(\Sigma,p\,|\,C)$.
\label{prop:NonseparatingCurve}
\end{proposition}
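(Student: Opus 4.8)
We verify the four requirements of Definition~\ref{def:SuperIntDim} for the chain $Z_{\partial\Sigma}\subset B_C\subset J_{\Sigma\backslash C}\subset A$ with $A=\mathcal O[\mathcal M_\Sigma^G]$. The inclusions are essentially built into the definitions: the cycles $z_1,\dots,z_b$ and $y_1,\dots,y_r$ all lie in $\pi_1(\Sigma,p\,|\,C)$, so the subalgebras generated by the invariant functions of their holonomies sit inside $J_{\Sigma\backslash C}$ (which we read as the subalgebra of $A$ spanned by graph functions of graphs contained in $\Sigma\backslash C$, equivalently the image in $A$ of $\mathcal O[\mathrm{Hom}(\pi_1(\Sigma,p\,|\,C),G)]^G$ under restriction, cf.\ \eqref{eq:subgroupC}--\eqref{eq:PoissonInclusionNonSeparating} and Corollary~\ref{cor-isom}). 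All four algebras are finitely generated: $Z_{\partial\Sigma}$ and $B_C$ are generated by $b$, resp.\ $b+r$, copies of the finitely generated ring $\mathcal O[G]^G$; $J_{\Sigma\backslash C}$ is an invariant ring of the reductive group $G$ on an affine variety; and $A$ is finitely generated.

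For the Poisson conditions, $\{Z_{\partial\Sigma},A\}=0$ is Lemma~\ref{lemm:BoundaryPoissonCenter}. To see $\{B_C,J_{\Sigma\backslash C}\}=0$, note that $B_C$ is generated by graph functions of curves isotopic to some $C_i$ or to a boundary component, while $J_{\Sigma\backslash C}$ is generated by graph functions $F_{\Gamma,V,c}$ with $\Gamma\subset\Sigma\backslash C$; any such pair admits disjoint representatives on $\Sigma$ (take a parallel copy of $C_i$, or push the boundary curve into its collar), so the bracket of the corresponding generators vanishes by the intersection formula \eqref{eq:PoissonBracketGraphFunctions}, and since $\{\,,\,\}$ is a biderivation this propagates to all of $B_C$ and $J_{\Sigma\backslash C}$. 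The same disjointness argument, applied to $C_i$ versus $C_j$ and to two parallel copies of one $C_i$, shows $B_C$ is Poisson commutative; and since the graph $\mathbf\Gamma\star_p\mathbf\Gamma'$ in \eqref{eq:PoissonBracketGraphFunctions} lies in $\Sigma\backslash C$ whenever $\Gamma,\Gamma'$ do, $J_{\Sigma\backslash C}$ is closed under the bracket. So all three are genuine Poisson subalgebras.

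It remains to establish the balance of dimensions \eqref{eq:SuperIntDim}. Since $C$ is non-separating we have $g>0$, hence $\dim\mathcal A=(2g+b-2)\dim G$ by \eqref{eq:dimrepscheme} and $\dim\mathcal Z_{\partial\Sigma}=b\,\mathrm{rk}(G)$ by \eqref{eq:dimcas}; moreover $Z_{\partial\Sigma}$ is the full Poisson center of $A$, and by the cited description of symplectic leaves a generic fibre of $\mathcal M_\Sigma^G\to\Spec Z_{\partial\Sigma}$ carries a bracket of full rank, so the rank of $\{\,,\,\}$ on $A$ is $2n=\dim\mathcal A-\dim\mathcal Z_{\partial\Sigma}$. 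For $\dim\mathcal J_{\Sigma\backslash C}$ I would use the presentation \eqref{eq:GeneratorsNonseparating}: $\pi_1(\Sigma,p\,|\,C)$ is free of rank $2g+b-1$, and in these coordinates the restriction map $\mathrm{Hom}(\pi_1(\Sigma,p),G)\to\mathrm{Hom}(\pi_1(\Sigma,p\,|\,C),G)$ sends $(X_i,Y_i,Z_j)$ to the tuple whose first $r$ entries are $Y_1,\dots,Y_r$, whose next $r$ entries are $X_1Y_1X_1^{-1},\dots,X_rY_rX_r^{-1}$, and whose remaining entries are $X_{r+1},Y_{r+1},\dots,X_g,Y_g,Z_1,\dots,Z_{b-1}$. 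Its generic fibre is a product of $r$ cosets of the centralizers of $Y_1,\dots,Y_r$ in $G$, of dimension $r\,\mathrm{rk}(G)$, so the image has dimension $(2g+b-1)\dim G-r\,\mathrm{rk}(G)$; it is conjugation-stable with generic stabilizer $Z(G)$, whence $\dim\mathcal J_{\Sigma\backslash C}=(2g+b-2)\dim G-r\,\mathrm{rk}(G)$. Finally $\mathcal B_C$ is the closure of the image of the map $\chi\colon\mathcal M_\Sigma^G\to(G\git G)^{b}\times(G\git G)^{r}$ recording the conjugacy classes of the holonomies around $\partial\Sigma$ and around $C_1,\dots,C_r$; granting that $\chi$ is dominant one gets $\dim\mathcal B_C=(b+r)\,\mathrm{rk}(G)$, and then
\[
\dim\mathcal B_C+\dim\mathcal J_{\Sigma\backslash C}-2\dim\mathcal Z_{\partial\Sigma}=(2g+b-2)\dim G-b\,\mathrm{rk}(G),
\]
which is exactly $\dim\mathcal A-\dim\mathcal Z_{\partial\Sigma}=2n$, i.e.\ \eqref{eq:SuperIntDim}.

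The main obstacle is the dominance of $\chi$ — that the conjugacy classes of the boundary holonomies and of the holonomies about $C$ can be prescribed independently on a dense subset. Fixing representatives of all but one boundary class, the defining relation $[x_1,y_1]\cdots[x_g,y_g]z_1\cdots z_b=1$ reduces this to showing that a product $[X_1,Y_1]\cdots[X_g,Y_g]$, with $Y_1,\dots,Y_r$ fixed and the remaining factors free, can be made to meet a prescribed generic translate of a regular conjugacy class. When $g>r$ this is immediate from surjectivity of the commutator map of a simple group; in general it is the assertion that the relative $G$-character variety of $\Sigma$ with generic prescribed holonomy data along $\partial\Sigma\sqcup C$ is non-empty, which is standard for surfaces of positive genus and which I would invoke rather than reprove. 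Once this is granted, together with the dimension computations above the claim follows.
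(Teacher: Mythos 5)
Your proposal is correct and follows essentially the same route as the paper: the Poisson conditions come from disjoint representatives and the intersection formula \eqref{eq:PoissonBracketGraphFunctions}, $\dim\mathcal J_{\Sigma\backslash C}$ is computed from the generic fibre (a product of $r$ centralizer cosets) of the restriction map to $\mathrm{Hom}(\pi_1(\Sigma,p\,|\,C),G)$, and $\dim\mathcal B_C-\dim\mathcal Z_{\partial\Sigma}=r\,\mathrm{rank}(G)$, giving the same dimension balance. The only difference is that you make explicit, and correctly flag as the nontrivial input, the dominance of the conjugacy-class map $\chi$ (equivalently \eqref{eq:NonSepDim1}), which the paper asserts without comment.
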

\begin{proof}
By construction, all algebras involved are finitely generated and from (\ref{eq:PoissonBracketGraphFunctions}) it follows immediately that $\{Z,A\}=\{B,J\}=0$. Hence $Z$ is a subalgebra of the Poisson center of $\mathcal O[\mathcal M_\Sigma^G]$ and $B$ is Poisson commutative and centralized by $J$.
We have
\begin{align}
    \dim\Spec B_C-\dim\Spec Z_{\partial\Sigma}=r\,\mathrm{rank}(G).
    \label{eq:NonSepDim1}
\end{align}
On the other hand, since any representation of $\pi_1(\Sigma,p\,|\,C)$ is determined by a $2g+b-1$ elements of $G$
\begin{align*}
    Y_i,\quad \widetilde Y_j,\quad X_k,\quad Z_l,\qquad 1\leqslant i\leqslant g,\quad 1\leqslant j\leqslant r,\quad r+1\leqslant k\leqslant g,\quad 1\leqslant l\leqslant b-1,
\end{align*}
such that $Y_j\sim\widetilde Y_j$ belong to the same conjugacy class for all $j, 1\leqslant j\leqslant r$, we obtain
\begin{align}
    \dim \mathcal M_\Sigma^G-\dim\Spec J_{\Sigma\backslash C}=r\,\mathrm{rank}(G).
    \label{eq:NonSepDim2}
\end{align}
Combining (\ref{eq:NonSepDim1}) with (\ref{eq:NonSepDim2}) we conclude that balance of dimensions (\ref{eq:SuperIntDim}) holds.
\end{proof}

\subsection{Product of Surface Superintegrable systems associated with gluing}

\subsubsection{Choice of presentation of the fundamental groupoid.}
Let $\Sigma$ be an oriented surface with $b>0$ boundary components and $C=C_1\sqcup\dots\sqcup C_r$ be a disjoint union of $r$ simple closed curves in $\Sigma$ which cuts the surface into two parts $\Sigma=\Sigma_L\#_C\Sigma_R$. As in the statement of Theorem \ref{theorem:SuperIntSystemsCyclesIntro} we assume that none of the simple curves is homotopic to a boundary and all $C_1,\dots,C_r$ are pairwise nonhomotopic. As a corollary, neither $\Sigma_L$, nor $\Sigma_R$ can be a cylinder. Further, because surface $\Sigma$ has nontrivial boundary $\partial\Sigma\neq\emptyset$, at least one of the surfaces $\Sigma_L,\Sigma_R$ has a boundary component not in $C$. Without loss of generality we can assume that $\partial\Sigma_L\supsetneq C$.
\begin{figure}
\begin{tikzpicture}[scale=1.5]
  \draw[thick, red] (0,-0.8) to[out=180,in=180,looseness=0.5] (0,0.8);
  \draw[thick, dashed, red] (0,-0.8) to[out=0,in=0,looseness=0.5] (0,0.8);
  \draw[thick,->-] (-1.1,0.3) to[out=-20,in=200] (1.1,0.3);
  \draw[thick,->-] (1.1,0.3) to[out=135,in=220] (0.3,0.835);
  \draw[thick, dashed] (0.3,0.835) to[out=0,in=0,looseness=0.5] (0.3,-0.835);
  \draw[thick] (0.3,-0.835) to[out=160,in=250,looseness=0.7] (1.1,0.3);
  \draw[thick] (-1.1,0.3) to[out=0,in=160,looseness=0.8] (-0.6,0.92);
  \draw[thick, dashed] (-0.6,0.92) to[out=0,in=0,looseness=0.5] (-0.6,-0.92);
  \draw[thick,->-] (-0.6,-0.92) to[out=190,in=-40,looseness=0.8] (-1.1,0.3);
  \fill (1.1,0.3) circle (0.06);
  \fill[gray] (1.1,0.3) circle (0.04);
  \fill (-1.1,0.3) circle (0.06);
  \fill[gray] (-1.1,0.3) circle (0.04);
  \draw[ultra thick] (-4,0.8) to[out=0,in=180] (-2,1.5) to[out=0,in=180] (0,0.8) to[out=0,in=180] (2,1.5) to[out=0,in=180] (4,0.8);
  \draw[ultra thick] (-4,-0.8) to[out=0,in=180] (-2,-1.5) to[out=0,in=180] (0,-0.8) to[out=0,in=180] (2,-1.5) to[out=0,in=180] (4,-0.8);
  \draw[ultra thick] (-2.2,0) to[out=40,in=140] (-1.8,0);
  \draw[ultra thick] (-2.3,0.05) to[out=320,in=220] (-1.7,0.05);
  \draw[ultra thick] (1.8,0) to[out=40,in=140] (2.2,0);
  \draw[ultra thick] (1.7,0.05) to[out=320,in=220] (2.3,0.05);
  \draw (0,0.2) node[above] {$e_1$};
  \draw (1.05,0.75) node[left] {$b_1$};
  \draw (-1.38,-0.45) node[right] {$a_1$};
  \draw (-1.25,0.3) node[above] {$p$};
  \draw (1.25,0.3) node[below] {$q$};
  \draw (0,-1) node[below, red] {$C_1$};
\end{tikzpicture}
\caption{Connected component of a separating curve}
\label{fig:SeparatingCycle}
\end{figure}
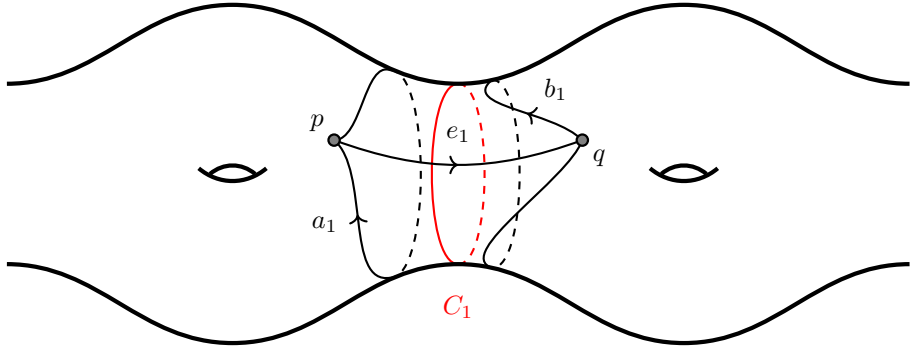

Because $\Sigma_L$ and $\Sigma_R$ both have at least one boundary component, their fundamental groups are free, say with $m$ and $n$ generators respectively. It will be convenient for us to choose the following presentation
\begin{align}
    \pi_1(\Sigma_L,p)=\langle a_1,\dots,a_m\rangle,\qquad
    \pi_1(\Sigma_R,q)=\langle b_1,\dots,b_n,b_{n+1}\,|\, b_1=F(b_2,\dots,b_{n+1})\rangle.
\label{eq:LRFundamentalGroupsDef}
\end{align}
Here $a_1,\dots,a_m$ are free generators of $\pi_1(\Sigma_L)$. Because $\partial\Sigma_L\supsetneq C$, we can choose the first $r$ generators $a_j,\;1\leqslant j\leqslant r$ to be homotopic to $C_j$ as shown on Figure \ref{fig:SeparatingCycle}. At the same time, for $\Sigma_R$ we will choose the presentation as follows: Let $b_2,\dots,b_{n+1}$ correspond to closed paths from $q$ to $q$ which freely generate $\pi_1(\Sigma_R,q)$. Without loss of generality we can assume that $b_2,\dots,b_r$ are homotopic to $C_2,\dots, C_r$. Now, let $b_1$ be the closed path homotopic to $C_1$, it can be expressed as a word in other generators $b_1=F(b_2,\dots,b_{n+1})$.\footnote{Adding extra generator $b_1$ for $\pi_1(\Sigma_R,q)$ satisfying relation $b_1=F(b_2,\dots,b_{n+1})$ allows us to treat both cases $\partial\Sigma_R=C$ and $\partial\Sigma_R\supsetneq C$ simultaneously without any changes in notation.}

Denote by $\pi_1(\Sigma,p,q)$ the full subcategory of the fundamental grouppoid $\pi_1(\Sigma)$ with two objects $p,q\in\Sigma$. As a grouppoid $\pi_1(\Sigma,p,q)$ is generated by the following set of morphisms
\begin{align}
    \pi_1(\Sigma,p,q)=\left\langle a_1,\dots,a_m,e_1,\dots,e_r,b_1,\dots,b_n,b_{n+1}\;\Big|\;
    \begin{array}{c}
    a_1=e_1^{-1}b_1 e_1,\dots,a_r=e_r^{-1}b_re_r,\\
    b_1=F(b_2,\dots,b_{n+1})
    \end{array}\right\rangle,
    \label{eq:CpqDef}
\end{align}
subject to $r+1$ relations.

Note that $\pi_1(\Sigma,p,q)$ is freely generated by $a_{r+1},\dots,a_m,e_1,\dots,e_r,b_2,\dots,b_{n+1}$. Presentation (\ref{eq:CpqDef}), however, is chosen to best describe inclusions of coordinate rings of $G$-representations discussed in the following subsection.

\subsubsection{Coordinate rings of $G$-representations and relations between them.} Any $G$-representation of $\pi_1(\Sigma,p,q)$ is defined by $m+n+r+1$ elements subject to $r+1$ relations:
\begin{equation}
\begin{aligned}
   &A_1,\dots,A_m,E_1,\dots, E_r,B_1,\dots,B_n,B_{n+1}\quad\in\quad G,\\
   &B_jE_j=E_jA_j,\quad 1\leqslant j\leqslant r,\qquad B_1=F(B_2,\dots,B_{n+1}).
\end{aligned}
\label{eq:CpqRepRelations}
\end{equation}
This makes the set $\mathrm{Hom}(\pi_1(\Sigma,p,q),G)$ of all $G$-representations of $\pi_1(\Sigma,p,q)$ into an algebraic subset of $G^{\times(m+n+r+1)}$. Moreover, $\mathrm{Hom}(\pi_1(\Sigma,p,q),G)$ comes equipped with a $G\times G$ action:
\begin{equation}
\begin{aligned}
    A_j\mapsto&\; gA_jg^{-1},\qquad 1\leqslant j\leqslant m,\\
    E_k\mapsto&\; hE_kg^{-1},\qquad 1\leqslant k\leqslant r,\qquad\qquad\qquad (g,h)\in G\times G,\\
    B_l\mapsto&\;hB_lh^{-1},\qquad 1\leqslant l\leqslant n+1,
\end{aligned}
\label{eq:GGActionSeparatingCycle}
\end{equation}

Of course, any $G$-representation of $\pi_1(\Sigma,p,q)$ also defines $G$-representations of $\pi_1(\Sigma_L,p)$ and $\pi_1(\Sigma_R,q)$. Explicitly, for our choice of generators we obtain a diagram of $G\times G$-equivariant projections
\begin{subequations}
\begin{equation}
\begin{tikzcd}
\mathrm{Hom}(\pi_1(\Sigma,p,q),G)\ar[r]\ar[d]&\mathrm{Hom}(\pi_1(\Sigma_R,q),G)\ar[d]\\
\mathrm{Hom}(\pi_1(\Sigma_L,p),G)\ar[r]&\mathrm{Hom}(\pi_1(C),G)\git G
\end{tikzcd}
\end{equation}
\begin{equation}
\begin{tikzcd}
(A_1,\dots,A_m,E_1,\dots,E_r,B_1,\dots, B_{n+1})\arrow[r,mapsto]\arrow[d,mapsto]&(B_1,\dots,B_{n+1})\arrow[d,mapsto]\\
(A_1,\dots,A_m)\arrow[r,mapsto]&(\left[A_1\right]=[B_1],\dots, \left[A_r\right]=[B_r])
\end{tikzcd}
\end{equation}
\label{eq:RepProjections}
\end{subequations}
where $[A_j]=[B_j]$ stands for the common conjugacy class of $A_j$ and $B_j$.

Let $I$ be a $G\times G$-invariant ideal in $\mathcal O[G^{\times(m+n+r+1)}]$ defined by equations (\ref{eq:CpqRepRelations}), then
\begin{align}
\mathcal O[\mathrm{Hom}(\pi_1(\Sigma,p,q),G)]\simeq\mathcal O[G^{\times(m+n+r+1)}]/I.
\label{eq:CpqFullCoordinateRing}
\end{align}

In terms of the coordinate ring (\ref{eq:CpqFullCoordinateRing}) of $G$-representations of $\pi_1(\Sigma,p,q)$ diagram (\ref{eq:RepProjections}) corresponds to the following diagram of inclusions of subalgebras
\begin{equation}
\begin{tikzcd}
\mathcal O[\mathrm{Hom}(\pi_1(\Sigma,p,q),G)]&\mathcal O[\mathrm{Hom}(\pi_1(\Sigma_R,q),G)]\arrow[l,subset]\\
\mathcal O[\mathrm{Hom}(\pi_1(\Sigma_L,p),G)]\arrow[u,subset]
&H=\mathcal O[\mathrm{Hom}(\pi_1(C),G)]^G\arrow[u,subset]\arrow[l,subset]
\end{tikzcd}.
\label{eq:GGEquivariantAlgebraInclusionsSeparatingCycle}
\end{equation}
Here $H=\mathcal O[\mathrm{Hom}(\pi_1(C),G)]^G$ is the coordinate ring of a $G$-character variety of a disjoint union of $r$ circles corresponding to connected components of $C=\partial\Sigma_L\cap\partial\Sigma_R=C_1\sqcup\dots\sqcup C_r$. Hence $H\simeq\left(\mathcal O[G]^G\right)^{\otimes r}$ is generated by invariant functions on common conjugacy classes $[A_i]=[B_i],\;1\leqslant i\leqslant r$ of matrices $A$ and $B$.
\begin{lemma}
\begin{align}
H=\mathcal O[\mathrm{Hom}(\pi_1(\Sigma_L,p))]\cap\mathcal O[\mathrm{Hom}(\pi_1(\Sigma_R,q))]
\label{eq:HInteresection}
\end{align}
\label{lemm:HIntersection}
\end{lemma}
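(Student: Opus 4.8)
The plan is to prove the two inclusions separately; $H\subseteq\mathcal O[\mathrm{Hom}(\pi_1(\Sigma_L,p))]\cap\mathcal O[\mathrm{Hom}(\pi_1(\Sigma_R,q))]$ is immediate, and the reverse inclusion is the substance. Write $X=\mathrm{Hom}(\pi_1(\Sigma,p,q),G)$, and recall from~(\ref{eq:GGEquivariantAlgebraInclusionsSeparatingCycle}) that inside $\mathcal O[X]$ the subalgebra $\mathcal O_L:=\mathcal O[\mathrm{Hom}(\pi_1(\Sigma_L,p))]$ is the one generated by the coordinate functions of $A_1,\dots,A_m$, the subalgebra $\mathcal O_R:=\mathcal O[\mathrm{Hom}(\pi_1(\Sigma_R,q))]$ is generated by the coordinate functions of $B_1,\dots,B_{n+1}$, and $H\simeq(\mathcal O[G]^G)^{\otimes r}$ sits in both as the algebra of class functions of the boundary holonomies. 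The inclusion $H\subseteq\mathcal O_L\cap\mathcal O_R$ then follows because the relations~(\ref{eq:CpqRepRelations}) give $A_j=E_j^{-1}B_jE_j$ for $1\le j\le r$, so a class function of $A_j$ equals the same class function of $B_j$; hence the copy of $H$ sitting in $\mathcal O_L$ and the one sitting in $\mathcal O_R$ coincide as subalgebras of $\mathcal O[X]$.

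For the reverse inclusion I would work in the free coordinates $(A_{r+1},\dots,A_m,E_1,\dots,E_r,B_2,\dots,B_{n+1})$ identifying $X\simeq G^{\times(m+n)}$, treating $B_1=F(B_2,\dots,B_{n+1})$ and $A_j=E_j^{-1}B_jE_j$ ($j\le r$) as derived quantities. Let $f\in\mathcal O_L\cap\mathcal O_R$. Since $f\in\mathcal O_R$ it is a polynomial in the coordinates of $B_2,\dots,B_{n+1}$ alone, hence on $X$ it is independent of $A_{r+1},\dots,A_m$ and of $E_1,\dots,E_r$. Since $f\in\mathcal O_L$ we may write $f=\tilde f(A_1,\dots,A_m)$ for a polynomial $\tilde f$ on $G^{\times m}$; because the restriction map $\phi_L\colon X\to G^{\times m}$, $x\mapsto(A_1(x),\dots,A_m(x))$ is dominant and acts as the identity on the $A_{r+1},\dots,A_m$ coordinates, the independence of $f$ from $A_{r+1},\dots,A_m$ forces $\tilde f=\tilde f'(A_1,\dots,A_r)$. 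Feeding back $A_j=E_j^{-1}B_jE_j$ and using independence of $f$ from $E_1,\dots,E_r$, we conclude that $\tilde f'(g_1^{-1}B_1g_1,\dots,g_r^{-1}B_rg_r)$ does not depend on $g_1,\dots,g_r\in G$ whenever $(B_1,\dots,B_r)$ lies in the image of $\mathrm{Hom}(\pi_1(\Sigma_R,q),G)\to G^{\times r}$; equivalently $\tilde f'$ is constant on $[B_1]\times\dots\times[B_r]$ for every such tuple of conjugacy classes.

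It then remains to upgrade this to full invariance of $\tilde f'$ under componentwise conjugation on all of $G^{\times r}$. For this I would note that $(A_1,\dots,A_r,g_1,\dots,g_r)\mapsto\tilde f'(g_1A_1g_1^{-1},\dots,g_rA_rg_r^{-1})-\tilde f'(A_1,\dots,A_r)$ is a regular function on $G^{\times 2r}$ that vanishes on $S\times G^{\times r}$, where $S\subseteq G^{\times r}$ is the preimage of the image of $\mathrm{Hom}(\pi_1(\Sigma_R,q),G)$ in $(G\git G)^{\times r}$; this image is constructible by Chevalley's theorem, so once it is known to be Zariski dense it contains a dense open subset, whence $S$ is dense, the regular function vanishes identically, and $\tilde f'\in(\mathcal O[G]^G)^{\otimes r}=H$. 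Then $f=\phi_L^*\tilde f'$ lies in the image of $H$ in $\mathcal O[X]$, which completes the reverse inclusion.

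The step I expect to be the main obstacle is the pair of density statements used above: that $\phi_L$ is dominant, and that the conjugacy classes of $B_1,\dots,B_r$ fill a Zariski-dense subset of $(G\git G)^{\times r}$ as the $G$-representation of $\pi_1(\Sigma_R,q)$ varies. Both reduce to the following: for an oriented surface with nonempty boundary which is neither a disk nor a cylinder, the map from its $G$-character variety to the product of adjoint quotients recording the conjugacy classes of a chosen collection of boundary curves is dominant. Its proof is a case analysis according to the genus of $\Sigma_R$ and whether $\partial\Sigma_R=C$, and rests on $G$ being simple together with surjectivity of the commutator map (Ore's theorem) and density of products of generic semisimple conjugacy classes; the degenerate cases in which it fails — a single null-homotopic boundary curve, or the two mutually inverse boundary curves of a cylinder — are exactly those excluded by the standing hypotheses that no $C_i$ is null-homotopic or boundary-parallel and that neither $\Sigma_L$ nor $\Sigma_R$ is a cylinder.
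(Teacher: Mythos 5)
Your proof is correct, but it follows a genuinely different route from the paper's. The paper's argument is representation-theoretic: since $1\times G$ acts trivially on $\mathcal O[\mathrm{Hom}(\pi_1(\Sigma_L,p))]$ and $G\times 1$ acts trivially on $\mathcal O[\mathrm{Hom}(\pi_1(\Sigma_R,q))]$ under the action (\ref{eq:GGActionSeparatingCycle}), anything in the intersection is $G\times G$-invariant and so lies in $\mathcal O[\mathcal M_{\Sigma_L}^G]\cap\mathcal O[\mathcal M_{\Sigma_R}^G]$; Theorem \ref{prop:InvariantSpanningSet} together with Corollary \ref{cor-isom} is then invoked to conclude that this intersection is spanned by invariant graph functions supported on $C$, hence equals $H$. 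You instead work in the explicit free coordinates of $\mathrm{Hom}(\pi_1(\Sigma,p,q),G)$ and replace the graph-function machinery by a Zariski-density statement about which tuples of conjugacy classes $([B_1],\dots,[B_r])$ are realized by $G$-representations of $\pi_1(\Sigma_R,q)$. The paper's route buys brevity and avoids any word-map input, though its final step (that the intersection of the two invariant rings is spanned by graph functions on $C$) is asserted rather than argued; your route makes the identification with $(\mathcal O[G]^G)^{\otimes r}$ completely explicit, at the price of the density lemma, which you only sketch. That lemma is genuinely load-bearing in your argument (it is used twice: to eliminate the dependence on $A_{r+1},\dots,A_m$ and to upgrade conjugation-invariance from the realizable locus to all of $G^{\times r}$), but your sketch has the correct ingredients — dominance of nontrivial word maps on simple groups, density of products of generic semisimple classes, and the exclusion of the disk and cylinder cases by the standing hypotheses on $C$ — so the proof is acceptable once that case analysis is written out.
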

\begin{proof}
From (\ref{eq:GGEquivariantAlgebraInclusionsSeparatingCycle}) we already know that $H\subset \mathcal O[\mathrm{Hom}(\pi_1(\Sigma_L,p))]\cap\mathcal O[\mathrm{Hom}(\pi_1(\Sigma_R,q))]$, so we have to prove only the opposite inclusion. To this end, first recall that all algebras in (\ref{eq:GGEquivariantAlgebraInclusionsSeparatingCycle}) are $G\times G$-modules w.r.t. the action (\ref{eq:GGActionSeparatingCycle}). Moreover, $1\times G$ is acting trivially on $\mathcal O[\mathrm{Hom}(\pi_1(\Sigma_L,p))]$, while $G\times1$ is acting trivially on $\mathcal O[\mathrm{Hom}(\pi_1(\Sigma_R,q))]$. As a corollary, their intersection lies in the $G$-invariant part of both subalgebras $\mathcal O[\mathrm{Hom}(\pi_1(\Sigma_L,p))]\cap\mathcal O[\mathrm{Hom}(\pi_1(\Sigma_R,q))]\subset \mathcal O[\mathcal M_{\Sigma_L}^G]\cap\mathcal O[\mathcal M_{\Sigma_R}^G]$. By Theorem \ref{prop:InvariantSpanningSet} combined with Corollary \ref{cor-isom} we conclude that the right hand side of (\ref{eq:HInteresection}) is spanned by invariant graph functions on $C$, hence contained in $H$.
\end{proof}
Now consider the following tensor product over $H$
\begin{align}
    J_0=\mathcal O[\mathrm{Hom}(\pi_1(\Sigma_L,p),G)]\underset{H}\otimes\mathcal O[\mathrm{Hom}(\pi_1(\Sigma_R,q),G)].
\label{eq:J0def}
\end{align}
By Lemma \ref{lemm:HIntersection}, $J\subset \mathcal O[\mathrm{Hom}(\pi_1(\Sigma,p,q),G)]$, this gives us another subalgebra of $\mathcal O[\mathrm{Hom}(\pi_1(\Sigma,p,q),G)]$ which fits in the middle of diagram (\ref{eq:GGEquivariantAlgebraInclusionsSeparatingCycle}).

By construction, all rings involved in (\ref{eq:GGEquivariantAlgebraInclusionsSeparatingCycle}) and (\ref{eq:J0def}) are finitely generated integral domains. Hence, taking the spectrum of all rings in (\ref{eq:GGEquivariantAlgebraInclusionsSeparatingCycle}) and (\ref{eq:J0def}) we obtain a commutative diagram of $G\times G$-equivariant dominant maps of affine schemes
\begin{equation}
\begin{tikzcd}
\mathrm{Hom}(\pi_1(\Sigma,p,q),G)\arrow[rrd,bend left=10]\arrow[ddr,bend right]\arrow[dr,dashed,"\gamma_0"]\\[30pt]
&\mathrm{Hom}(\pi_1(\Sigma_L,p),G)\underset{\Spec H}\times\mathrm{Hom}(\pi_1(\Sigma_R,q),G)\arrow[r]\arrow[d] \arrow[rd,dashed,"\beta_0"] &\mathrm{Hom}(\pi_1(\Sigma_R,q),G)\arrow[d,"\beta_R"]\\
&\mathrm{Hom}(\pi_1(\Sigma_L,p),G)\arrow[r,"\beta_L"']&\Spec H
\end{tikzcd}
\label{eq:SeparatingCycleMainDiag}
\end{equation}
where $Spec(H)\simeq(G\git \mathrm{Ad}\,G)^r$ and
\begin{align*}
\mathrm{Hom}(\pi_1(\Sigma_L,p),G)\underset{\Spec H}\times\mathrm{Hom}(\pi_1(\Sigma_R,q),G)=\Spec J_0
\end{align*}
is the fiber product of maps $\beta_L$ and $\beta_R$.

\subsubsection{Product of Superintegrable systems.}
Taking the GIT quotient by $G\times G$ action of diagram (\ref{eq:SeparatingCycleMainDiag}) we obtain from the main diagonal the following chain of dominant maps of affine schemes
\begin{align}
\mathrm{Hom}(\pi_1(\Sigma,p,q),G)\git (G\times G)\xrightarrow{\;\gamma\;}\left(\mathrm{Hom}(\pi_1(\Sigma_L,p),G)\underset{\Spec H}\times\mathrm{Hom}(\pi_1(\Sigma_R,q),G)\right)\git (G\times G)\xrightarrow{\;\beta\;} \Spec H
\label{eq:PoissonProjectionSeparatingCycle}
\end{align}
Note that by (\ref{moduli-isom}), we have natural isomorphism
\begin{align*}
\mathrm{Hom}(\pi_1(\Sigma,p,q),G)\git (G\times G)\;\simeq\;\mathcal M_\Sigma^G
\end{align*}

\begin{remark}
Key idea of the remaining part of this subsection is that projections (\ref{eq:PoissonProjectionSeparatingCycle}) define an affine superintegrable system on $\mathcal O[\mathcal M_\Sigma^G]$ associated with the separating collection of cycles $C=\partial\Sigma_L\cap\partial\Sigma_R=C_1\sqcup\dots\sqcup C_r$.

Moreover, we can actually prove a much more general statement, formulated in Proposition \ref{prop:SeparatingCurve}, that if each of the surfaces $\Sigma_L,\Sigma_R$ was already equipped with an affine superintegrable system on $\mathcal O[\mathcal M_{\Sigma_L}]$ and $\mathcal O[\mathcal M_{\Sigma_R}]$ respectively, then gluing along $C$ provides a way to construct a new superintegrable system on $\Sigma$ which refines (\ref{eq:PoissonProjectionSeparatingCycle}).
\end{remark}

\begin{lemma}
Equation (\ref{eq:PoissonProjectionSeparatingCycle}) defines a chain of Poisson projections.
\end{lemma}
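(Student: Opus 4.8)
The plan is to first endow the middle term $\Spec J_0\git(G\times G)$ of (\ref{eq:PoissonProjectionSeparatingCycle}) with a canonical Poisson bracket, and then to check that both arrows $\gamma$ and $\beta$ are Poisson maps, $\Spec H$ being given the zero Poisson bracket. The only input about the Atiyah--Bott--Goldman structure that is needed is its description through graph functions, namely formula (\ref{eq:PoissonBracketGraphFunctions}).

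First I would record that $\mathcal{O}[\mathcal{M}_{\Sigma_L}^G]$ and $\mathcal{O}[\mathcal{M}_{\Sigma_R}^G]$ are Poisson subalgebras of $\mathcal{O}[\mathcal{M}_\Sigma^G]$. Passing to $(G\times G)$-invariants in (\ref{eq:GGEquivariantAlgebraInclusionsSeparatingCycle}) (on the two relevant corners one of the two copies of $G$ acts trivially, so the invariants reduce to conjugation invariants) realizes them as the subalgebras spanned by graph functions $F_{[\Gamma]}$ with $\Gamma$ in the interior of $\Sigma_L$, respectively $\Sigma_R$. Since the bracket (\ref{eq:PoissonBracketGraphFunctions}) of two graph functions is a sum over the transverse intersection points of their supports together with a local surgery near each such point, the bracket of two functions supported in $\Sigma_L$ is again supported in $\Sigma_L$ and equals the Goldman bracket intrinsic to $\Sigma_L$, and likewise for $\Sigma_R$; this is immediate from the locality of (\ref{eq:PoissonBracketGraphFunctions}). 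The same formula, applied to graphs $\Gamma_L\subset\Sigma_L$ and $\Gamma_R\subset\Sigma_R$ chosen with disjoint representatives, gives $\{F_{[\Gamma_L]},F_{[\Gamma_R]}\}=0$ in $\mathcal{O}[\mathcal{M}_\Sigma^G]$, since there are no intersection points.

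Next I would build the bracket on $J_0$. By Lemma \ref{lemm:HIntersection} we have $H=\mathcal{O}[\mathcal{M}_{\Sigma_L}^G]\cap\mathcal{O}[\mathcal{M}_{\Sigma_R}^G]$, and $H$ is generated by invariant functions of the holonomies around $C_1,\dots,C_r$; these curves are boundary components of both $\Sigma_L$ and $\Sigma_R$, so applying Lemma \ref{lemm:BoundaryPoissonCenter} to each of the two subsurfaces shows that $H$ lies in the Poisson center of each factor. Since $G$ is reductive, forming $(G\times G)$-invariants commutes with the fiber product, so $J_0^{G\times G}\simeq\mathcal{O}[\mathcal{M}_{\Sigma_L}^G]\otimes_H\mathcal{O}[\mathcal{M}_{\Sigma_R}^G]$. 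On this amalgamated tensor product I would define a bracket by setting $\{a\otimes 1,\,1\otimes b\}=0$ for $a\in\mathcal{O}[\mathcal{M}_{\Sigma_L}^G]$, $b\in\mathcal{O}[\mathcal{M}_{\Sigma_R}^G]$, and extending by the Leibniz rule from the brackets of the two factors; its well-definedness is precisely the statement that the bracket with any relation $h\otimes 1-1\otimes h$, $h\in H$, lands in the ideal generated by such relations, and since $H$ is Casimir on both sides this bracket is in fact zero on generators, hence the bracket descends. Finally, equip $\Spec H$ with the zero Poisson bracket.

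With these structures in place the verification is formal. The algebra homomorphism corresponding to $\beta$ is the inclusion $H\hookrightarrow J_0^{G\times G}$; $H$ is Poisson-commutative there, so $\{\beta^*f,\beta^*g\}=0=\beta^*\{f,g\}$ and $\beta$ is Poisson. The algebra homomorphism corresponding to $\gamma$ is the ring map $J_0^{G\times G}\to\mathcal{O}[\mathcal{M}_\Sigma^G]$ obtained, via the universal property of $\otimes_H$, from the two Poisson inclusions of the subsurface algebras, which agree on $H$; it is Poisson on each factor by the second paragraph, and it sends $\{a\otimes 1,\,1\otimes b\}=0$ to $\{\gamma^*(a\otimes 1),\gamma^*(1\otimes b)\}$, which vanishes by the computation above, the two images being functions supported in $\Sigma_L$ and $\Sigma_R$ respectively. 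Hence $\gamma^*$ respects all brackets and $\gamma$ is Poisson, and the dominance of both maps is already contained in diagram (\ref{eq:SeparatingCycleMainDiag}). The step I expect to require the most care is the well-definedness of the bracket on $J_0^{G\times G}$: it rests on $H$ being genuinely Casimir in both $\mathcal{O}[\mathcal{M}_{\Sigma_L}^G]$ and $\mathcal{O}[\mathcal{M}_{\Sigma_R}^G]$ (Lemma \ref{lemm:BoundaryPoissonCenter}) and on the interchange of the GIT quotient with the amalgamated product; granting these, everything else follows from the graph-function formula (\ref{eq:PoissonBracketGraphFunctions}) and the fact that graphs living in the two pieces of $\Sigma$ can be separated.
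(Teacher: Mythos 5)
Your proof is correct and follows essentially the same route as the paper: identify $H$ as the (trivially Poisson) coordinate ring of the annular neighbourhoods of $C$, use that it embeds into the Poisson center of each of $\mathcal O[\mathcal M_{\Sigma_L}^G]$ and $\mathcal O[\mathcal M_{\Sigma_R}^G]$, and conclude that the amalgamated product over $H$ is a Poisson subalgebra of $\mathcal O[\mathcal M_\Sigma^G]$ with $H$ central. You merely spell out two points the paper leaves implicit here (the well-definedness of the bracket on the tensor product over $H$, and the vanishing of cross-brackets between the two subsurface algebras, which the paper only records later in the proof of Proposition \ref{prop:SeparatingCurve}).
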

\begin{proof}
Note that the neighbourhood of $C$ is a disjoint union of $r$ annuli. Recall that $H\simeq\left(\mathcal O[G]^G\right)^{\otimes r}$ is the tensor product of $r$ copies of the coordinate ring of the moduli space of $G$-representations of an annulus. It is naturally equipped with the trivial Poisson structure and with two natural injective Poisson homomorphisms
\begin{equation}
\begin{tikzcd}
H\ar[r,hookrightarrow,"\beta_L^*"]&\mathcal O[\mathcal M_{\Sigma_L}^G]
\end{tikzcd}
\qquad\qquad
\begin{tikzcd}
H\ar[r,hookrightarrow,"\beta_R^*"]&\mathcal O[\mathcal M_{\Sigma_R}^G]
\end{tikzcd}
\end{equation}
to the Poisson center of the coordinate rings of $G$-character varieties of $\pi_1(\Sigma_L,p)$ and $\pi_1(\Sigma_R,q)$ respectively. As a result,
\begin{equation*}
\begin{tikzcd}
H\ar[r,hookrightarrow,"\quad\beta^*"]&J=\mathcal O[\mathcal M_{\Sigma_L}^G]\underset{H}\otimes\mathcal O[\mathcal M_{\Sigma_R}^G]
\end{tikzcd}
\end{equation*}
is a Poisson embedding into the Poisson center of $J$. To finalize the proof, note that
\begin{align*}
J\xhookrightarrow{\quad\gamma^*\quad} \mathrm{Hom}(\pi_1(\Sigma,p,q),G)^{G\times G}\simeq \mathcal O[\mathcal M_\Sigma^G]
\end{align*}
is a subalgebra of the full coordinate ring of the moduli space of $G$-representations of $\pi_1(\Sigma,p,q)$.
\end{proof}

\begin{proposition}
Suppose that we have a pair of affine superintegrable systems on $\Sigma_L$ and $\Sigma_R$
defined by some choice of nonseparating cycles $C_L\subset \Sigma_L$ and $C_R\subset \Sigma_R$:
\begin{align}
Z_L\subset B_L\subset J_L\subset A_L=\mathcal O[\mathcal M_{\Sigma_L}^G],\qquad\qquad Z_R\subset B_R\subset J_R\subset A_R=\mathcal O[\mathcal M_{\Sigma_R}^G],
\label{eq:LRSystems}
\end{align}
where $Z_L$ and $Z_R$ denote Poisson centers of $A_L$ and $A_R$ respectively. Then
\begin{align}
Z\;\subset\; B\;\subset\; J\;\subset\;A=\mathcal O[\mathcal M_\Sigma^G],
\label{eq:GluedSuperInt}
\end{align}
defines an affine superintegrable system on $\Sigma$. Here $Z$ is the Poisson center of $\mathcal O[\mathcal M_\Sigma^G]$,
\begin{align}
B=B_L\underset{H}\otimes B_R,\qquad\qquad J=J_L\underset{H}\otimes J_R,
\label{eq:SeparatingPropBJDef}
\end{align}
and $H=\mathcal O[\mathcal M_C^G]$ is the coordinate ring of the $G$-character variety of $\pi_1(C)\simeq Free_r$.
\label{prop:SeparatingCurve}
\end{proposition}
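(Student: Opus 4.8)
The strategy is to verify the three requirements of Definition~\ref{def:SuperIntDim}: that $Z\subset B\subset J\subset A$ is a chain of inclusions of finitely generated Poisson subalgebras, that $\{Z,A\}=\{B,J\}=0$, and that the dimension balance \eqref{eq:SuperIntDim} holds. Note that Definition~\ref{def:SuperIntDim} does \emph{not} ask for $J$ to be the full Poisson centralizer of $B$, so the real content is the dimension count; the algebraic and Poisson conditions are soft and follow from locality of the bracket \eqref{eq:PoissonBracketGraphFunctions}.

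\emph{Algebraic and Poisson conditions.} Since $C=\partial\Sigma_L\cap\partial\Sigma_R$ is a union of boundary circles of both $\Sigma_L$ and $\Sigma_R$, Lemma~\ref{lemm:BoundaryPoissonCenter} applied on each piece puts $H=\mathcal O[\mathcal M_C^G]$ inside the Poisson centers $Z_L\subset A_L$, $Z_R\subset A_R$, hence inside each of $B_L,J_L,B_R,J_R$; so the tensor products over $H$ in \eqref{eq:SeparatingPropBJDef} make sense, and as $H,B_L,J_L,B_R,J_R$ are all finitely generated so are $B$ and $J$ (and $Z=Z_{\partial\Sigma}$ is finitely generated, cf.~\eqref{eq:dimcas}). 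The embeddings $A_L,A_R\hookrightarrow A$ and $A_L\otimes_H A_R\hookrightarrow A$ are those set up in \eqref{eq:SeparatingCycleMainDiag} and in the Poisson-projection lemma just proved (there the map $\gamma^*$); restricting them realizes $B$ and $J$ as subalgebras of $A$ and yields the chain $Z\subset B\subset J\subset A$ — here $Z=Z_{\partial\Sigma}\subset B$ because $\partial\Sigma$ consists precisely of the boundary components of $\Sigma_L$ and of $\Sigma_R$ other than $C$, so each generator of $Z_{\partial\Sigma}$ lies in $Z_L\subset B_L$ or in $Z_R\subset B_R$. Because the bracket \eqref{eq:PoissonBracketGraphFunctions} is supported at transverse intersection points, graph functions carried by graphs in the interior of $\Sigma_L$ commute with those carried by graphs in the interior of $\Sigma_R$, so $A_L$ and $A_R$ are Poisson subalgebras of $A$ with $\{A_L,A_R\}=0$ and the brackets they carry agree with those of $\mathcal M_{\Sigma_L}^G$ and $\mathcal M_{\Sigma_R}^G$. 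Writing a general element of $B$ as a sum of products $b_Lb_R$ ($b_\bullet\in B_\bullet$) and of $J$ as a sum of $j_Lj_R$, and expanding $\{b_Lb_R,j_Lj_R\}$ by the Leibniz rule, every term contains one of $\{b_L,j_L\}=0$ ($J_L$ centralizes $B_L$), $\{b_R,j_R\}=0$, $\{b_L,j_R\}=0$ or $\{b_R,j_L\}=0$; thus $\{B,J\}=0$, and the same computation shows $B$ is Poisson-commutative, while $\{Z,A\}=0$ since $Z$ is the Poisson center of $A$.

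\emph{Dimension count.} This is the main step. Because $H\subset B_L\subset J_L$ and $H\subset B_R\subset J_R$ with all structure morphisms to $\Spec H$ dominant, the dimension of a fibered tensor product over $H$ is additive in the generic-fibre direction,
\[
\dim\Spec(R\otimes_H S)=\dim R+\dim S-\dim H,
\]
which applied to $(B_L,B_R)$ and to $(J_L,J_R)$ gives $\dim\mathcal B=\dim B_L+\dim B_R-\dim H$ and $\dim\mathcal J=\dim J_L+\dim J_R-\dim H$. Since $\partial\Sigma$ has $|\partial\Sigma_L|+|\partial\Sigma_R|-2r$ components (and $\dim H=r\,\mathrm{rk}(G)$), formula \eqref{eq:dimcas} gives $\dim\mathcal Z=\dim\mathcal Z_L+\dim\mathcal Z_R-2\dim H$ — the small-surface exceptions in \eqref{eq:dimcas} do not occur, as by the standing hypotheses neither piece is a cylinder and, being glued along essential curves, neither is a disc. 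Additivity of the Euler characteristic under gluing along circles, $\chi(\Sigma)=\chi(\Sigma_L)+\chi(\Sigma_R)$, together with \eqref{eq:dimrepscheme}, gives $\dim\mathcal A=\dim\mathcal A_L+\dim\mathcal A_R$. Substituting, and using that \eqref{eq:LRSystems} are affine superintegrable systems (so $\dim B_\bullet+\dim J_\bullet-2\dim Z_\bullet=\dim\mathcal A_\bullet-\dim\mathcal Z_\bullet$ for $\bullet\in\{L,R\}$),
\begin{align*}
\dim\mathcal B+\dim\mathcal J-2\dim\mathcal Z
&=\bigl(\dim B_L+\dim J_L-2\dim Z_L\bigr)+\bigl(\dim B_R+\dim J_R-2\dim Z_R\bigr)+2\dim H\\
&=\bigl(\dim\mathcal A_L-\dim\mathcal Z_L\bigr)+\bigl(\dim\mathcal A_R-\dim\mathcal Z_R\bigr)+2\dim H\\
&=\dim\mathcal A-\dim\mathcal Z,
\end{align*}
which is \eqref{eq:SuperIntDim}.

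\emph{The main obstacle} is precisely the fibered-product dimension formula above, and with it the injectivity of the maps of tensor products over $H$: in general $R\otimes_H S$ need be neither reduced nor of the expected dimension. One handles this by restricting to the dense open locus of $\Spec H\simeq(G\git\mathrm{Ad}\,G)^r$ where the holonomies along $C$ are regular semisimple — there the morphisms $\beta_L,\beta_R$, and their restrictions to $\mathcal B_\bullet$ and $\mathcal J_\bullet$, are equidimensional (indeed one expects $A_L,A_R$ to be flat over $H$), so all fibered products are the naive ones and the count is rigorous — and then passes to the irreducible component dominating $\Spec H$ in each tensor product. Making this precise is the only nontrivial technical input; everything else is the bookkeeping above.
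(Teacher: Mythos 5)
Your proposal is correct and follows essentially the same route as the paper: reduce everything to the dimension balance, apply the formula $\dim(R\otimes_H S)=\dim R+\dim S-\dim H$ to $B$ and $J$, and close the count with the Euler-characteristic/boundary-component bookkeeping (the paper substitutes the explicit formulas \eqref{eq:dimrepscheme} and \eqref{eq:dimcas} where you use $\dim\mathcal Z=\dim\mathcal Z_L+\dim\mathcal Z_R-2\dim H$ directly, which is the same arithmetic). The "main obstacle" you flag is addressed in the paper exactly as you suggest, by noting that all rings involved are finitely generated integral domains so the fiber-product dimension formula follows from additivity of transcendence degree.
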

\begin{proof}
First, note that
\begin{align*}
\left\{\mathcal O[\mathcal M_{\Sigma_L}^G],\mathcal O[\mathcal M_{\Sigma_R}^G]\right\}=0,
\end{align*}
where $\{,\}$ stands for the Poisson bracket on $\mathcal M_\Sigma^G$. Indeed, by Proposition \ref{prop:InvariantSpanningSet} subalgebras $\mathcal O[\mathcal M_{\Sigma_L}^G]$ and $\mathcal O[\mathcal M_{\Sigma_R}^G]$ are spanned by invariant graph functions which have no intersections. Combining it with the fact that $\{B_L,J_L\}=0=\{B_R,J_R\}$ we conclude that $\{B,J\}=0$, and, in particular, that $B$ has a trivial Poisson bracket. So the only thing we have to prove is that Krull dimensions of the algebras satisfy (\ref{eq:SuperIntDim}).

Krull dimensions of algebras $B$ and $J$ are determined as follows
\footnote{These two formulas is an algebraic version of a statement that for a pair of projections of irreducible affine varieties $X\rightarrow Z$ and $Y\rightarrow Z$ over $\mathbb C$, the dimension of the corresponding fiber product is $\dim(X\times_ZY)=\dim X+\dim Y-\dim Z$.

Note that in order for (\ref{eq:dimBJSeparating}) to be valid it is crucial that all algebras involved are finitely generated over $\mathbb C$ and have no zero divisors.
Identities (\ref{eq:dimBJSeparating}) then follow from the fact that transcendence degree of a field extension is additive w.r.t. the composition of field extensions.

}
\begin{align}
    \dim B=\dim B_L+\dim B_R-\dim H,\qquad \dim J=\dim J_L+\dim J_R-\dim H.
\label{eq:dimBJSeparating}
\end{align}


Now denote by $g_L$ and $b_L$ the genus and number of boundary components of $\Sigma_L$. Similarly, let $g_R$ and $b_R$ denote the genus and number of boundary components of $\Sigma_R$. The resulting surface $\Sigma$ then has a genus $g=g_L+g_R+(r-1)$ and number of boundary components $b=b_L+b_R-2r$. Note that we have assume that neither $\Sigma_L$ nor $\Sigma_R$ is a cylinder (i.e. $(g_L,b_L)\neq(0,2)\neq(g_R,b_R)$). From (\ref{eq:dimBJSeparating}) we have
\begin{align*}
\dim B+\dim J=& \dim B_L+\dim B_R+\dim J_L+\dim J_R-2\dim H
\end{align*}
On the other hand, because both systems (\ref{eq:LRSystems}) are superintegrable, by (\ref{eq:SuperIntDim}) we conclude that
\begin{align*}
\dim B+\dim J=&\dim A_L+\dim Z_L+\dim A_R+\dim Z_R-2\dim H.
\end{align*}
Finally, using explicit formula (\ref{eq:dimrepscheme}) for Krull dimensions of $A_L,A_R$ and $(\ref{eq:dimcas})$ for $Z_L,Z_R$ we obtain
\begin{align*}
\dim B+\dim J=&((2g_L+b_L-2)+(2g_R+b_R-2))\dim G+(b_L+b_R-2r)\text{rank}(G)\\
=\quad\,&(2g+b-2)\dim G+b\text{rank}(G)\\
=&\dim A+\dim Z.
\end{align*}
Hence, condition (\ref{eq:SuperIntDim}) on Krull dimensions of subalgebras (\ref{eq:GluedSuperInt}) is satisfied and the resulting system is superintegrable.
\end{proof}

\subsection{The proof of Theorem \ref{theorem:SuperIntSystemsCycles}}

Now we can prove our main result.

\begin{proof}
Let $\Sigma\backslash C=\Sigma_1\sqcup\dots\sqcup\Sigma_m$, where $\Sigma_1,\dots,\Sigma_m$ stand for connected components. We will prove the statement of the theorem by induction in the number $m$ of connected components.

The base case $m=1$ follows by Proposition \ref{prop:NonseparatingCurve}.

Now assume that the statement of the theorem holds for all $(\Sigma',C')$ s.t. $\Sigma'\backslash C'$ has at most $m$ connected components. Then for each pair $(\Sigma,C)$ s.t. $\Sigma\backslash C$ has $m+1$ connected components we can present $\Sigma=\Sigma_L\sqcup\Sigma_R$, where both $\Sigma_L\backslash C$ and $\Sigma_R\backslash C$ have at most $m$ connected components. Define
\begin{align*}
    C_L=C\backslash\partial\Sigma_R,\qquad C_R=C\backslash\partial\Sigma_L,\qquad C_{LR}=\partial\Sigma_L\cap\partial\Sigma_R.
\end{align*}
Note that $C=C_L\sqcup C_{LR}\sqcup C_R$. By the inductive assumption we have two superintegrable systems
\begin{align*}
    Z_{\partial\Sigma_L}\subset B_{C_L}\subset J_{\Sigma\backslash C_L}\subset \mathcal O[\mathcal M^G_{\Sigma_L}],\qquad\qquad Z_{\partial\Sigma_R}\subset B_{C_R}\subset J_{\Sigma\backslash C_R}\subset \mathcal O[\mathcal M^G_{\Sigma_R}].
\end{align*}
Now, applying Proposition \ref{prop:SeparatingCurve} we finalize the proof.
\end{proof}


The following proposition is an immediate consequence.

\begin{proposition}
Let $\Sigma$ be an oriented closed surface with $b>0$ boundary components and $C=C_1\sqcup\dots\sqcup C_k$ and $C'=C_1'\sqcup\dots\sqcup C_m'$ be two collections of simple closed curves none of which is homotopic to a boundary component. Denote by $\mathcal I_C$ and $\mathcal I_{C'}$ the two superintegrable systems associated to $C$ and $C'$ as in Theorem \ref{theorem:SuperIntSystemsCycles}.
\begin{itemize}
    \item If $C\subset C'$ is a subcollection of $C'$, the integrable system $\mathcal I_C$ is a refinement of $\mathcal I_{C'}$.
    \item If an element of the Mapping Class Group of surface $\Sigma$ bring the collection $C$ to $C'$, it induces an equivalence of corresponding suerintegrable systems.
    \end{itemize}
\end{proposition}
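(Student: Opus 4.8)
The plan is to deduce both assertions directly from the construction of $B_C$ and $J_{\Sigma\setminus C}$ together with Theorem~\ref{theorem:SuperIntSystemsCycles}; no dimension count or new commutation identity beyond what is already established is needed.

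\emph{Refinement.} First I would record that, by Theorem~\ref{theorem:SuperIntSystemsCycles}, all of $B_C,\;J_{\Sigma\setminus C},\;B_{C'},\;J_{\Sigma\setminus C'}$ are Poisson subalgebras of $A$ with the induced bracket, and that $Z_{\partial\Sigma}$, being the Poisson center of $A$, is the common bottom term of both systems. It then suffices to exhibit the chain of subalgebra inclusions
\[
Z_{\partial\Sigma}\subset B_C\subset B_{C'}\subset J_{\Sigma\setminus C'}\subset J_{\Sigma\setminus C}\subset A,
\]
since this is precisely the condition of Definition~\ref{refnmnt} for $\mathcal I_C$ to refine $\mathcal I_{C'}$. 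The inclusion $Z_{\partial\Sigma}\subset B_C$ and the middle inclusion $B_{C'}\subset J_{\Sigma\setminus C'}$ are contained in the statement of Theorem~\ref{theorem:SuperIntSystemsCycles}. The two outer inclusions are monotonicity in $C$: a colored immersed graph contractible to $C\sqcup\partial\Sigma$ is a fortiori contractible to $C'\sqcup\partial\Sigma$ because $C\subseteq C'$, so every generator of $B_C$ is a generator of $B_{C'}$; dually, a colored immersed graph contained in $\Sigma\setminus C'$ is contained in $\Sigma\setminus C$ because $\Sigma\setminus C'\subseteq\Sigma\setminus C$, so every generator of $J_{\Sigma\setminus C'}$ is a generator of $J_{\Sigma\setminus C}$. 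As all five terms are Poisson subalgebras, these are inclusions of Poisson subalgebras, and the first bullet follows.

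\emph{Mapping class group equivalence.} I would choose an orientation-preserving self-homeomorphism $h$ of $\Sigma$ whose class $[h]\in\mathop{Mod}(\Sigma)$ satisfies $h(C)=C'$; note that $h(\partial\Sigma)=\partial\Sigma$ automatically. By \cite{AtiyahBott'1983,Goldman'1986} the class $[h]$ acts on $\mathcal M^G_\Sigma$ by a Poisson automorphism, hence induces a Poisson automorphism $\Phi\colon A\to A$. The structural input I would use is that $\Phi$ is intertwined with the graph-function description of $A$: pulling a flat connection back by $h$ replaces the holonomy along a curve $\gamma$ by the holonomy along $h(\gamma)$, so $\Phi\bigl(F_{[\Gamma],V,c}\bigr)=F_{[h(\Gamma)],V,c}$ for every colored immersed graph. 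Consequently $\Phi$ carries the generating family of $B_C$ bijectively onto that of $B_{h(C)}=B_{C'}$ and the generating family of $J_{\Sigma\setminus C}$ bijectively onto that of $J_{\Sigma\setminus C'}$, while it preserves $Z_{\partial\Sigma}$ (automatically, being the Poisson center, and in any case because $h$ fixes $\partial\Sigma$ setwise). Thus $\Phi(Z_{\partial\Sigma})=Z_{\partial\Sigma}$, $\Phi(B_C)=B_{C'}$ and $\Phi(J_{\Sigma\setminus C})=J_{\Sigma\setminus C'}$; reading this off against the definition of equivalence introduced above (applied to $\Phi$ or to $\Phi^{-1}$, according to the orientation in which that definition is phrased) shows that $\mathcal I_C$ and $\mathcal I_{C'}$ are equivalent.

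The only point requiring genuine care is the compatibility claim in the second bullet, namely that the Atiyah--Bott--Goldman automorphism attached to $[h]$ really acts on graph functions by transporting the underlying graph along $h$; this is standard, and once it is granted the rest is immediate, since it instantly matches up the generating families of the $B$'s, of the $J$'s, and of the Poisson center. No part of the argument interacts with the superintegrability balance of dimensions~\eqref{eq:SuperIntDim}: that has already been used, in Theorem~\ref{theorem:SuperIntSystemsCycles}, to know that the algebras in question assemble into superintegrable systems at all.
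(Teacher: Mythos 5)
Your proof is correct and follows exactly the route the paper intends: the paper states this proposition without proof as an "immediate consequence," and your argument simply makes explicit the two things that make it immediate — monotonicity of the generating families of $B_C$ and $J_{\Sigma\backslash C}$ in $C$ (matching Definition~\ref{refnmnt}), and the fact that the mapping class group acts by Poisson automorphisms transporting graph functions along the homeomorphism. Your parenthetical care about the orientation of the paper's definition of equivalence is also warranted, since that definition as printed is asymmetric in $B$ and $J$.
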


\begin{remark}
It is natural to expect that for an affine superintegrable system
\begin{align*}
Z_{\partial\Sigma}\subset B_C\subset J_{\partial\Sigma\backslash C}\subset \mathcal O[\mathcal M_\Sigma^G],
\end{align*}
subalgebra $Z_{\partial\Sigma}$ is the full Poisson center of $\mathcal O[\mathcal M_\Sigma^G]$, and this is true in
the compact case.  Note that this implies that $J_{\Sigma\backslash C}$ is the full Poisson centralizer of $B_C$.
\end{remark}

\subsection{Beyond generic orbits}

The main result of the paper, formulated in Theorem \ref{theorem:SuperIntSystemsCyclesIntro}, states that
a collection of simple closed curves $C=C_1\sqcup\dots\sqcup C_r$ on an oriented surface $\Sigma$ defines a family of superintegrable systems. This family is parameterized by generic values of Casimir functions, in other words, by
fixing holonomies around boundary components at regular conjugation orbits. We have left details of
the case of nongeneric values of Casimir functions outside of the scope of the current paper.


We expect to prove the following stronger statement in a sequel publication. Consider a superintegrable system from
Theorem \ref{theorem:SuperIntSystemsCyclesIntro} and let
\begin{equation*}
\begin{tikzcd}
\cA\arrow[r, "p_1"]\arrow[rrr,bend right=17,"p"']&\mathcal J\arrow[r,"p_2"]&\mathcal B\arrow[r,"p_3"]&\mathcal Z
\end{tikzcd}
\end{equation*}
be an associated chain of dominant maps of affine schemes preserving the Poisson bracket. Then for all $z\in\Spec\mathcal Z$, every irreducible component of the fiber $p^{-1}(z)$ contains an open subset $\mathcal M_{2n}$ such that
\begin{equation*}
\begin{tikzcd}
\mathcal M_{2n}\arrow[r,"\pi_1"]&\mathcal P_{2n-k}\arrow[r,"\pi_2"]&\mathcal B_{k}
\end{tikzcd}
\end{equation*}
is a superintegrable system. Here $\pi_1,\pi_2$ are maps induced by $p_1$ and $p_2$ respectively.

Example of systems with lower rank orbits were studied in \cite{CF}. In that case $G=SL_m$,
the surface $\Sigma$ is a torus with $n$ punctures, the conjugacy class of the monodromy around each
puncture is fixed and is assumed to be of rank 1. It is easy to see that the dimension of the moduli space
in this case is $2(m-1)n$. The choice of $C=C_1,\dots C_n$ being separating cycles
cutting the torus into $n$ cylinders with one puncture in each cylinder produces a Liouville
integrable system with $(m-1)n$ Poisson commuting integrals. For any $C_i$, the choice $C=C_i$
gives a superintegrable refinement of this Liouville integrable system.

\section{Hamilton flows on the full coordinate ring}
\label{sec:HamiltonFlows}

The Poisson bracket (\ref{eq:PoissonBracketGraphFunctions}) on the coordinate ring $\mathcal O[\mathrm{Hom}(\pi_1(\Sigma,p_1,\dots,p_m))]^G$ of the moduli space $\mathcal M_{\Sigma}^G$ can be obtained from the bivector field on $R=\mathcal O[\mathrm{Hom}(\pi_1(\Sigma,p_1,\dots,p_m))]$ using two essentially different methods. One way is the ``$r$-matrix approach'' developed in \cite{FockRosly'1993}, the other way is the ``Quasi Poisson'' bracket approach developed in \cite{AlekseevMalkinMeinrenken'1998, AlekseevKosmann-Schwarzbach'2000, AlekseevKosmann-SchwarzbachMeinrenken'2002}.

In the $r$-matrix approach one defines Poisson bracket on the full coordinate ring
\begin{align}
\{,\}_{r-matrix}:R\otimes R\rightarrow R,\qquad R=\mathcal O[\mathrm{Hom}(\pi_1(\Sigma,p_1,\dots,p_m),G)],
\label{eq:RMatrixBracketGeneral}
\end{align}
which coincides with the Goldman bracket on the $G$-invariant part $R^G\simeq\mathcal O[\mathcal M_\Sigma^G]$.

In the Quasi Hamiltonian approach the bivectorfield on $\mathrm{Hom}(\pi_1(\Sigma,p_1,\dots,p_m))$ is not Poisson. The bracket $\{,\}_{quasi}: R\otimes R\rightarrow R$ satisfies the Jacobi identity only up to a fixed trivector field vanishing on $R^G$. As in the $r$-matrix approach, quasi Poisson bracket induces the Goldman bracket on $R^G$
\begin{align*}
\{,\}_{r-matrix}\Big|_{R^G\otimes R^G}=\{,\}_{quasi}\Big|_{R^G\otimes R^G}=\{,\}_{Goldman}:\quad R^G\otimes R^G\rightarrow R^G.
\end{align*}

Here we will focus on the $r$-matrix approach.\footnote{For the analogue of the current section in terms of quasi Poisson bracket see \cite{ArthamonovRoubtsov}.} The bracket (\ref{eq:RMatrixBracketGeneral}) makes $R$ into a left Lie module over $R^G$. In other words, the $G$-invariant part acts on the full coordinate ring by derivations
\begin{align}
\{,\}:\quad R^G\otimes R\rightarrow R,
\label{eq:PoissonModuleAction}
\end{align}
satisfying
\begin{align*}
&\{h,ab\}=\{h,a\}b+a\{h,b\},\qquad\textrm{for all}\quad h\in R^G,\quad a,b\in R,\\
&\{h_1,\{h_2,a\}\}-\{h_2,\{h_1,a\}\}=\{\{h_1,h_2\}_{Goldman},a\},\qquad\textrm{for all}\quad h_1,h_2\in R^G,\quad a\in R.
\end{align*}

Action (\ref{eq:PoissonModuleAction}) allows one to integrate Hamilton flows described in Theorem \ref{theorem:SuperIntSystemsCycles}. Moreover, with the appropriate choice of generators of the fundamental group, this action acquires a remarkably simple form presented in Theorem \ref{theorem:HamiltonFlows} below. This can be viewed as an analogue of (degenerate) separation of variables in the context of character varieties.

In subsection \ref{sec:ChoiceOfGenerators} for a given collection of simple closed curves as in Theorem \ref{theorem:SuperIntSystemsCycles} we define a particular choice of marked points and generators of $\pi_1(\Sigma,p_1,\dots,p_m)$. Next, in subsection \ref{sec:HamiltonFlowsRMatrix} we calculate action (\ref{eq:PoissonModuleAction}) via the r-matrix.

\subsection{Choice of generators of $\pi_1$}
\label{sec:ChoiceOfGenerators}

Let $\Sigma$ be, as before, an oriented surface with nonempty boundary $\partial\Sigma=D_1\sqcup\dots\sqcup D_b$ and let $C=C_1\sqcup \dots\sqcup C_k$ be a disjoint collection of simple closed curves as in Theorem \ref{theorem:SuperIntSystemsCycles}. Namely, we require that all $C_j$ are pairwise nonhomotopic and none of them are homotopic to the boundary. Let $\Sigma\backslash C=\Sigma_1\sqcup\dots\sqcup\Sigma_m$, where $\Sigma_1,\dots,\Sigma_m$ stand for connected components. For the purpose of this section we further assume that $\partial\Sigma_j\cap\partial \Sigma\neq\emptyset$ for all $j\in\{1,\dots,m\}.$ This assumption can be made with no loss generality when we describe Hamilton flows, because initial conditions with trivial monodromy about the boundary component effectively correspond to absence of the boundary component.

For any connected component $\Sigma_j$ we choose a single marked point $p_j\in \partial\Sigma_j\cap\partial\Sigma$ on the common boundary with $\Sigma$. Each surface $\Sigma_j$ is an oriented surface of genus $g_j$ with $b_j$ boundary components. Its boundary can be presented as a disjoint union
\begin{align*}
\partial\Sigma_j=D_{\beta_{j,1}}\sqcup\dots\sqcup D_{\beta_{j,r_j}}\sqcup  C_{\alpha_{j,r_j+1}}\sqcup\dots\sqcup C_{\alpha_{j,b_j}}
\end{align*}
of $r_j$ connected components $D_{\beta_{j,1}},\dots,D_{\beta_{j,r_j}}\subset \partial\Sigma$ of the boundary of the full surface $\Sigma$, as well as $b_j-r_j$ connected components $C_{\alpha_{j,r_j+1}},\dots, C_{\alpha_{j,b_j}}\subset C$ of a curve $C$. Without loss of generality we can assume that $p_j\in D_{\beta_{j,1}}$.

Denote by
\begin{align}
H_j=(C\cap\Sigma_j)\backslash\partial\Sigma_j=C_{\gamma_{j,1}}\sqcup\dots\sqcup C_{\gamma_{j,s_j}},
\label{eq:NonSeparatingCurvesSingleComponent}
\end{align}
the subcollection of simple closed curves in $C$ which belong to $\Sigma_j$, but do not belong to $\partial\Sigma_j$. Such subcollection is nonseparating by definition.

\begin{figure}
\begin{tikzpicture}
\draw[thick,red,looseness=0.5] (0,0.95) to[out=180,in=180] (0,2);
\draw[thick,red,looseness=0.5,dashed] (0,0.95) to[out=0,in=0] (0,2);
\draw[thick,red,looseness=0.5] (0,-0.95) to[out=180,in=180] (0,-2);
\draw[thick,red,looseness=0.5,dashed] (0,-0.95) to[out=0,in=0] (0,-2);
\draw[thick,red,looseness=0.5] (0,-0.58) to[out=180,in=180] (0,0.55);
\draw[thick,red,looseness=0.5,dashed] (0,-0.58) to[out=0,in=0] (0,0.55);
\draw[thick,->--] (-2.71,0) to[out=20,in=180] (0,1.47) to[out=0,in=160] (2.71,0);
\draw[thick,->--] (-2.71,0) to[out=-20,in=180] (0,-1.47) to[out=0,in=200] (2.71,0);
\draw[thick,->--] (-2.71,0) to (2.71,0);
\draw[ultra thick,looseness=0.7] (-3,-0.7) to[out=0,in=0] (-3,0.7) to[out=180,in=180] (-3,-0.7);
\draw[ultra thick,looseness=0.7] (3,-0.7) to[out=0,in=0] (3,0.7) to[out=180,in=180] (3,-0.7);
\draw[ultra thick] (-3,0.7) to[out=0,in=180] (0,2) to[out=0,in=180] (3,0.7);
\draw[ultra thick] (-3,-0.7) to[out=0,in=180] (0,-2) to[out=0,in=180] (3,-0.7);
\draw[ultra thick] (-0.5,0.7) to[out=50,in=130] (0.5,0.7);
\draw[ultra thick] (-0.7,0.9) to[out=-60,in=240] (0.7,0.9);
\draw[ultra thick] (-0.5,-0.8) to[out=50,in=130] (0.5,-0.8);
\draw[ultra thick] (-0.7,-0.6) to[out=-60,in=240] (0.7,-0.6);
\fill (-2.71,0) circle (0.12);
\fill[gray] (-2.71,0) circle (0.08);
\fill (2.71,0) circle (0.12);
\fill[gray] (2.71,0) circle (0.08);
\draw[red] (0.45,2.3) node {$C_1$};
\draw[red] (0.45,0.3) node {$C_2$};
\draw[red] (0.45,-2.3) node {$C_3$};
\draw (-1,1.4) node {$e_1$};
\draw (-1,0.3) node {$e_2$};
\draw (-1,-0.7) node {$e_3$};
\draw (-3,0) node {$p_j$};
\draw (3.05,0) node {$p_k$};
\draw (-2,-1.5) node {$\Sigma_j$};
\draw (2,-1.5) node {$\Sigma_k$};
\end{tikzpicture}
\caption{Adjacent pair $\Sigma_j\cup\Sigma_k$}
\label{fig:AdjacentPair}
\end{figure}
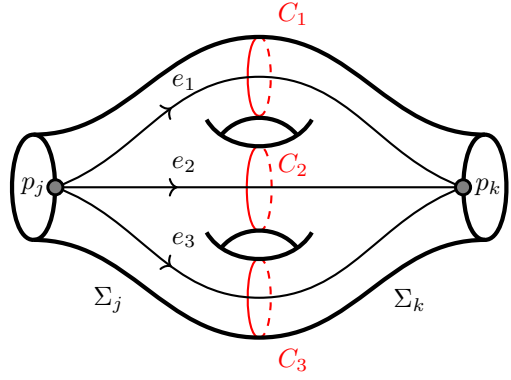

We will choose a set of generators of $\pi_1(\Sigma,p_1,\dots,p_n)$ as follows:
\begin{enumerate}
\item First, for each simple closed curve $C_i\in \partial\Sigma_j\cap \partial\Sigma_k$ which separates a pair of distinct connected components $\Sigma_j\neq\Sigma_k$ of $\Sigma\backslash C$ we choose a simple arc $e_i\in (\Sigma_j\backslash H_j)\cup(\Sigma_k\backslash H_k)$ from $p_j$ to $p_k$ passing once through $C_j$ as shown on Figure \ref{fig:AdjacentPair}.

    Note that without loss of generality we can assume that different $e_j$ intersect only at the endpoints. This can be achieved by sliding all intersection points between $e_i$ and $e_{i'}$ in $\Sigma_j$ to $p_j$ followed by left multiplication by an element of $\pi_1(\Sigma_j,p_j)$. Similarly, one can get rid of the intersection points in $\Sigma_k$.
\item For each arc $e_i$ chosen on the previous step we pick a simple closed curve $a_i$ which starts and ends at $p_j$ and homotopic to $C_i$ as shown on Figure \ref{fig:SeparatingCycle}.

    Again, we can assume that none of the $a_i,e_l$ intersect away from the endpoints. This can be achieved by choosing $a_i$ sufficiently close to $e_i\cup C_i$. Also, each $a_i$ chosen on this step cuts $\Sigma_j$ into two connected components, one of them is a cylinder given by the neighbourhood of $C_i$, the other is equivalent to the original surface. In what follows we assume that all further curves never enter into the cylinder.
\item For each nonseparating simple closed curve $C_{\gamma_{j,k}}$ from (\ref{eq:NonSeparatingCurvesSingleComponent}) we choose a pair of simple closed curves $x_{\gamma_{j,k}}$ and $y_{\gamma_{j,k}}$, where $y_{\gamma_{j,k}}$ is homotopic to $C_{\gamma_{j,k}}$, while $x_{\gamma_{j,k}}$ passes once through $C_{\gamma_{j,k}}$ as shown on Figure \ref{fig:NonSeparatingCycle}.
\item At the last step we consider $\Sigma_j\backslash C$. This surface has genus $g_j-s_j$ and $b_j+2s_j$ boundary components. Our choice of arcs already generates the paths around $r_j+2s_j$ boundary components of $\Sigma_j\backslash C$. We can choose the remaining $2g_j+b_j-r_j$ free generators of $\pi_1(\Sigma_j\backslash C)$ arbitrarily.
\end{enumerate}

Collection of marked points and free generators of $\pi_1(\Sigma,p_1,\dots,p_m)$ chosen above defines an ordered oriented ribbon graph with vertices $p_1,\dots,p_m$ and edges corresponding to generators. The total order of half-edges adjacent to the given vertex is defined by an orientation of the surface. An example of the total order of half-edges is given on Figure \ref{fig:OrderHalfEdges}.
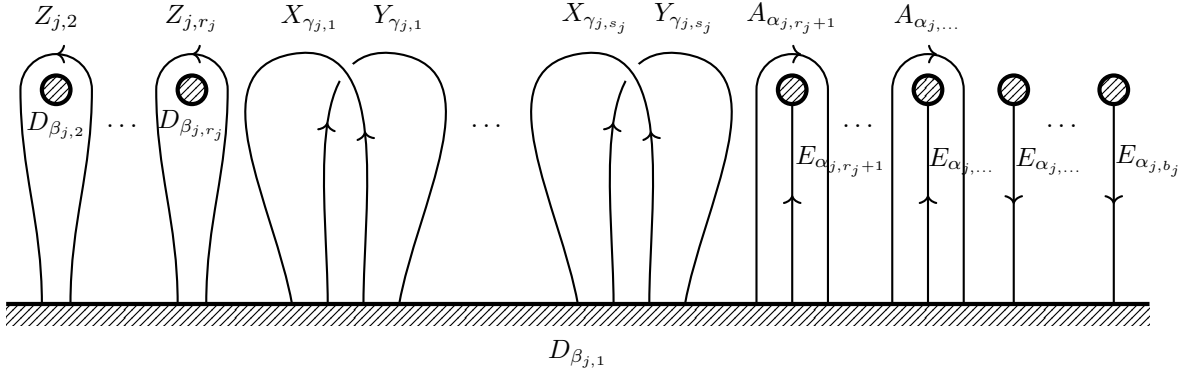
\begin{figure}
\begin{tikzpicture}[scale=0.95]
\draw[thick,->-] (-7.1,0) to[out=90,in=270] (-6.8,3) to[out=90,in=0] (-7.3,3.5) to[out=180,in=90] (-7.8,3) to[out=270,in=90] (-7.5,0);
\draw[thick,->-] (1.4-7.1+0.5,0) to[out=90,in=270] (1.4-6.8+0.5,3) to[out=90,in=0] (1.4-7.3+0.5,3.5) to[out=180,in=90] (1.4-7.8+0.5,3) to[out=270,in=90] (1.4-7.5+0.5,0);
\draw[thick,--<-] (-3+0.5,0) to[out=80,in=-10] (-3+0.5,3.5) to[out=170,in=90] (-4+0.5,0);
\fill[white] (-3.75+0.5,3.25) circle (0.15);
\draw[thick,->--] (-3.5+0.5,0) to[out=90,in=10] (-4.5+0.5,3.5) to[out=190,in=100] (-4.5+0.5,0);
\draw[thick,--<-] (4-3+0.5,0) to[out=80,in=-10] (4-3+0.5,3.5) to[out=170,in=90] (4-4+0.5,0);
\fill[white] (4-3.75+0.5,3.25) circle (0.15);
\draw[thick,->--] (4-3.5+0.5,0) to[out=90,in=10] (4-4.5+0.5,3.5) to[out=190,in=100] (4-4.5+0.5,0);
\draw[thick,-<-] (3+0.5,0) to[out=90,in=270] (3+0.5,3) to[out=90,in=0] (2.5+0.5,3.5) to[out=180,in=90] (2+0.5,3) to[out=270,in=90] (2+0.5,0);
\draw[thick,-<-] (2.5+0.5,2.8) to (2.5+0.5,0);
\draw[thick,-<-] (4.4+1,0) to[out=90,in=270] (4.4+1,3) to[out=90,in=0] (3.9+1,3.5) to[out=180,in=90] (3.4+1,3) to[out=270,in=90] (3.4+1,0);
\draw[thick,-<-] (3.9+1,2.8) to (3.9+1,0);
\draw[thick,->-] (6.1,2.8) to (6.1,0);
\draw[thick,->-] (7.5,2.8) to (7.5,0);
\draw[fill,pattern=north east lines,draw=none] (-8,-0.3) rectangle (8,0);
\draw[ultra thick] (-8,0) to (8,0);
\draw[fill,pattern=north east lines,ultra thick] (-7.3,3) circle (0.2);
\draw[fill,pattern=north east lines,ultra thick] (-5.9+0.5,3) circle (0.2);
\draw[fill,pattern=north east lines,ultra thick] (2.5+0.5,3) circle (0.2);
\draw[fill,pattern=north east lines,ultra thick] (3.9+1,3) circle (0.2);
\draw[fill,pattern=north east lines,ultra thick] (6.1,3) circle (0.2);
\draw[fill,pattern=north east lines,ultra thick] (7.5,3) circle (0.2);
\draw (-6.35,2.5) node {\dots};
\draw (-1.25,2.5) node {\dots};
\draw (3.95,2.5) node {\dots};
\draw (6.8,2.5) node {\dots};
\draw (0,-0.7) node {$D_{\beta_{j,1}}$};
\draw (-7.3,2.5) node {$D_{\beta_{j,2}}$};
\draw (-5.4,2.5) node {$D_{\beta_{j,r_j}}$};
\draw (-7.3,4) node {$Z_{j,2}$};
\draw (-5.4,4) node {$Z_{j,r_j}$};
\draw (-3.75,4) node {$X_{\gamma_{j,1}}$};
\draw (-2.5,4) node {$Y_{\gamma_{j,1}}$};
\draw (4-3.75,4) node {$X_{\gamma_{j,s_j}}$};
\draw (4-2.5,4) node {$Y_{\gamma_{j,s_j}}$};
\draw (3,4) node {$A_{\alpha_{j,r_j+1}}$};
\draw (4.9,4) node {$A_{\alpha_{j,\dots}}$};
\draw (3.68,2) node {$E_{\alpha_{j,r_j+1}}$};
\draw (5.38,2) node {$E_{\alpha_{j,\dots}}$};
\draw (6.6,2) node {$E_{\alpha_{j,\dots}}$};
\draw (8,2) node {$E_{\alpha_{j,b_j}}$};
\end{tikzpicture}
\caption{An example of a total order of half-edges adjacent to $p_j\in\partial\Sigma_j$ )left to right).}
\label{fig:OrderHalfEdges}
\end{figure}

Data of a ribbon graph can be encoded in $m$ ordered sets of half-edges adjacent to $p_j$
\begin{align}
  S_j=\{h_{j,1},\dots, h_{j,n_j}\},\qquad 1\leqslant j\leqslant m.
\label{eq:OrderedSetOfHalfEdgesDef}
\end{align}
Our ribbon graph is oriented, hence it will be convenient for us to label half-edges by generators of $\pi_1(\Sigma,p_1,\dots,p_m)$ and its inverses. We associate a generator for an outgoing edge and inverse of a generator for an ingoing edge. For example, ordered set of half-edges adjacent to $p_j$ on Figure \ref{fig:OrderHalfEdges} reads
\begin{align*}
S_j=\left\{z_{j,2}^{-1},z_{j,2},\dots, 
x_{\gamma_{j,1}}^{-1},y_{\gamma_{j,1}},x_{\gamma_{j,1}}, y_{\gamma_{j,1}}^{-1},\dots,
a_{\alpha_{j,r_j+1}},e_{\alpha_{j,r_j+1}},a_{\alpha_{j,r_j+1}}^{-1},\dots,
e_{\alpha_j,b_j}^{-1}\right\}.
\end{align*}

\subsection{The $r$-matrix approach}
\label{sec:HamiltonFlowsRMatrix}

Let $\mathfrak g$ be a Lie algebra of $G$ and  $r\in \mathfrak g\otimes\mathfrak g$  be a classical r-matrix, i.e. a a solution to the Yang-Baxter equation
\begin{align*}
[r_{12},r_{13}]+[r_{12},r_{23}]+[r_{13},r_{23}]=0.
\end{align*}

We will be interested in solutions corresponding to factorizable Lie bialgebras, i.e. classical $r$-matrices with the following symmetric part (with respect to the exchange of tensor components)
given by the Killing form:
\begin{align}
I=r+\sigma(r)=\sum_{J=1}^{\dim\mathfrak g}e_J\otimes e_J,
\label{eq:QuadraticCasimir}
\end{align}
Here $\{e_J\}$ is a basis in $\mathfrak g$ which is orthonormal with respect to the Killing form
and $\sigma(x\otimes y)=y\otimes x$. Such solutions are classified in \cite{BelavinDrinfeld'1982}.

The choice of free generators of $\pi_1(\Sigma,p_1,\dots,p_m)$ fixes a natural isomorphism $\mathrm{Hom}(\pi_1(\Sigma,p_1,\dots,p_m),G)\simeq G^{N}$ where $N=2g+b+m-2$.
For an ordered ribbon graph  V.~Fock and A.~Rosly  \cite{FockRosly'1993} introduced
a Poisson structure on this space with the Poisson tensor defined in terms of factorizable
classical $r$-matrices. Below we present the corresponding Poisson brackets between matrix element functions.

For two matrix valued functions $M_{ij}$ and $N_{ab}$ on a Poisson manifold
we will use the following notation
\[
\{M \otimes N\}_{(ia)(jb)}=\{M_{ij},N_{ab}\}
\]
for the matrix of Poisson brackets. Here the convention is $(M\otimes N)_{(ia)(jb)}=M_{ij}N_{ab}$.

\subsubsection*{Distinct arcs}
Basic building block of brackets corresponds to the case shown on Figure \ref{fig:BracketLOO} when the two oriented arcs share the source but have distinct targets different from their common source.
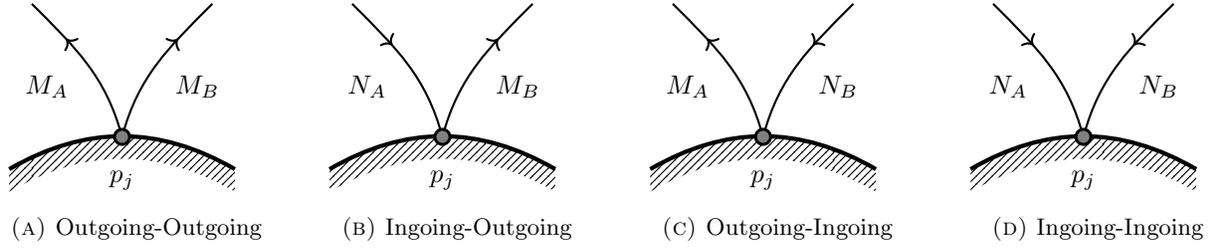
\begin{figure}
\begin{subfigure}{0.21\linewidth}
\begin{tikzpicture}
\draw[thick,-->-] (0,0.43) to[out=105,in=-45] (-1.2,2.2);
\draw[thick,-->-] (0,0.43) to[out=75,in=225] (1.2,2.2);
\draw[fill,pattern=north east lines,draw=none] (-1.5,0) to[out=30,in=150] (1.5,0) to (1.5,-0.3) to[out=150,in=30] (-1.5,-0.3) to (-1.5,0);
\draw[ultra thick] (-1.5,0) to[out=30,in=150] (1.5,0);
\fill (0,0.43) circle (0.12);
\fill[gray] (0,0.43) circle (0.08);
\draw (-1,1.1) node {$M_A$};
\draw (1,1.1) node {$M_B$};
\draw (0,-0.15) node {$p_j$};
\end{tikzpicture}
\caption{Outgoing-Outgoing}
\label{fig:BracketLOO}
\end{subfigure}
\qquad
\begin{subfigure}{0.21\linewidth}
\begin{tikzpicture}
\draw[thick,--<-] (0,0.43) to[out=105,in=-45] (-1.2,2.2);
\draw[thick,-->-] (0,0.43) to[out=75,in=225] (1.2,2.2);
\draw[fill,pattern=north east lines,draw=none] (-1.5,0) to[out=30,in=150] (1.5,0) to (1.5,-0.3) to[out=150,in=30] (-1.5,-0.3) to (-1.5,0);
\draw[ultra thick] (-1.5,0) to[out=30,in=150] (1.5,0);
\fill (0,0.43) circle (0.12);
\fill[gray] (0,0.43) circle (0.08);
\draw (-1,1.1) node {$N_A$};
\draw (1,1.1) node {$M_B$};
\draw (0,-0.15) node {$p_j$};
\end{tikzpicture}
\caption{Ingoing-Outgoing}
\label{fig:BracketLIO}
\end{subfigure}
\qquad
\begin{subfigure}{0.21\linewidth}
\begin{tikzpicture}
\draw[thick,-->-] (0,0.43) to[out=105,in=-45] (-1.2,2.2);
\draw[thick,--<-] (0,0.43) to[out=75,in=225] (1.2,2.2);
\draw[fill,pattern=north east lines,draw=none] (-1.5,0) to[out=30,in=150] (1.5,0) to (1.5,-0.3) to[out=150,in=30] (-1.5,-0.3) to (-1.5,0);
\draw[ultra thick] (-1.5,0) to[out=30,in=150] (1.5,0);
\fill (0,0.43) circle (0.12);
\fill[gray] (0,0.43) circle (0.08);
\draw (-1,1.1) node {$M_A$};
\draw (1,1.1) node {$N_B$};
\draw (0,-0.15) node {$p_j$};
\end{tikzpicture}
\caption{Outgoing-Ingoing}
\label{fig:BracketLOI}
\end{subfigure}
\qquad
\begin{subfigure}{0.21\linewidth}
\begin{tikzpicture}
\draw[thick,--<-] (0,0.43) to[out=105,in=-45] (-1.2,2.2);
\draw[thick,--<-] (0,0.43) to[out=75,in=225] (1.2,2.2);
\draw[fill,pattern=north east lines,draw=none] (-1.5,0) to[out=30,in=150] (1.5,0) to (1.5,-0.3) to[out=150,in=30] (-1.5,-0.3) to (-1.5,0);
\draw[ultra thick] (-1.5,0) to[out=30,in=150] (1.5,0);
\fill (0,0.43) circle (0.12);
\fill[gray] (0,0.43) circle (0.08);
\draw (-1,1.1) node {$N_A$};
\draw (1,1.1) node {$N_B$};
\draw (0,-0.15) node {$p_j$};
\end{tikzpicture}
\caption{Ingoing-Ingoing}
\label{fig:BracketLII}
\end{subfigure}
\caption{Relative position of half edges}
\end{figure}
In this case the Fock and Rosly Poisson bracket between matrix elements\footnote{ Here and below by matrix elements  we always assume matrix elements in a finite dimensional representation.} of holonomies along these arcs reads
\begin{subequations}
\begin{align}
\{M_A\otimes M_B \}=(M_A\otimes M_B)\,r.
\label{eq:BracketLOO}
\end{align}

One can easily calculate the Poisson bracket between matrix elements of powers and products of matrices. For example, by Leibnitz identity applied to $M_AM_A^{-1}=\mathrm{Id}=M_BM_B^{-1}$ from (\ref{eq:BracketLOO}) we immediately obtain Poisson brackets for other three cases shown on Figures \ref{fig:BracketLIO}--\ref{fig:BracketLII}
\begin{align}
\{N_A\otimes M_B\}=&-(\mathrm{Id}\otimes M_B)\,r\,(N_A\otimes\mathrm{Id}),
\label{eq:BracketLIO}\\
\{M_A\otimes N_B\}=&-(M_A\otimes\mathrm{Id})\,r\,(\mathrm{Id}\otimes N_B),
\label{eq:BracketLOI}\\
\{N_A\otimes N_B\}=&\,r\,(N_A\otimes N_B),
\label{eq:BracketLII}
\end{align}
\label{eq:BracketsL}
\end{subequations}
where $N_A=M_A^{-1}$ and $N_B=M_B^{-1}$.

Similarly, brackets $\{M_B\otimes M_A\},\dots,\{N_B\otimes N_A\}$ can be obtained by skew-symmetry, and the answer can be expressed in terms of $r_{21}$ as follows
\begin{subequations}
\begin{align}
\{M_B\otimes M_A\}=&-(M_B\otimes M_A)\,r_{21},
\label{eq:BracketGOO}\\
\{M_B\otimes N_A\}=&(M_B\otimes\mathrm{Id})\,r_{21}\,(\mathrm{Id}\otimes N_A),
\label{eq:BracketGIO}\\
\{N_B\otimes M_A\}=&(\mathrm{Id}\otimes M_A)\,r_{21}\,(N_B\otimes\mathrm{Id}),
\label{eq:BracketGOI}\\
\{N_B\otimes N_A\}=&-r_{21}\,(N_B\otimes N_A).
\label{eq:BracketGII}
\end{align}
\label{eq:BracketsG}
\end{subequations}
Here $r_{21}=\sigma(r)$.

Bracket between matrix elements of a general distinct pair of arcs now can be computed by simply adding up to four terms of the form (\ref{eq:BracketsL}),(\ref{eq:BracketsG}) corresponding to brackets between pairs of half edges whenever they are adjacent to the same vertex.

\subsubsection*{Self-brackets}

There are only three cases of relative position of half edges of the same arc as shown on Figure \ref{fig:HalfEdgesSameArc}.
\begin{figure}
\begin{subfigure}{0.21\linewidth}
\begin{tikzpicture}
\draw[thick,->-] (0,0.43) to[out=145,in=225] (0,1.25) to[out=45,in=-45] (0,2.5-0.43);
\draw[fill,pattern=north east lines,draw=none] (-1.5,0) to[out=30,in=150] (1.5,0) to (1.5,-0.3) to[out=150,in=30] (-1.5,-0.3) to (-1.5,0);
\draw[ultra thick] (-1.5,0) to[out=30,in=150] (1.5,0);
\draw[fill,pattern=north east lines,draw=none] (-1.5,2.5) to[out=-30,in=-150] (1.5,2.5) to (1.5,2.8) to[out=-150,in=-30] (-1.5,2.8) to (-1.5,2.5);
\draw[ultra thick] (-1.5,2.5) to[out=-30,in=-150] (1.5,2.5);
\fill (0,0.43) circle (0.12);
\fill[gray] (0,0.43) circle (0.08);
\fill (0,2.5-0.43) circle (0.12);
\fill[gray] (0,2.5-0.43) circle (0.08);
\draw (-0.7,1.25) node {$E$};
\end{tikzpicture}
\end{subfigure}
\qquad
\begin{subfigure}{0.21\linewidth}
\begin{tikzpicture}
\draw[thick,->--] (0,0.43) to[out=145,in=180] (0,2) to[out=0,in=35] (0,0.43);
\draw[fill,pattern=north east lines,draw=none] (-1.5,0) to[out=30,in=150] (1.5,0) to (1.5,-0.3) to[out=150,in=30] (-1.5,-0.3) to (-1.5,0);
\draw[ultra thick] (-1.5,0) to[out=30,in=150] (1.5,0);
\fill (0,0.43) circle (0.12);
\fill[gray] (0,0.43) circle (0.08);
\draw (-1,1.1) node {$A_R$};
\end{tikzpicture}
\end{subfigure}
\qquad
\begin{subfigure}{0.21\linewidth}
\begin{tikzpicture}
\draw[thick,-<--] (0,0.43) to[out=145,in=180] (0,2) to[out=0,in=35] (0,0.43);
\draw[fill,pattern=north east lines,draw=none] (-1.5,0) to[out=30,in=150] (1.5,0) to (1.5,-0.3) to[out=150,in=30] (-1.5,-0.3) to (-1.5,0);
\draw[ultra thick] (-1.5,0) to[out=30,in=150] (1.5,0);
\fill (0,0.43) circle (0.12);
\fill[gray] (0,0.43) circle (0.08);
\draw (-1,1.1) node {$A_L$};
\end{tikzpicture}
\end{subfigure}
\caption{Relative poisition of half edges of the same arc}
\label{fig:HalfEdgesSameArc}
\end{figure}
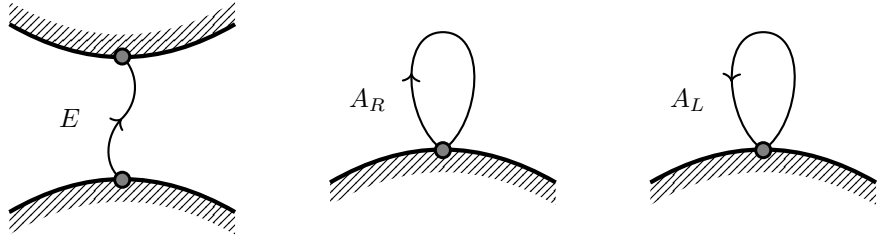
Corresponding brackets are given by the following formulae
\begin{subequations}
\begin{align}
\{E\otimes E\}=&r_a (E\otimes E)+(E\otimes E) r_a,
\label{eq:SelfBracketOpen}\\
\{A_R\otimes A_R\}=&(A_R\otimes A_R) r_a+r_a (A_R\otimes A_R)+(\mathrm{Id}\otimes A_R)r_{21}(A_R\otimes\mathrm{Id})-(A_R\otimes\mathrm{Id})r_{12}(\mathrm{Id}\otimes A_R),
\label{eq:SelfBracketClosedRight}\\
\{A_L\otimes A_L\}=&r_a (A_L\otimes A_L)+(A_L\otimes A_L) r_a+(A_L\otimes\mathrm{Id})r_{21}(\mathrm{Id}\otimes A_L)-(\mathrm{Id}\otimes A_L)r_{12}(A_L\otimes\mathrm{Id}).
\label{eq:SelfBracketClosedLeft}
\end{align}
\end{subequations}
where $r_a=\frac12(r_{12}-r_{21})$ is the antisymmetric (with respect to the exchange of tensor components) part of r-matrix and we write $r_{12}$ for $r$.

\subsection{Hamiltonian flows generated by central functions} In this section we compute the evolution of matrix element functions on $\mathrm{Hom}(\pi_1(\Sigma,p_1,\dots,p_m),G)$ with respect to the  Hamiltonian flows generated by central functions on  holonomies around simple closed arcs $a_i$ or $y_j$ defined in Section \ref{sec:ChoiceOfGenerators}. We start by considering three basic cases shown on Figure \ref{fig:RelativeOrderOfHalfEdgesForPairs}.
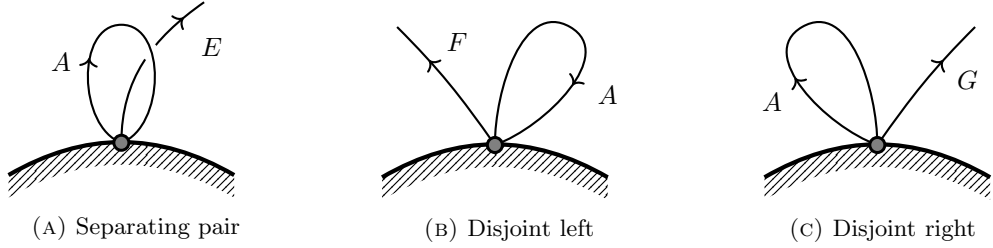
\begin{figure}
\begin{subfigure}{0.21\linewidth}
\begin{tikzpicture}
\draw[thick,style={decoration={
  markings,
  mark=at position .85 with {\arrow{>}}},postaction={decorate}}] (0,0.43) to[out=90,in=215] (1.1,2.3);
\fill[white] (0.38,1.64) circle (0.12);
\draw[thick,--<-] (0,0.43) to[out=20,in=0] (0,2) to[out=180,in=160] (0,0.43);
\draw[fill,pattern=north east lines,draw=none] (-1.5,0) to[out=30,in=150] (1.5,0) to (1.5,-0.3) to[out=150,in=30] (-1.5,-0.3) to (-1.5,0);
\draw[ultra thick] (-1.5,0) to[out=30,in=150] (1.5,0);
\fill (0,0.43) circle (0.12);
\fill[gray] (0,0.43) circle (0.08);
\draw (1.2,1.7) node {$E$};
\draw (-0.8,1.5) node {$A$};
\end{tikzpicture}
\caption{Separating pair}
\end{subfigure}
\qquad\qquad
\begin{subfigure}{0.21\linewidth}
\begin{tikzpicture}
\draw[thick,-->-] (0,0.43) to[out=90,in=150] (1,2) to[out=-30,in=20] (0,0.43);
\draw[thick,-->-] (0,0.43) to[out=125,in=-45] (-1.3,2);
\draw[fill,pattern=north east lines,draw=none] (-1.5,0) to[out=30,in=150] (1.5,0) to (1.5,-0.3) to[out=150,in=30] (-1.5,-0.3) to (-1.5,0);
\draw[ultra thick] (-1.5,0) to[out=30,in=150] (1.5,0);
\fill (0,0.43) circle (0.12);
\fill[gray] (0,0.43) circle (0.08);
\draw (-0.5,1.8) node {$F$};
\draw (1.5,1.1) node {$A$};
\end{tikzpicture}
\caption{Disjoint left}
\end{subfigure}
\qquad\qquad
\begin{subfigure}{0.21\linewidth}
\begin{tikzpicture}
\draw[thick,--<-] (0,0.43) to[out=90,in=30] (-1,2) to[out=210,in=160] (0,0.43);
\draw[thick,-->-] (0,0.43) to[out=55,in=225] (1.3,2);
\draw[fill,pattern=north east lines,draw=none] (-1.5,0) to[out=30,in=150] (1.5,0) to (1.5,-0.3) to[out=150,in=30] (-1.5,-0.3) to (-1.5,0);
\draw[ultra thick] (-1.5,0) to[out=30,in=150] (1.5,0);
\fill (0,0.43) circle (0.12);
\fill[gray] (0,0.43) circle (0.08);
\draw (1.2,1.3) node {$G$};
\draw (-1.4,1) node {$A$};
\end{tikzpicture}
\caption{Disjoint right}
\end{subfigure}
\caption{Relative order of half-edges w.r.t to the closed arc}
\label{fig:RelativeOrderOfHalfEdgesForPairs}
\end{figure}

\begin{proposition}
Let $(a,e), (f,a), (a,g)$ be pairs of generators with relative order of half-edges as shown on Figure \ref{fig:RelativeOrderOfHalfEdgesForPairs},  $H: G\to \CC$ be a polynomial central function, $H(ghg^{-1})=H(h)$
and $\pi: G\to End(V)$ be a finite dimensional representation of $G$. Denote by $A,E,F,G\in \mathrm{Hom}(\pi_1(\Sigma,p_1,\dots,p_m),G)$
holonomies along the corresponding oriented arcs. We have
\begin{subequations}
\begin{align}
\{ H(A), \pi_V(E)\}=&\sum_J (e_J, \nabla H(A))\,\pi_V(Ee_J),
\label{eq:SeparatingDynamics}\\
\{H(A), \pi_V(F)\}=&\{H(A), \pi_V(G)\}=\{H(A), \pi_V(A)\}=0.
\label{eq:TrivialDynamicsProp}
\end{align}
\end{subequations}
where $e_J$ is a basis in $\mathfrak g$ orthonormal with respect to the Killing form and
\[
(\nabla H(A), X)=\frac{d}{dt} H(e^{tX}A)|_{t=0}
\]
Here $X\in\mathfrak g$ and $(.,.)$ is the Killing form.
\label{prop:AkActionHalfEdgesRMatrix}
\end{proposition}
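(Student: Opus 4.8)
The plan is to reduce to the case of characters and then expand every bracket into the Fock--Rosly building blocks \eqref{eq:BracketsL}--\eqref{eq:SelfBracketClosedLeft}. First I would observe that both sides of \eqref{eq:SeparatingDynamics} and \eqref{eq:TrivialDynamicsProp} are $\CC$-linear in $H$ and transform by the Leibniz rule under $H\mapsto H_1H_2$ (on the left because $H\mapsto\{H(A),-\}$ is a derivation in its first slot, on the right because $\nabla(H_1H_2)(A)=H_1(A)\nabla H_2(A)+H_2(A)\nabla H_1(A)$). By the algebraic Peter--Weyl decomposition \eqref{eq:AlgebraicPeter-Weyl} the conjugation invariants $\cO[G]^G$ are spanned by the characters $\chi_W=\mathrm{tr}_W\circ\pi_W$, so it suffices to prove the two identities for $H=\chi_W$. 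For $H=\chi_W$ one has the explicit formula $\nabla\chi_W(A)=\sum_J\mathrm{tr}_W\!\big(\pi_W(e_J)\pi_W(A)\big)e_J$, and the defining property $\chi_W(gAg^{-1})=\chi_W(A)$ is equivalent to the single algebraic statement $\mathrm{Ad}_A\,\nabla\chi_W(A)=\nabla\chi_W(A)$, i.e. $[\nabla\chi_W(A),A]=0$; this commutation is the only use I will make of centrality.

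Next I would compute the relevant brackets from the building blocks. Since $A$ and each of $E,F,G$ are distinct generators, $\{\pi_W(A)\otimes\pi_V(E)\}$ (and likewise with $F$, $G$) is, by the rule stated after \eqref{eq:BracketsG}, the sum of building-block terms taken over pairs of half-edges meeting at a common vertex. The loop $A$ contributes exactly two half-edges at $p_j$, an outgoing one carrying $\pi_W(A)$ and an incoming one carrying $\pi_W(A)^{-1}$, while $E$ (resp. $F$, $G$) contributes a single half-edge at $p_j$ and has its other half-edge at a different marked point; hence there are only two contributions in each case. After converting the $\pi_W(A)^{-1}$-term into a statement about $\pi_W(A)$ by applying the Leibniz identity to $\pi_W(A)\pi_W(A)^{-1}=\mathrm{Id}$, the two contributions take the shape $(\pi_W(A)\otimes\pi_V(E))\,r$ and $\pm(\pi_W(A)\otimes\pi_V(E))\,r_{21}$, where the relative sign is $+$ exactly when the half-edge of $E$ lies \emph{between} the two half-edges of $A$ in the cyclic order at $p_j$ (the separating configuration of Figure \ref{fig:RelativeOrderOfHalfEdgesForPairs}(a)) and $-$ when it lies to one side of both of them (the disjoint configurations (b), (c)). In the separating case the sum is therefore $(\pi_W(A)\otimes\pi_V(E))(r+r_{21})=(\pi_W(A)\otimes\pi_V(E))\,I$ with $I=\sum_Je_J\otimes e_J$ of \eqref{eq:QuadraticCasimir}; taking the partial trace over the first tensor factor to pass from $\pi_W(A)$ to $\chi_W(A)$ and inserting the explicit form of $\nabla\chi_W(A)$ produces exactly $\sum_J(e_J,\nabla\chi_W(A))\,\pi_V(Ee_J)$, which is \eqref{eq:SeparatingDynamics}. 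In the two disjoint cases the two contributions are negatives of one another, so the bracket already vanishes before tracing, giving the first two equalities of \eqref{eq:TrivialDynamicsProp}.

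For the last equality $\{\chi_W(A),\pi_V(A)\}=0$ I would use the self-bracket \eqref{eq:SelfBracketClosedLeft}: taking the partial trace of \eqref{eq:SelfBracketClosedLeft} over the first tensor factor, using cyclicity of the trace together with $r+r_{21}=I$ to recombine the four terms, collapses the result to $\tfrac12\,\pi_V\big([\nabla\chi_W(A),A]\big)$, which is zero by centrality. (Geometrically this is the statement that the Hamiltonian flow of $\chi_W(A)$ conjugates everything ``enclosed'' by the loop $a$ by $\exp(t\nabla\chi_W(A))$, hence fixes holonomies disjoint from that region, and fixes $A$ itself because $\nabla\chi_W(A)$ commutes with $A$; but I would not rely on this picture in the formal argument.)

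The main obstacle is the sign-and-ordering bookkeeping in the second paragraph: one must verify, for each of the three configurations of Figure \ref{fig:RelativeOrderOfHalfEdgesForPairs} and for every cyclic arrangement of half-edges compatible with it, that the correct building block among \eqref{eq:BracketLOO}--\eqref{eq:BracketLOI} is selected and that the Leibniz conversion of the incoming half-edge of $a$ contributes precisely the sign which makes ``$E$'s half-edge lies between $A$'s half-edges'' equivalent to the appearance of the Casimir $I$, and ``$F$'s (resp. $G$'s) half-edge lies to one side'' equivalent to cancellation. Once this dichotomy is pinned down, everything else---the reduction to characters, the evaluation of the partial trace, and the final commutator vanishing---follows routinely from $r+r_{21}=I$ and from $[\nabla H(A),A]=0$; in particular the classical Yang--Baxter equation is not needed.
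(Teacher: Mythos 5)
Your proposal follows essentially the same route as the paper's proof: reduce to a spanning family of central functions, expand $\{A\otimes E\}$, $\{A\otimes F\}$, $\{A\otimes G\}$, $\{A\otimes A\}$ into the Fock--Rosly building blocks, and then use the partial trace over the first factor together with $r+r_{21}=I$ for the separating case, cancellation for the two disjoint cases, and the invariance of $I$ (equivalently your $[\nabla H(A),A]=0$) for the self-bracket. The paper reduces to $H(A)=\mathrm{Tr}\,A^k$ and applies the Leibniz rule to powers rather than invoking Peter--Weyl for characters, but this is cosmetic. One imprecision to correct: the two contributions do \emph{not} "take the shape" $(\pi_W(A)\otimes\pi_V(E))\,r$ and $\pm(\pi_W(A)\otimes\pi_V(E))\,r_{21}$ as matrices, and in the disjoint cases the bracket does \emph{not} vanish before tracing. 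For instance $\{A\otimes G\}=(A\otimes G)\,r-(\mathrm{Id}\otimes G)\,r\,(A\otimes\mathrm{Id})$ is a nonzero matrix; the two terms only become equal (hence cancel, resp.\ combine into the Casimir $I$ in the separating case) after applying $\mathrm{Tr}_1$ and using cyclicity of the trace in the first tensor slot. The conclusion is unaffected, but as stated that step is false and should be rephrased as a cancellation under the partial trace.
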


\begin{proof}
It is enough to consider $H(A)=\mathrm{Tr}\,A^k$ where trace is taken over a finite dimensional representation (which can be different from $V$).
Poisson bracket between matrix elements of $A$ and $E$ can be calculated as a sum of two terms: first term of the form (\ref{eq:BracketLOO}) corresponds to the ordered pair of half edges $(a,e)$ adjacent to the same vertex; second term of the form (\ref{eq:BracketGOI}) corresponds to the ordered pair of half edges $(e,a^{-1})$. As a result, we have
\begin{align*}
\{A\otimes E\}=&(A\otimes E)\,r_{12}+(\mathrm{Id}\otimes E)\,r_{21}\,(A\otimes\mathrm{Id}).
\end{align*}
Hence, by Leibnitz Identity for all $k\in\mathbb N$ we get
\begin{align*}
\{A^k\otimes E\}=&\sum_{j=0}^{k-1}(A^j\otimes\mathrm{Id})\{A\otimes E\} (A^{k-j-1}\otimes\mathrm{Id})\\
=&\sum_{j=0}^{k-1}\Big((A^{j+1}\otimes E)\, r_{12}\, (A^{k-j-1}\otimes\mathrm{Id})+ (A^{j}\otimes E)\, r_{21}\, (A^{k-j}\otimes\mathrm{Id})\Big).
\end{align*}
Taking partial trace with respect to the first component we obtain (\ref{eq:SeparatingDynamics}) as
\begin{align*}
\{\mathrm{Tr}\,A^k,E\}=&\mathrm{Tr}_1\{A^k\otimes E\}\\
=&\sum_{j=0}^{k-1}\mathrm{Tr}_1\Big( (A^{j+1}\otimes E)\,r_{12}\,(A^{k-j-1}\otimes\mathrm{Id})\Big)+ \sum_{j=0}^{k-1}\mathrm{Tr}_1\Big((A^{j}\otimes E)\,r_{21}\,(A^{k-j}\otimes\mathrm{Id})\Big)\\
=&\sum_{j=0}^{k-1}\mathrm{Tr}_1\Big((A^k\otimes E)\,r_{12}\Big)+\sum_{j=0}^{k-1}\mathrm{Tr}_1\Big((A^k\otimes E)\,r_{21}\Big)=k\,\mathrm{Tr}_1\Big((A^k\otimes E)\,I\Big).
\end{align*}

Poisson bracket between matrix elements of $A$ and $G$ can be calculated as a sum of two terms: first term of the form (\ref{eq:BracketLOO}) corresponds to ordered pair of half edges $(a,g)$; second term of the form (\ref{eq:BracketLIO}) corresponds to ordered pair of half edges $(a^{-1},g)$. Hence, we have
\begin{align*}
\{A\otimes G\}=&(A\otimes G)\,r-(\mathrm{Id}\otimes G)\,r\,(A\otimes\mathrm{Id}).
\end{align*}
By Leibnitz Identity we get that for all $k\in\mathbb N$
\begin{align*}
\{A^k\otimes G\}=\sum_{j=0}^{k-1}\Big((A^{j+1}\otimes G)\,r\,(A^{k-j-1}\otimes\mathrm{Id})-(A^j\otimes G)\,r\,(A^{k-j}\otimes\mathrm{Id})\Big).
\end{align*}
Taking partial trace with respect to the first component we get
\begin{align*}
\{\mathrm{Tr}\,A^k,G\}=&\sum_{j=0}^{k-1}\mathrm{Tr}_1\Big((A^{j+1}\otimes G)\, r\, (A^{k-j-1}\otimes\mathrm{Id})\Big)-\sum_{j=0}^{k-1} \mathrm{Tr}_1\Big((A^{j}\otimes G)\,r\,(A^{k-j}\otimes\mathrm{Id})\Big)\\
=&k\,\mathrm{Tr}_1\Big((A^k\otimes G)\,r\Big)-k\,\mathrm{Tr}_1\Big((A^k\otimes G)\,r\Big)=0.
\end{align*}

Poisson bracket between matrix elements of $A$ and $F$ is determined by (\ref{eq:BracketGOO}) and (\ref{eq:BracketGOI}) and has the following form
\begin{align}
\{A\otimes F\}=&-(A\otimes F)\,r_{21}+(\mathrm{Id}\otimes F)\,r_{21}\,(A\otimes\mathrm{Id})
\label{eq:RbracketAF}
\end{align}
Note that $r_{21}$ is on the same side from $F$ in both terms of (\ref{eq:RbracketAF}), so as in the previous case we obtain for all $k\in\mathbb N$
\begin{align*}
\{\mathrm{Tr}\,A^k,F\}=0.
\end{align*}

Finally, Poisson bracket between matrix elements of $A$ is given by (\ref{eq:SelfBracketClosedRight})
\begin{align*}
\{A\otimes A\}=(A\otimes A)\,r_a+r_a\,(A\otimes A)+(\mathrm{Id}\otimes A)\,r_{21}\,(A\otimes\mathrm{Id})-(A\otimes\mathrm{Id})\,r_{12}\,(\mathrm{Id}\otimes A)
\end{align*}
By Leibnitz Identity we get for all $k\in\mathbb N$
\begin{align*}
\{A^k\otimes A\}=&\sum_{j=0}^{k-1}(A^j\otimes\mathrm{Id})\,\{A\otimes A\}\,(A^{k-j-1}\otimes\mathrm{Id})\\
=&\sum_{j=0}^{k-1}\Big((A^{j+1}\otimes A)\,r_a\,(A^{k-j-1}\otimes\mathrm{Id}) +(A^j\otimes\mathrm{Id})\,r_a\,(A^{k-j}\otimes A)\\
&+(A^j\otimes A)\,r_{21}\,(A^{k-j}\otimes\mathrm{Id})-(A^{j+1}\otimes\mathrm{Id})\,r_{12}\, (A^{k-j-1}\otimes A)\Big)
\end{align*}
Taking the partial trace with respect to the first component we have
\begin{equation}
\begin{aligned}
\{\mathrm{Tr}\,A^k,A\}=&k\,\mathrm{Tr}_1\Big((A^k\otimes A)\,r_a+r_a\,(A^k\otimes A)+(A^k\otimes A)\,r_{21}-r_{12}\,(A^k\otimes A)\Big)\\
=&\frac k2\,\mathrm{Tr}_1\,[A^k\otimes A,I].
\end{aligned}
\label{eq:SelfActionOriginal}
\end{equation}
From the invariance of $I$ and from the cyclic invariance of the trace we get
\[
(\mathrm{Tr}\otimes\mathrm{Id})((A^k\otimes A)I)=(\mathrm{Tr}\otimes\mathrm{Id})((A^{k-1}I(A\otimes A))=
(\mathrm{Tr}\otimes\mathrm{Id})(I(A^k\otimes A))
\]
This implies that the bracket $\{\mathrm{Tr}\,A^k,A\}$ vanishes.
\end{proof}

\begin{figure}
\begin{subfigure}{0.23\linewidth}
\centering
\begin{tikzpicture}
\draw[thick,-->-] (0,0.43) to[out=135,in=170] (0.5,2) to[out=-10,in=20] (0,0.43);
\fill[white] (0,1.75) circle (0.12);
\draw[thick,-->-] (0,0.43) to[out=180-135,in=180-170] (-0.5,2) to[out=180+10,in=180-20] (0,0.43);
\draw[fill,pattern=north east lines,draw=none] (-1.5,0) to[out=30,in=150] (1.5,0) to (1.5,-0.3) to[out=150,in=30] (-1.5,-0.3) to (-1.5,0);
\draw[ultra thick] (-1.5,0) to[out=30,in=150] (1.5,0);
\fill (0,0.43) circle (0.12);
\fill[gray] (0,0.43) circle (0.08);
\draw (-1.3,1.5) node {$X$};
\draw (1.3,1.5) node {$Y$};
\end{tikzpicture}
\caption{Genus one pair}
\label{fig:GenusOnePair}
\end{subfigure}
\qquad\qquad
\begin{subfigure}{0.23\linewidth}
\centering
\begin{tikzpicture}
\draw[thick,-->-] (0,0.43) to[out=80,in=150] (1.2,2) to[out=-30,in=20] (0,0.43);
\draw[thick,--<-] (0,0.43) to[out=180-80,in=180-150] (-1.2,2) to[out=180+30,in=180-20] (0,0.43);
\draw[fill,pattern=north east lines,draw=none] (-1.5,0) to[out=30,in=150] (1.5,0) to (1.5,-0.3) to[out=150,in=30] (-1.5,-0.3) to (-1.5,0);
\draw[ultra thick] (-1.5,0) to[out=30,in=150] (1.5,0);
\fill (0,0.43) circle (0.12);
\fill[gray] (0,0.43) circle (0.08);
\draw (-1.5,1.1) node {$Y$};
\draw (1.5,1.1) node {$A$};
\end{tikzpicture}
\caption{Genus zero pair}
\label{eq:fig:GenusZeroPair}
\end{subfigure}
\caption{Relative order of half edges for pairs of closed arcs.}
\label{fig:ClosedPairs}
\end{figure}
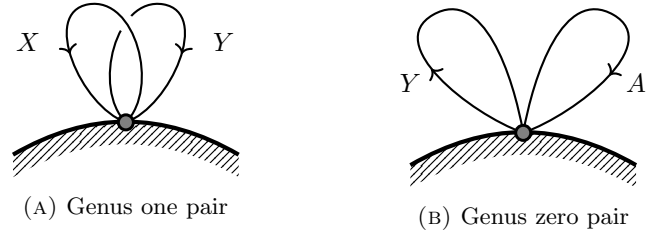

\begin{corollary}
Let $(y,x)$ and $(y,a)$ be pairs of generators with relative order of half edges as shown on Figure \ref{fig:ClosedPairs} and $H$ and $\pi_V$ as in the previous proposition. Denote by $A,X,Y\in \mathrm{Hom}(\pi_1(\Sigma,p_1,\dots,p_m),G)$ the  holonomies along the corresponding closed oriented arcs. We have
\begin{subequations}
\begin{align}
\{H(Y), \pi_V(X)\}=&\sum_J (e_J, \nabla H(A))\,\pi_V(Xe_J),
\label{eq:GenusOneAction}\\
\{H(Y), \pi_V(A)\}=&\{H(Y), \pi_V(Y)\}=0.
\label{eq:GenusZeroAction}
\end{align}
\end{subequations}
\end{corollary}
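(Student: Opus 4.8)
The plan is to deduce the corollary from Proposition \ref{prop:AkActionHalfEdgesRMatrix} by the observation that the Fock--Rosly bracket depends only on the cyclic order of half-edges at the common base point $p_j$, so that each of the two configurations in Figure \ref{fig:ClosedPairs} reproduces, locally near $p_j$, one of the configurations of Figure \ref{fig:RelativeOrderOfHalfEdgesForPairs} already treated there. As a preliminary step, exactly as at the beginning of the proof of Proposition \ref{prop:AkActionHalfEdgesRMatrix}, I would reduce to the case $H(Y)=\mathrm{Tr}\,Y^k$ with $k\in\mathbb N$: such functions span the central functions on $G$, and both sides of (\ref{eq:GenusOneAction})--(\ref{eq:GenusZeroAction}) depend on $H$ only through $\nabla H$.

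For (\ref{eq:GenusOneAction}) the closed arc $y$ contributes an outgoing and an incoming half-edge at $p_j$, and likewise $x$ contributes two half-edges; reading Figure \ref{fig:GenusOnePair} one checks that the relative cyclic order of these four half-edges is the linked (commutator) one. Then $\{Y\otimes X\}$ is the sum of the at most four terms of the form (\ref{eq:BracketsL}), (\ref{eq:BracketsG}) attached to pairs of mutually adjacent half-edges; applying the Leibniz rule to pass from $Y$ to $Y^k$, taking the partial trace $\mathrm{Tr}_1$, and then using invariance of $I=r_{12}+r_{21}$ together with cyclic invariance of the trace --- verbatim as in the derivation of (\ref{eq:SeparatingDynamics}) --- I would obtain
\[
\{\mathrm{Tr}\,Y^k,\pi_V(X)\}=k\,\mathrm{Tr}_1\big((Y^k\otimes X)\,I\big)=k\sum_J\mathrm{Tr}(Y^ke_J)\,\pi_V(Xe_J),
\]
which is (\ref{eq:GenusOneAction}) after rewriting the right-hand side through $\nabla H$.

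For (\ref{eq:GenusZeroAction}) the half-edges in Figure \ref{eq:fig:GenusZeroPair} are in the unlinked order, both half-edges of $a$ lying on one side of the two half-edges of $y$; then the contributions to $\{Y\otimes A\}$ come in pairs whose $r$-matrix factors sit on opposite sides of the holonomy and cancel after the partial trace against $Y^k$, which is the same cancellation that gave $\{\mathrm{Tr}\,A^k,\pi_V(F)\}=\{\mathrm{Tr}\,A^k,\pi_V(G)\}=0$ in Proposition \ref{prop:AkActionHalfEdgesRMatrix}. The vanishing of $\{H(Y),\pi_V(Y)\}$ is the self-bracket case (\ref{eq:SelfBracketClosedRight})--(\ref{eq:SelfBracketClosedLeft}): applying Leibniz and taking $\mathrm{Tr}_1$ produces $\tfrac{k}{2}\,\mathrm{Tr}_1[\,Y^k\otimes Y,\,I\,]$, which vanishes by invariance of $I$ and cyclicity of the trace, exactly as in (\ref{eq:SelfActionOriginal}).

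The only part requiring genuine care is the bookkeeping of the up-to-four $r$-matrix terms in each of $\{Y\otimes X\}$ and $\{Y\otimes A\}$: one must track which half-edges are outgoing and which incoming --- hence whether (\ref{eq:BracketsL}) or (\ref{eq:BracketsG}) applies and which of their sub-cases --- and on which side of the relevant holonomy each $r$-matrix factor lands. Once this is matched against Figure \ref{fig:RelativeOrderOfHalfEdgesForPairs} the collapse of the sums is automatic and follows the same pattern as in the proof of Proposition \ref{prop:AkActionHalfEdgesRMatrix}, with no new phenomenon appearing.
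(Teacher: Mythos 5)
Your proposal is correct and takes essentially the same route as the paper: the paper's proof likewise reduces each configuration of Figure \ref{fig:ClosedPairs} to the half-edge cases of Proposition \ref{prop:AkActionHalfEdgesRMatrix}, attributing the nonzero term in (\ref{eq:GenusOneAction}) to the outgoing half-edge of $X$ lying between the two half-edges of $Y$ (a separating-pair contribution) and the vanishing in (\ref{eq:GenusZeroAction}) to the fact that no half-edge of $A$ lies between those of $Y$. Your explicit bookkeeping of the individual $r$-matrix terms and the self-bracket computation is just a more detailed rendering of the same argument.
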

\begin{proof}
Indeed, the Poisson bracket between $\mathrm{Tr}\,Y^k$ and $X_{ij}$ gets two contributions of the form (\ref{eq:SeparatingDynamics}) and (\ref{eq:TrivialDynamicsProp}) which correspond to the outgoing and the ingoing half edges of $X$ respectively. As a result, we obtain (\ref{eq:GenusOneAction}).

On the other hand, the Poisson bracket between $\mathrm{Tr}\,Y^k$ and $A_{ij}$ is trivial due to (\ref{eq:TrivialDynamicsProp}) because none of the half edges of $A$ appear in between half edges of $Y$. Similar logic applies to the bracket between $\mathrm{Tr}\,A^k$ and $Y_{ij}$.
\end{proof}

Note that the Hamilton flow given by $H(Y)$ preserves matrix element functions of the group commutator $\pi_V(XYX^{-1}Y^{-1})$.

\begin{theorem}
Let $\Sigma$ be an oriented surface with $b>0$ boundary components and $C=C_1\sqcup\dots\sqcup C_k$ be a disjoint union of pairwise nonintersecting simple closed curves in $\Sigma$ as in Theorem \ref{theorem:SuperIntSystemsCycles}, $H: G\to \CC$ be a central function and $\pi_V: G\to End(V)$ be a finite dimensional representation. Fix a choice of marked points $p_1,\dots,p_m\in\partial\Sigma$ and generators of $\pi_1(\Sigma,p_1,\dots,p_m)$ as in Section \ref{sec:ChoiceOfGenerators} and denote by $A_i,E_i,X_l,Y_l,Z_o \in \mathrm{Hom}(\pi_1(\Sigma,p_1,\dots,p_m),G)$ the holonomies along arcs $a_i,e_i,x_l,y_l,z_o$, then
\begin{align*}
\{H(A_i),\pi_V(E_j)\}=&\left\{\begin{array}{ll}
\sum_J (e_J, \nabla H(A_i))\,\pi_V(E_je_J),&i=j,\\[5pt]
0,&i\neq j,
\end{array}\right.\\[5pt]
\{H(A_i),\pi_V(X_l)\}=&\{H(A_i),\pi_V(Y_l)\}=\{H(A_i),\pi_V(Z_o)\}=0,\\[10pt]
\{H(Y_l),\pi_V(X_s)\}=&\left\{\begin{array}{ll}
\sum_J (e_J, \nabla H(Y_l))\,\pi_V(X_se_J),&l=s,\\[5pt]
0,&l\neq s,
\end{array}\right.\\[5pt]
\{H(Y_l),\pi_V(A_i)\}=&\{H(Y_l),\pi_V(E_i)\}=\{H(Y_l),\pi_V(Z_o)\}=0.
\end{align*}
\label{theorem:HamiltonFlows}
\end{theorem}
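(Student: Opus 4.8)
The plan is to reduce the theorem to the local computations already carried out in Proposition \ref{prop:AkActionHalfEdgesRMatrix} and its Corollary, and then to read off the relevant local configurations from the ribbon graph constructed in Section \ref{sec:ChoiceOfGenerators}.

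First I would recall that, by the very definition of the Fock--Rosly bracket, the Poisson bracket between matrix elements of holonomies along two generators is a finite sum of terms indexed by ordered pairs of half-edges adjacent to a common vertex: a term of the form (\ref{eq:BracketsL})--(\ref{eq:BracketsG}) when the two half-edges belong to different arcs, and a term of the form (\ref{eq:SelfBracketOpen})--(\ref{eq:SelfBracketClosedLeft}) when they belong to the same arc. Since a central $H$ is a polynomial in the functions $\mathrm{Tr}\,A_i^{\,k}$ (resp. $\mathrm{Tr}\,Y_l^{\,k}$), $k\geq 1$, and Poisson brackets obey the Leibniz rule, it suffices to treat $H(A_i)=\mathrm{Tr}\,A_i^{\,k}$, which is exactly the case handled in the proof of Proposition \ref{prop:AkActionHalfEdgesRMatrix}. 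There each local contribution is shown to vanish unless a half-edge of the second generator lies strictly between the two half-edges of $A_i$ (resp. $Y_l$) in the cyclic order at the vertex -- the ``separating'' configuration of Figure \ref{fig:RelativeOrderOfHalfEdgesForPairs}(a), resp. Figure \ref{fig:GenusOnePair} -- in which case it contributes $\sum_J(e_J,\nabla H(A_i))\,\pi_V(E_j e_J)$ (resp. the analogous term with $X_s$); and the ``self'' contribution, where both half-edges belong to $A_i$ (resp. $Y_l$), vanishes by the last identity of the proposition. Thus the whole theorem follows once the cyclic position of the half-edges of each generator relative to those of $a_i$ (resp. $y_l$) is known.

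This combinatorial bookkeeping is where the particular choices of Section \ref{sec:ChoiceOfGenerators} enter. The loop $a_i$ at $p_{j(i)}$ is homotopic to $C_i$ and chosen sufficiently close to $e_i\cup C_i$, so that it separates $\Sigma_{j(i)}$ into a cylindrical neighbourhood of $C_i$ -- which no other chosen curve enters -- and a copy of the original surface, while $e_i$ crosses $C_i$ exactly once. Hence at $p_{j(i)}$ the two half-edges of $a_i$ are consecutive in the cyclic order except for exactly one half-edge of $e_i$ wedged between them, and $a_i$ has no half-edge at any other vertex. Therefore the ordered pair $(a_i,e_i)$ sits in the separating configuration, every pair $(a_i,g)$ with $g\in\{e_j\ (j\neq i),x_l,y_l,z_o\}$ sits in one of the disjoint configurations of Figure \ref{fig:RelativeOrderOfHalfEdgesForPairs}(b)--(c) (or the arcs have no common vertex), and the self-pair contributes nothing; this gives the first two displayed identities. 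The last two are identical with $y_l$ in the role of $a_i$ and $x_l$ in the role of $e_i$: $y_l$ is homotopic to a nonseparating component $C_{\gamma}$ and $x_l$ is the unique chosen generator crossing it once, both loops at one vertex in the configuration of Figure \ref{fig:GenusOnePair}, so $(y_l,x_l)$ falls under (\ref{eq:GenusOneAction}) and all other pairs under (\ref{eq:GenusZeroAction}) or involve disjoint arcs.

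The main obstacle is precisely the last paragraph: verifying rigorously that the ribbon-graph structure produced by the inductive, surgical construction of Section \ref{sec:ChoiceOfGenerators} really places exactly one half-edge -- that of $e_i$ (resp. $x_l$) -- between the two half-edges of $a_i$ (resp. $y_l$), and leaves every other generator in a disjoint position. This amounts to carefully tracking the normalisations made there (sliding the intersection points of the arcs $e_j$ to the marked points, choosing each $a_i$ close to $e_i\cup C_i$, forbidding curves from entering the cut-off cylinders), since a single mislabelled cyclic order would alter a bracket. Once this is established, the statement is a routine assembly of the local formulas.
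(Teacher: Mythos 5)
Your proposal is correct and follows essentially the same route as the paper, which states Theorem \ref{theorem:HamiltonFlows} as a direct consequence of Proposition \ref{prop:AkActionHalfEdgesRMatrix} and its corollary combined with the half-edge orderings fixed in Section \ref{sec:ChoiceOfGenerators} (cf.\ Figure \ref{fig:OrderHalfEdges}), leaving the combinatorial bookkeeping you flag as the "main obstacle" implicit in that figure.
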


Particular cases of Theorem \ref{theorem:HamiltonFlows} already appeared in the literature. For example, these brackets were computed in \cite{FockRosly'1999} for $G=SL(N,\mathbb C)$ and a torus with one boundary component.

\begin{corollary}
Fix $i,1\leqslant i\leqslant r$ and a central function $H: G\to \CC$. Then the Hamiltonian flow generated by the function $H$ evaluated on a cycle $C_i\in\Sigma$ is given by the following expressions
\begin{itemize}
\item When $C_i\in\partial\Sigma_a\cap\partial\Sigma_b$ separates a pair of distinct connected components of $\Sigma\backslash C$ we have
    \begin{align*}
    E_j(t)=&\left\{\begin{array}{ll}
    E_j\,\mathrm{exp}\left(t\nabla H(A_i)\right),&i=j,\\[5pt]
    E_j,&i\neq j,
    \end{array}\right.\\[10pt]
    X_l(t)=&X_l,\qquad
    Y_l(t)
    =Y_l,\\[5pt]
    A_j(t)=&A_j,\qquad
    Z_o(t)=Z_o.
    \end{align*}
\item Similarly, when $C_l$ is a nonseparating cycle which belongs to the boundary of a single connected component of $\Sigma\backslash C$ the Hamiltonian flow generated by $H$ evaluated at the holonomy along $C_l$
is given by
    \begin{align*}
    X_s(t)=&\left\{\begin{array}{ll}
    X_s\,\mathrm{exp}\left(t\nabla H(Y_l)\right ),&l=s,\\[5pt]
    X_s,&l\neq s,
    \end{array}\right.\\[10pt]
    A_j(t)=&A_j,\qquad
    E_j(t)=E_j\,\\[5pt]
    Y_l(t)=&Y_l,\qquad
    Z_o(t)=Z_o.
    \end{align*}
\end{itemize}
\end{corollary}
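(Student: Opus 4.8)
The plan is to read the equations of motion off Theorem~\ref{theorem:HamiltonFlows} and integrate them. First I would identify the Hamiltonian precisely: when $C_i$ separates $\Sigma_a$ from $\Sigma_b$, the loop $a_i$ is freely homotopic to $C_i$, so the holonomy along $a_i$ lies in the conjugacy class of the holonomy around $C_i$; since $H$ is a class function, the function ``$H$ evaluated on $C_i$'' coincides with $H(A_i)$ on $\mathrm{Hom}(\pi_1(\Sigma,p_1,\dots,p_m),G)$. Likewise, when $C_l$ is nonseparating and contained in a single component, $y_l$ is freely homotopic to $C_l$, so the Hamiltonian is $H(Y_l)$.

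Next I would write, for any finite dimensional representation $V$ and any generator $M$, the Hamilton equation $\frac{d}{dt}\pi_V(M(t))=\{H(A_i),\pi_V(M)\}$ evaluated along the flow. By Theorem~\ref{theorem:HamiltonFlows} the bracket on the right vanishes when $M$ is one of $X_l$, $Y_l$, $Z_o$, or $E_j$ with $j\neq i$; it also vanishes for $M=A_i$ because $H$ is central (Proposition~\ref{prop:AkActionHalfEdgesRMatrix}), and for $M=A_j$ with $j\neq i$ because the pair $(a_i,a_j)$ sits in one of the ``disjoint'' relative positions of Figure~\ref{fig:RelativeOrderOfHalfEdgesForPairs}, so (\ref{eq:TrivialDynamicsProp}) applies. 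Hence all these holonomies are constant along the flow, which gives $A_j(t)=A_j$, $X_l(t)=X_l$, $Y_l(t)=Y_l$, $Z_o(t)=Z_o$, and $E_j(t)=E_j$ for $j\neq i$. The single nontrivial equation is
\[
\frac{d}{dt}\pi_V(E_i(t))=\sum_J\bigl(e_J,\nabla H(A_i)\bigr)\,\pi_V\bigl(E_i(t)\,e_J\bigr).
\]

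To integrate this I would use that $\{e_J\}$ is orthonormal for the Killing form, so $\sum_J(e_J,\xi)\,e_J=\xi$ for every $\xi\in\mathfrak g$; with $\xi=\nabla H(A_i)$ the right-hand side becomes $\pi_V(E_i(t))\,\pi_V(\nabla H(A_i))$. Since $A_i$ is conserved by the previous step, $\xi=\nabla H(A_i)\in\mathfrak g$ does not depend on $t$, and the equation reduces to the linear matrix ODE $\frac{d}{dt}\pi_V(E_i(t))=\pi_V(E_i(t))\,\pi_V(\xi)$ with initial value $\pi_V(E_i)$. Its unique solution, defined for all $t$, is $\pi_V(E_i)\exp\bigl(t\,\pi_V(\xi)\bigr)=\pi_V\bigl(E_i\exp(t\xi)\bigr)$; as $V$ is arbitrary this yields $E_i(t)=E_i\exp\bigl(t\nabla H(A_i)\bigr)$. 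The nonseparating case runs word for word the same, with $(A_i,E_i)$ replaced by $(Y_l,X_l)$ and (\ref{eq:GenusOneAction})--(\ref{eq:GenusZeroAction}) of Theorem~\ref{theorem:HamiltonFlows} used in place of the separating brackets.

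The hard part is not the integration: once Theorem~\ref{theorem:HamiltonFlows} is available, the vector field has frozen coefficients on the locus where the conserved quantities are fixed, and the flow is an elementary right translation by a one-parameter subgroup of $G$. The one point that genuinely needs care is justifying that $\nabla H(A_i)$ stays constant along the flow, which is exactly why the argument must first establish that $A_i$ itself is conserved, and this in turn relies on $H$ being a class function. The remaining ingredient is the bookkeeping already carried out in Section~\ref{sec:ChoiceOfGenerators}: the generators are chosen so that at each marked point $(a_i,e_i)$ (resp. $(y_l,x_l)$) appears in the ``separating''/``genus one'' relative order while every other pair appears in a ``disjoint'' order, so that Theorem~\ref{theorem:HamiltonFlows} applies verbatim.
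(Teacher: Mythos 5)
Your proposal is correct and is exactly the argument the paper intends: the corollary is stated as an immediate consequence of Theorem \ref{theorem:HamiltonFlows}, obtained by freezing the conserved holonomies and integrating the one remaining linear equation as a right translation by the one-parameter subgroup $\exp(t\nabla H(A_i))$ (resp. $\exp(t\nabla H(Y_l))$). Your two supplementary observations — that $H(A_i)$ represents ``$H$ evaluated on $C_i$'' because $H$ is a class function, and that $\nabla H(A_i)$ is constant along the flow because $A_i$ itself is conserved — are precisely the points the paper leaves implicit, and you fill them in correctly.
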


\begin{remark}
Let $\psi_t$ is the evolution on the algebra of functions on $\mathrm{Hom}(\pi_1(\Sigma,p_1,\dots,p_m),G)$
generated by $H$, i.e.
\[
\psi^H_t(F)=F+\sum_{n\geq 1} \frac{t^n}{n!} \{H,\{H, \dots, \{H,F\}\dots\}\}.
\]
If the functions $H_1$ and $H_2$ Poisson commute then $\psi^{H_1}_{t_1}\psi^{H_2}_{t_2}=\psi^{H_2}_{t_2}\psi^{H_1}_{t_1}$. If $H_1,\dots, H_k$ is a complete set of independent functions on holonomies along $C$, their joint flow lines
generate ange variables  on their level surfaces.
\end{remark}

\section{Particular Cases}

Here we will describe Poisson algebras $\mathcal O[\mathcal M_\Sigma^G]$ and corresponding superintegrable systems explicitly in few low dimensional
examples when $\Sigma$ is a torus with one or two boundary components. Writing out examples inevitably brings us to the problem of an explicit description of the invariant ring $\mathcal O[\mathcal M_\Sigma^G]$. For a simple linear algebraic group $G$ such invariant rings are always finitely generated, yet the number of generators grows much faster than the dimension of the corresponding moduli space. Here we describe subalgebras $Z_{\partial\Sigma}\subset B_C\subset J_{\Sigma\backslash C}\subset \mathcal O[\mathcal M_\Sigma^G]$ and their Poisson brackets internally, in terms of generators of the invariant ring $\mathcal O[\mathcal M_\Sigma^G]$ for some simple examples.

\subsection{Torus with one boundary component, $G=SL(2,\mathbb C)$}

Let $\Sigma_{1,1}$ denote the torus with one boundary component. Choose a marked point $p_1\in\partial\Sigma_{1,1}$. The fundamental group $\pi_1(\Sigma_{1,1},p_1)$ is freely generated by a pair of arcs $X,Y$ along the equator and meridian of the torus. Let $\Gamma$ be the ribbon graph associated to such choice of generators. The data of $\Gamma$ is encoded by an ordered set
\begin{align}
S_1=\{X^{-1},Y,X,Y^{-1}\}
\label{eq:TorusHalfEdges}
\end{align}
of half-edges adjacent to $p_1$ as shown on Figure \ref{fig:GenusOnePair}. In (\ref{eq:TorusHalfEdges}) we have labelled half-edges by generators $X,Y$ of $\pi_1(\Sigma_{1,1},p_1)$ and their inverses $X^{-1},Y^{-1}$ according to the convention introduced in the last paragraph of Section \ref{sec:ChoiceOfGenerators}. Namely, we label an outgoing half-edge by the first power of the corresponding generator, while we label an ingoing edge by the inverse of the corresponding generator.

\subsubsection*{Moduli Space} The moduli space of flat $SL(2,\mathbb C)$-connections on a once punctured torus has dimension $n=3$. We will use notation
\[
\tau_A=\mathrm{Tr}(A)
\]
for the trace of a matrix $A$.

The coordinate ring of the moduli space  in this example is a free commutative algebra with three generators
\begin{align*}
    \mathcal O[\mathcal M_{\Sigma_{1,1}}^{SL(2,\mathbb C)}]=\mathbb C[\tau_X,\tau_Y,\tau_{XY}]
\end{align*}
where $X,Y\in SL(2,\CC)$ are elements representing standard $a$ and $b$ cycles.
The  Poisson bracket between these coordinate functions are:
\begin{equation*}
\begin{aligned}
    \{\tau_X,\tau_Y\}=&-\tau_{XY}+\frac12\tau_X\tau_Y,\\
    \{\tau_X,\tau_{XY}\}=&\tau_Y-\frac12\tau_X\tau_{XY},\\
    \{\tau_Y,\tau_{XY}\}=&-\tau_X+\frac12\tau_Y\tau_{XY}.
\end{aligned}
\end{equation*}
This is a rank $2$ Poisson structure. The center of $\mathcal O[\mathcal M_{\Sigma_{1,1}}^{SL(2,\mathbb C)}]$ is generated by a single Casimir function
\begin{align}
    Z=\mathbb C[z],\qquad z=\tau_X^2+\tau_Y^2+\tau_{XY}^2-\tau_X\tau_Y\tau_{XY}=\mathrm{Tr}\,(XYX^{-1}Y^{-1})
    \label{eq:TorusB1SL2SuperIntegrable}
\end{align}

\subsubsection*{Superintegrable system} Choose cycle $C$ defining the system to be $Y$ and the Hamiltonian $H=\tau_Y$. From (\ref{eq:subgroupC}) we obtain a chain of subgroups
\begin{align*}
\pi_1(C)=\langle XYX^{-1}Y^{-1}\rangle\quad\subset\quad \pi_1(\Sigma_{1,1}\backslash C,p)=\langle Y,XYX^{-1}\rangle\quad\subset\quad\pi_1(\Sigma_{1,1},p)=\langle X,Y\rangle.
\end{align*}
Here and below we write $\langle a_1,\dots, a_n\rangle$ for a free group generated by $a_1,\dots, a_n$.

The algebra of first integrals is then generated by three elements, two of which coincide
\begin{align*}
J=\mathcal O[\mathrm{Hom}(\pi_1(\Sigma_{1,1}\backslash C,p))]=\mathbb C[\tau_{XYX^{-1}},\tau_{Y^{-1}},\tau_{XYX^{-1}Y^{-1}}]=\mathbb C[z,H].
\end{align*}
As a result, we obtain the following chain of subalgebras of the coordinate ring
\begin{align}
    Z\quad\subset\quad B=\mathbb C[z,H]\quad\subset\quad J=\mathbb C[z,H]\quad\subset\quad\mathcal O[\mathcal M_{\Sigma_{1,1}}^{SL(2,\mathbb C)}].
\label{eq:SuperIntSystemG1B1SL2}
\end{align}
In this case the algebra of Hamiltonians coincide with teh algebra of first integrals and therefore the system is Liouville  integrable.


\subsubsection*{Mapping Class Group action}

Consider the mapping class group $Mod(\Sigma_{1,1,0})$ (relative to the boundary) of the torus with one boundary component and no punctures. It contains two left Dehn twists along the $X$ and $Y$ cycles which satisfy the braid relation and are acting on our generators as follows:
\begin{align*}
D_X:\left\{\begin{array}{l}
X\mapsto X,\\
Y\mapsto YX^{-1},
\end{array}\right.
\qquad
D_Y:\left\{\begin{array}{l}
X\mapsto XY,\\
Y\mapsto Y,
\end{array}\right.
\qquad
D_XD_YD_X=D_YD_XD_Y=\left\{\begin{array}{l}
X\mapsto XYX^{-1},\\
Y\mapsto X^{-1}.
\end{array}\right.
\end{align*}
This defines a pair of Poisson automorphisms $\mathcal D_X,\mathcal D_Y$ of the coordinate ring $\mathcal O[\mathcal M_{\Sigma_{1,1}}^{SL(2,\mathbb C)}]$
\begin{align}
\mathcal D_X=\left\{\begin{array}{l}
\tau_X\mapsto\tau_X,\\
\tau_Y\mapsto-\tau_{XY}+\tau_X\tau_Y,\\
\tau_{XY}\mapsto\tau_Y,
\end{array}\right.
\qquad
\mathcal D_Y=\left\{\begin{array}{l}
\tau_X\mapsto\tau_{XY},\\
\tau_Y\mapsto\tau_Y,\\
\tau_{XY}\mapsto-\tau_X+\tau_Y\tau_{XY},
\end{array}\right.
\label{eq:DehnTwistsG1P1SL2}
\end{align}

Note that action of $Mod(\Sigma_{1,1,0})$ on the character variety factors through the action of the Mapping Class Group of a torus with one puncture $Mod(\Sigma_{1,0,1})\simeq SL(2,\mathbb Z)$.

Poisson automorphisms (\ref{eq:DehnTwistsG1P1SL2}) define a family of isomorphic integrable systems associated to nonseparating cycles on $\Sigma_{1,1}$.

\subsection{Torus with two boundary components, $G=SL(2,\mathbb C)$.}

The coordinate ring of the moduli space has dimension $n=6$ and the coordinate ring is generated by 7 polynomials subject to the single relation
\begin{align*}
    \mathcal O[\mathcal M_{\Sigma_{1,2}}^{SL(2,\mathbb C)}]=\frac{\mathbb C[\tau_X,\tau_Y,\tau_Z,\tau_{XY},\tau_{XZ},\tau_{YZ},\tau_{XYZ}]}{(\tau_{XYZ}^2 +\mu_1\,\tau_{XYZ}+\mu_0)}.
\end{align*}
Here $\tau_{A}$ stands for $\mathrm{Tr}\,(A)$ and
\begin{align*}
\mu_1=&\tau _Z \tau _{XY}+\tau _Y \tau _{XZ}+\tau _X \tau _{YZ}-\tau _X \tau _Y \tau _Z,\\
\mu_0=&\tau _X \tau _Y \tau _{XY}+\tau _Y \tau _Z \tau _{YZ}+\tau _X \tau _Z \tau _{XZ}-\tau _{XY} \tau _{XZ} \tau _{YZ}-\tau _{XY}^2-\tau _{XZ}^2-\tau _{YZ}^2-\tau _X^2-\tau _Y^2-\tau _Z^2+4.
\end{align*}
\begin{table}
\begin{flushleft}
$
\begin{array}{c|ccc}
&\tau_X&\tau_Y&\tau_{XY}\\
\hline
\tau_X&0&*&*\\
\tau_Y& \frac{\tau _X \tau _Y}{2}-\tau _{XY}&0&* \\
\tau_{XY}& \tau _Y-\frac{1}{2} \tau _X \tau _{XY} & \frac{1}{2} \tau _Y \tau _{XY}-\tau _X&0 \\
\tau_{XZ}& 0 & \tau _{XYZ}-\frac{1}{2} \tau _Y \tau _{XZ} & -\tau _Y \tau _Z+\frac{1}{2} \tau _{XY} \tau _{XZ}+\tau _{YZ} \\
\tau_{YZ}& \tau _X \tau _Y \tau _Z-\tau _{XY} \tau _Z-\tau _Y \tau _{XZ}-\frac{1}{2} \tau _X \tau _{YZ}+\tau _{XYZ} & 0 & \tau _X \tau _Z-\tau _{XZ}-\frac{1}{2} \tau _{XY} \tau _{YZ} \\
\tau_{XYZ}& \tau _Y \tau _Z-\tau _{XY} \tau _{XZ}-\tau _{YZ}+\frac{1}{2} \tau _X \tau _{XYZ} & \frac{1}{2} \tau _Y \tau _{XYZ}-\tau _{XZ} & 0
\end{array}
$
\end{flushleft}
\vspace{0.4cm}

\begin{flushleft}
$
\begin{array}{c|cc}
&\tau_{XZ}&\tau_{YZ}\\
\hline
\tau_{YZ}& \tau _{XY}+\frac{1}{2} \tau _{XZ} \tau _{YZ}-\tau _Z \tau _{XYZ}&0\\
\tau_{XYZ}& \tau _Y-\frac{1}{2} \tau _{XZ} \tau _{XYZ}&-\tau _X+\tau _Y \tau _{XY}-\tau _Y \tau _Z \tau _{XYZ}+\frac{1}{2} \tau _{YZ} \tau _{XYZ}+\tau _Z \tau _{XZ}
\end{array}
$
\end{flushleft}
\vspace{0.4cm}

\caption{Brackets between generators of $\mathcal O[\mathcal M_{\Sigma_{1,2}}^{SL(2)}]$}
\label{tab:G1B2SL2Brackets}
\end{table}

Poisson brackets between generators are summarized in Table \ref{tab:G1B2SL2Brackets}, where we have omitted generator $\tau_Z$ which belongs to the Poisson center of $\mathcal O[\mathcal M_{\Sigma_{1,2}}^{SL(2,\mathbb C)}]$.

The Poisson center $Z$ of the coordinate ring has Krull dimension $\dim Z=2$ and has two algebraically independent Casimir elements corresponding to traces of monodromies around each of the boundary components
\begin{align*}
z_1=&\tau_Z,&z_2=\mathrm{Tr}\,(XYX^{-1}Y^{-1}Z)=-\tau _Y \tau _{XY} \tau _{XZ}+\tau _{XY} \tau _{XYZ}+\tau _X \tau _{XZ}-\tau _Y \tau _{YZ}+\tau _Y^2 \tau _Z-\tau _Z
\end{align*}

\subsubsection{A superintegrable system associated to nonseparating cycle.} Consider a superintegrable system given by a nonseparating cycle homotopic to $Y$, from (\ref{eq:subgroupC}) we get a chain of subgroups
    \begin{align}
    \pi_1(C)=\langle Y\rangle\quad\subset\quad\pi_1(\Sigma_{1,2},p\,|C)=\langle Y,XYX^{-1},Z\rangle\quad\subset\quad\pi_1(\Sigma_{1,2},p)=\langle X,Y,Z\rangle
    \label{eq:SubgroupsG1B2SL2}
    \end{align}
    Coordinate ring of the character variety of the smallest subgroup is generated by a single hamiltonian $H=\Tau_Y$. At the same time, the coordinate ring of the character variety of the middle subgroup in (\ref{eq:SubgroupsG1B2SL2}) has the following presentation\footnote{Finding a generating set together with transcendence basis for the corresponding field of fractions is always straightforward in the case of all rings involved in our paper. However, it is much more complicated to find a complete set of relations outside of the most elementary examples.

    In this particular case, one can recall that $J$ is isomorphic to a quotient ring of $\mathcal O[M_{\Sigma_{0,4}}^{SL(2)}]$ modulo the relation that traces of monodromies around the fixed pair of boundary components are equal to each other. This allows one to resolve one of the generators in $\mathcal O[M_{\Sigma_{0,4}}^{SL(2)}]$ and obtain presentation (\ref{eq:JPresentationG1B2SL2}).}
    \begin{equation}
    \begin{aligned}
    J=&\mathcal O[\mathrm{Hom}(\pi_1(\Sigma_{1,1},p\,|\,C),SL(2))]^{SL(2)}\\[5pt]
    =&\frac{\mathbb C[\Tau_Y,\Tau_Z,\Tau_{YZ},\Tau_{XYX^{-1}Z},\Tau_{YXYX^{-1}}, \Tau_{XYX^{-1}YZ}]}{\left(\Tau_{XYX^{-1}YZ}^2 +\lambda_1\Tau_{XYX^{-1}YZ}+\lambda_0\right)},
    \end{aligned}
    \label{eq:JPresentationG1B2SL2}
    \end{equation}
    where
    \begin{align*}
    \lambda_1=&\Tau _Y \Tau _{XYX^{-1}Z}+\Tau _Z \Tau _{YXYX^{-1}}+\Tau _Y \Tau _{YZ}-\Tau _Y^2 \Tau _Z,\\
    \lambda_0=&\Tau _Y^2 \Tau _{YXYX^{-1}}+\Tau _Y \Tau _Z \Tau _{XYX^{-1}Z}-\Tau _{YZ} \Tau _{YXYX^{-1}} \Tau _{XYX^{-1}Z}+\Tau _Y \Tau _Z \Tau _{YZ}\\
    &-\Tau _{YXYX^{-1}}^2-\Tau _{XYX^{-1}Z}^2-\Tau _{YZ}^2-2 \Tau _Y^2-\Tau _Z^2.
    \end{align*}
    Subalgebra $J\subset\mathcal O[\mathcal M_{\Sigma_{1,2}}^{SL(2)}]$ is closed under the Poisson bracket. Explicitly, all nonzero brackets between generators are given by
    \begin{subequations}
    \begin{align}
    \{\Tau_{XYX^{-1}Z},\Tau_{YXYX^{-1}}\}=&\Tau_Y \Tau_Z-2 \Tau_{YZ}-\Tau_{XYX^{-1}Z} \Tau_{YXYX^{-1}}+\Tau_Y \Tau_{XYX^{-1}YZ}\\
    \{\Tau_{YZ},\Tau_{YXYX^{-1}}\}=&\Tau_Z \Tau_Y^3-\Tau_{YZ} \Tau_Y^2-\Tau_{XYX^{-1}Z} \Tau_Y^2-\Tau_Z \Tau_Y-\Tau_Z \Tau_{YXYX^{-1}} \Tau_Y\label{eq:G1B2SL2JBracketPart2}\\
    &+\Tau_{XYX^{-1}YZ} \Tau_Y+2 \Tau_{XYX^{-1}Z}+\Tau_{YZ} \Tau_{YXYX^{-1}}\nonumber\\
    \{\Tau_{YZ},\Tau_{XYX^{-1}Z}\}=&\Tau_Y^2-\Tau_{YZ} \Tau_{XYX^{-1}Z}-2 \Tau_{YXYX^{-1}}+\Tau_Z \Tau_{XYX^{-1}YZ}\\
    \{\Tau_{XYX^{-1}YZ},\Tau_{YXYX^{-1}}\}=&\Tau_Z \Tau_Y^2+\Tau_{XYX^{-1}YZ} \Tau_Y^2-2 \Tau_{YZ} \Tau_Y-\Tau_{XYX^{-1}Z} \Tau_{YXYX^{-1}} \Tau_Y\\
    \{\Tau_{XYX^{-1}YZ},\Tau_{YZ}\}=&-\Tau_Y^3+\Tau_{YZ} \Tau_{XYX^{-1}Z} \Tau_Y+2 \Tau_{YXYX^{-1}} \Tau_Y-\Tau_Z \Tau_{XYX^{-1}YZ} \Tau_Y
    \end{align}
    \end{subequations}

    The algebra of first integrals (\ref{eq:JPresentationG1B2SL2}) has Krull dimension 5. One can choose its maximal algebraically independent subset to contain two Casimir functions
\begin{align*}
z_1=\Tau_Z,\qquad z_2=\Tau_Y\Tau_{XYX^{-1}Z}-\Tau_{XYX^{-1}YZ},
\end{align*}
one Hamiltonian
\[
H=\Tau_Y,
\]
and  two more first integrals
\begin{align*}
g_1=\Tau_{YZ}=\tau_{YZ},\qquad g_2=\Tau_{XYX^{-1}Y}=\tau_{XY}^2+\tau_X\tau_Y\tau_{XY}-\tau_X^2+2.
\end{align*}

\subsubsection{Separating cycle} Now consider a superintegrable system given by a separating cycle homotopic to $XYX^{-1}Y^{-1}$. As in the previous case we have two Casimir elements $z_1$ and $z_2$. Their generic level sets form a 2-parametric family of 4-dimensional symplectic manifolds.  Choose the Hamiltonian on these phase spaces as
\begin{align*}
H=\mathrm{Tr}\,(XYX^{-1}Y^{-1})=\tau_X^2+\tau_Y^2+\tau_{XY}^2-\tau_X\tau_Y\tau_{XY}-2.
\end{align*}
From (\ref{eq:SeparatingPropBJDef}) we have the following algebra of first integrals
\begin{align*}
J=&J_L\otimes_B J_R,& B=&\mathbb C[H]\simeq \mathcal O[SL(2)]^{SL(2)},\\
J_L=&\mathbb C[\tau_X,\tau_Y,\tau_{XY}]\simeq \mathcal O[\mathcal M_{\Sigma_{1,1}}^{SL(2)}],&
J_R=&\mathbb C[z_1,z_2,H]\simeq \mathcal O[\mathcal M_{\Sigma_{0,3}}^{SL(2)}].
\end{align*}
Here $J_R$ is Poisson commutative, while $J_L$ is equipped with a Poisson bracket of generic rank 2 (see first three rows of Table \ref{tab:G1B2SL2Brackets}). Algebra $J$ has Krull dimension 5. An example of maximal algebraically independent subset is $z_1,z_2, H, g_1=\tau_X,g_2=\tau_Y$, i.e. generators of the Poisson center, the Hamiltonian and two additional first integrals.

\subsection{Torus with one boundary component, $G=SL(3,\mathbb C)$.}
The moduli space of flat $SL(3,\mathbb C)$-con\-nec\-tions on a once punctured torus has dimension $n=8$, the coordinate ring is generated by 9 polynomials subject to a single relation
\begin{align}
    \mathcal O[\mathcal M_{\Sigma_{1,1}}^{SL(3)}]=\frac{\mathbb C[\tau_X,\tau_Y,\tau_{X^{-1}},\tau_{Y^{-1}},\tau_{YX},\tau_{YX^{-1}},\tau_{Y^{-1}X}, \tau_{Y^{-1}X^{-1}},\tau_{XYX^{-1}Y^{-1}}]}{\big(\tau_{XYX^{-1}Y^{-1}}^2 +\zeta_1\,\tau_{XYX^{-1}Y^{-1}}+\zeta_0)},
\label{eq:CoordinateRingG1B1SL3}
\end{align}
where $\zeta_1$ and $\zeta_0$ are polynomials in $\tau_X,\dots\tau_{X^{-1}Y^{-1}}$:
\begin{equation}
\begin{aligned}
\zeta_1=&\mathrm{Tr}\,(YXY^{-1}X^{-1})\\
=&\tau _X \tau _{X^{-1}}+\tau _X \tau _Y \tau _{X^{-1}} \tau _{Y^{-1}}-\tau _{X^{-1}} \tau _{Y^{-1}} \tau _{YX}-\tau _Y \tau _{X^{-1}} \tau _{Y^{-1}X}-\tau _X \tau _{Y^{-1}} \tau _{YX^{-1}}\\
&+\tau _{YX^{-1}} \tau _{Y^{-1}X}-\tau _X \tau _Y \tau _{Y^{-1}X^{-1}}+\tau _{YX} \tau _{Y^{-1}X^{-1}}-\tau _{XYX^{-1}Y^{-1}}+\tau _Y \tau _{Y^{-1}}-3.
\label{eq:G1B1SL3Zeta1}
\end{aligned}
\end{equation}
and
\begin{align*}
\zeta_0=&\tau _Y \tau _{Y^{-1}} \tau _X^3-\tau _X^3-\tau _Y \tau _{X^{-1}}^2 \tau _{Y^{-1}} \tau _X^2-\tau _Y^2 \tau _{YX} \tau _X^2-\tau _{Y^{-1}} \tau _{YX} \tau _X^2+\tau _{X^{-1}} \tau _{Y^{-1}} \tau _{YX^{-1}} \tau _X^2\\
&-\tau _{Y^{-1}}^2 \tau _{Y^{-1}X} \tau _X^2-\tau _Y \tau _{Y^{-1}X} \tau _X^2+\tau _Y \tau _{X^{-1}} \tau _{Y^{-1}X^{-1}} \tau _X^2-\tau _{YX^{-1}} \tau _{Y^{-1}X^{-1}} \tau _X^2+\tau _{X^{-1}} \tau _{Y^{-1}}^3 \tau _X\\
&-\tau _Y^2 \tau _{X^{-1}} \tau _{Y^{-1}}^2 \tau _X+2 \tau _Y \tau _{YX}^2 \tau _X-\tau _{YX} \tau _{YX^{-1}}^2 \tau _X+2 \tau _{Y^{-1}} \tau _{Y^{-1}X}^2 \tau _X-\tau _{Y^{-1}X} \tau _{Y^{-1}X^{-1}}^2 \tau _X\\
&+\tau _Y^3 \tau _{X^{-1}} \tau _X+6 \tau _{X^{-1}} \tau _X-\tau _Y \tau _{X^{-1}} \tau _{Y^{-1}} \tau _X+\tau _{X^{-1}}^2 \tau _{Y^{-1}} \tau _{YX} \tau _X-\tau _Y^2 \tau _{YX^{-1}} \tau _X\\
&+\tau _Y \tau _{Y^{-1}}^2 \tau _{YX^{-1}} \tau _X-3 \tau _{Y^{-1}} \tau _{YX^{-1}} \tau _X+\tau _Y \tau _{X^{-1}} \tau _{YX} \tau _{YX^{-1}} \tau _X+\tau _Y \tau _{X^{-1}}^2 \tau _{Y^{-1}X} \tau _X\\
&+\tau _Y \tau _{Y^{-1}} \tau _{YX} \tau _{Y^{-1}X} \tau _X+3 \tau _{YX} \tau _{Y^{-1}X} \tau _X-\tau _{X^{-1}} \tau _{YX^{-1}} \tau _{Y^{-1}X} \tau _X-\tau _{Y^{-1}}^2 \tau _{Y^{-1}X^{-1}} \tau _X\\
&-3 \tau _Y \tau _{Y^{-1}X^{-1}} \tau _X+\tau _Y^2 \tau _{Y^{-1}} \tau _{Y^{-1}X^{-1}} \tau _X-\tau _{X^{-1}} \tau _{YX} \tau _{Y^{-1}X^{-1}} \tau _X+\tau _{X^{-1}} \tau _{Y^{-1}} \tau _{Y^{-1}X} \tau _{Y^{-1}X^{-1}} \tau _X\\
&-\tau _Y^3-\tau _{X^{-1}}^3-\tau _{Y^{-1}}^3-\tau _{YX}^3-\tau _{YX^{-1}}^3-\tau _{Y^{-1}X}^3-\tau _{Y^{-1}X^{-1}}^3+2 \tau _Y \tau _{X^{-1}} \tau _{YX^{-1}}^2-\tau _Y \tau _{YX} \tau _{Y^{-1}X}^2\\
&+2 \tau _{X^{-1}} \tau _{Y^{-1}} \tau _{Y^{-1}X^{-1}}^2-\tau _Y \tau _{YX^{-1}} \tau _{Y^{-1}X^{-1}}^2+\tau _Y \tau _{X^{-1}}^3 \tau _{Y^{-1}}+6 \tau _Y \tau _{Y^{-1}}+\tau _Y \tau _{X^{-1}} \tau _{Y^{-1}}^2 \tau _{YX}\\
&-\tau _Y^2 \tau _{X^{-1}} \tau _{YX}-3 \tau _{X^{-1}} \tau _{Y^{-1}} \tau _{YX}-\tau _Y^2 \tau _{X^{-1}}^2 \tau _{YX^{-1}}-\tau _{X^{-1}} \tau _{YX}^2 \tau _{YX^{-1}}-\tau _{X^{-1}}^2 \tau _{Y^{-1}} \tau _{YX^{-1}}\\
&-\tau _{Y^{-1}}^2 \tau _{YX} \tau _{YX^{-1}}+3 \tau _Y \tau _{YX} \tau _{YX^{-1}}-\tau _{X^{-1}} \tau _{Y^{-1}}^2 \tau _{Y^{-1}X}-\tau _{Y^{-1}} \tau _{YX}^2 \tau _{Y^{-1}X}-3 \tau _Y \tau _{X^{-1}} \tau _{Y^{-1}X}\\
&+\tau _Y^2 \tau _{X^{-1}} \tau _{Y^{-1}} \tau _{Y^{-1}X}-\tau _{X^{-1}}^2 \tau _{YX} \tau _{Y^{-1}X}-\tau _Y \tau _{Y^{-1}} \tau _{YX^{-1}} \tau _{Y^{-1}X}+6 \tau _{YX^{-1}} \tau _{Y^{-1}X}\\
&-\tau _Y \tau _{X^{-1}}^2 \tau _{Y^{-1}X^{-1}}-\tau _{X^{-1}}^2 \tau _{Y^{-1}}^2 \tau _{Y^{-1}X^{-1}}-\tau _{Y^{-1}} \tau _{YX^{-1}}^2 \tau _{Y^{-1}X^{-1}}-\tau _{X^{-1}} \tau _{Y^{-1}X}^2 \tau _{Y^{-1}X^{-1}}\\
&-\tau _Y \tau _{Y^{-1}} \tau _{YX} \tau _{Y^{-1}X^{-1}}+6 \tau _{YX} \tau _{Y^{-1}X^{-1}}+3 \tau _{X^{-1}} \tau _{YX^{-1}} \tau _{Y^{-1}X^{-1}}+\tau _Y \tau _{X^{-1}} \tau _{Y^{-1}} \tau _{YX^{-1}} \tau _{Y^{-1}X^{-1}}\\
&-\tau _Y^2 \tau _{Y^{-1}X} \tau _{Y^{-1}X^{-1}}+3 \tau _{Y^{-1}} \tau _{Y^{-1}X} \tau _{Y^{-1}X^{-1}}-\tau _{YX} \tau _{YX^{-1}} \tau _{Y^{-1}X} \tau _{Y^{-1}X^{-1}}-9
\end{align*}
For a proof, see for example Lemma 5 in \cite{Lawton'2007}.

Poisson brackets between generators are summarized in Table \ref{Tab:G1B1SL3}, where we have omitted generator $\tau_{XYX^{-1}Y^{-1}}$ because it belongs to the Poisson center of $\mathcal O[\mathcal M_{\Sigma_{1,1}}^{SL(3,\mathbb C)}]$, see \cite{Lawton'2009}.
\begin{table}
\begin{adjustbox}{width=\columnwidth,center}
$
\begin{array}{c|ccc}
&\tau_X&\tau_Y&\tau_{X^{-1}}\\
\hline
\tau_X& 0&*&*\\
\tau_Y& \frac{\tau _X \tau _Y}{3}-\tau _{YX}&0&* \\
\tau_{X^{-1}}& 0 & \frac{1}{3} \tau _Y \tau _{X^{-1}}-\tau _{YX^{-1}}&0 \\
\tau_{Y^{-1}}& \tau _{Y^{-1}X}-\frac{1}{3} \tau _X \tau _{Y^{-1}} & 0 & \frac{1}{3} \tau _{X^{-1}} \tau _{Y^{-1}}-\tau _{Y^{-1}X^{-1}} \\
\tau_{YX}& \tau _Y \tau _{X^{-1}}-\frac{2}{3} \tau _X \tau _{YX}-\tau _{YX^{-1}} & -\tau _X \tau _{Y^{-1}}+\frac{2}{3} \tau _Y \tau _{YX}+\tau _{Y^{-1}X}& \tau _Y-\frac{1}{3} \tau _{X^{-1}} \tau _{YX} \\
\tau_{YX^{-1}}& \frac{1}{3} \tau _X \tau _{YX^{-1}}-\tau _Y & \tau _{X^{-1}} \tau _{Y^{-1}}-\frac{2}{3} \tau _Y \tau _{YX^{-1}}-\tau _{Y^{-1}X^{-1}} & -\tau _X \tau _Y+\tau _{YX}+\frac{2}{3} \tau _{X^{-1}} \tau _{YX^{-1}} \\
\tau_{Y^{-1}X}& -\tau _{X^{-1}} \tau _{Y^{-1}}+\frac{2}{3} \tau _X \tau _{Y^{-1}X}+\tau _{Y^{-1}X^{-1}} & \tau _X-\frac{1}{3} \tau _Y \tau _{Y^{-1}X} & \frac{1}{3} \tau _{X^{-1}} \tau _{Y^{-1}X}-\tau _{Y^{-1}} \\
\tau_{Y^{-1}X^{-1}}& \tau _{Y^{-1}}-\frac{1}{3} \tau _X \tau _{Y^{-1}X^{-1}} & \frac{1}{3} \tau _Y \tau _{Y^{-1}X^{-1}}-\tau _{X^{-1}} & \tau _X \tau _{Y^{-1}}-\tau _{Y^{-1}X}-\frac{2}{3} \tau _{X^{-1}} \tau _{Y^{-1}X^{-1}}
\end{array}
$
\end{adjustbox}
\vspace{0.4cm}

\begin{adjustbox}{width=\columnwidth,center}
$
\begin{array}{c|cc}
&\tau_{Y^{-1}}&\tau_{YX}\\
\hline
\tau_{YX} & \frac{1}{3} \tau _{Y^{-1}} \tau _{YX}-\tau _X&0\\
\tau_{YX^{-1}}& \tau _{X^{-1}}-\frac{1}{3} \tau _{Y^{-1}} \tau _{YX^{-1}}&-\tau _Y^2+\tau _X \tau _{X^{-1}} \tau _{Y^{-1}}+\tau _{Y^{-1}}-\frac{1}{3} \tau _{YX} \tau _{YX^{-1}}-\tau _{X^{-1}} \tau _{Y^{-1}X}-\tau _X \tau _{Y^{-1}X^{-1}}\\
\tau_{Y^{-1}X}& \tau _X \tau _Y-\tau _{YX}-\frac{2}{3} \tau _{Y^{-1}} \tau _{Y^{-1}X} &-\tau _{X^{-1}}-\tau _Y \tau _{X^{-1}} \tau _{Y^{-1}}+\tau _{Y^{-1}} \tau _{YX^{-1}}+\frac{1}{3} \tau _{YX} \tau _{Y^{-1}X}+\tau _Y \tau _{Y^{-1}X^{-1}}+\tau _X^2\\
\tau_{Y^{-1}X^{-1}}& -\tau _Y \tau _{X^{-1}}+\tau _{YX^{-1}}+\frac{2}{3} \tau _{Y^{-1}} \tau _{Y^{-1}X^{-1}}&0
\end{array}
$
\end{adjustbox}
\vspace{0.4cm}

\begin{flushleft}
$
\begin{array}{c|ccc}
&\tau_{YX^{-1}}&\tau_{Y^{-1}X}&\tau_{Y^{-1}X^{-1}}\\
\hline
\tau_{YX^{-1}}&0&0&\tau _{X^{-1}}^2-\tau _X \tau _Y \tau _{Y^{-1}}+\tau _{Y^{-1}} \tau _{YX}+\tau _Y \tau _{Y^{-1}X}+\frac{1}{3} \tau _{YX^{-1}} \tau _{Y^{-1}X^{-1}}-\tau _X\\
\tau_{Y^{-1}X}&*&0&\tau _X \tau _Y \tau _{X^{-1}}-\tau _{X^{-1}} \tau _{YX}-\tau _X \tau _{YX^{-1}}-\frac{1}{3} \tau _{Y^{-1}X} \tau _{Y^{-1}X^{-1}}-\tau _{Y^{-1}}^2+\tau _Y\\
\tau_{Y^{-1}X^{-1}}&*&*&0
\end{array}
$
\end{flushleft}
\vspace{0.4cm}

\caption{Brackets between generators of $\mathcal O[\mathcal M^{SL(3)}_{\Sigma_{1,1}}]$}
\label{Tab:G1B1SL3}
\end{table}

Coordinate ring $\mathcal O[\mathcal M_{\Sigma_{1,1}}^{SL(3)}]$ has a Poisson center $Z$ of Krull dimension $\dim Z=2$. Two algebraically idependent Casimir elements: $z_1=\tau_{YXY^{-1}X^{-1}}$ and $z_2=\tau_{XYX^{-1}Y^{-1}}$.
The formula expressing $z_1$ in terms of generators is given in (\ref{eq:G1B1SL3Zeta1})

\subsubsection{A superintegrable system} Here we consider a superintegrable system defined by a single cycle homotopic to $X$. By (\ref{eq:subgroupC}) we have a chain of inclusions of subgroups of the fundamental group
\begin{align}
\pi_1(C)=\langle X\rangle\quad\subset\quad \pi_1(\Sigma_{1,1},p\,|\,C)=\langle X,YXY^{-1}\rangle\quad\subset\quad \pi_1(\Sigma_{1,1},p)=\langle X,Y\rangle.
\label{eq:SubgroupsG1B1SL3}
\end{align}

Choose two independent  Hamiltonians in $\mathcal O[\mathrm{Hom}(\pi_1(C), SL(3)]^{SL(3)}$ as
\begin{align*}
H_1=\tau_X,\qquad H_2=\tau_{X^{-1}}
\end{align*}
These two Poisson commuting Hamiltonians define a superintegrable system with the
Poisson algebra of first integrals $J$ defined above. We can choose a maximal algebraically idependent subset in $J$ to be $z_1,z_2,H_1,H_2$  with two more integrals
\begin{align*}
g_1=&\mathrm{Tr}\,(YXY^{-1}X)=\tau _{X^{-1}}-\tau _Y \tau _{X^{-1}} \tau _{Y^{-1}}+\tau _{Y^{-1}} \tau _{YX^{-1}}+\tau _{YX} \tau _{Y^{-1}X}+\tau _Y \tau _{Y^{-1}X^{-1}},\\
g_2=&\mathrm{Tr}\,(YX^{-1}Y^{-1}X^{-1})=\tau _X-\tau _X \tau _Y \tau _{Y^{-1}}+\tau _{Y^{-1}} \tau _{YX}+\tau _Y \tau _{Y^{-1}X}+\tau _{YX^{-1}} \tau _{Y^{-1}X^{-1}}.
\end{align*}


\section{Conclusion}

\subsection{The algebra of chord diagrams on a surface and "universal superintegrable systems"}

The algebra of chord diagrams is a universal model for the Poisson algebra of functions on
moduli spaces of flat connections \cite{AndersenMattesReshetikhin'1996}, see also \cite{Turaev'1991} for the case of loop algebras on a surface, i.e. skein modules. Chord diagrams first appeared in the setting of finite type invariants of knots \cite{Vassiliev'1990} and
in perturbative Chern-Simons invariants \cite{Bar-Natan'1991}\cite{Kontsevich'1993}. They became an important tool in the related theory of Vassiliev invariants.

Let us recall some basic definitions. A {\it chord diagram} is a homotopy class of a graph on a surface which consists of solid lines and chords. Solid lines are oriented, cords are not oriented. They satisfy the analogue of Reidemeister relations (see \cite{AndersenMattesReshetikhin'1996} for details) together with the so-called 4T-relation shown on Figure \ref{fig:4TRelation}.
\begin{figure}
\begin{tikzpicture}
\draw (-1.75,0) node {$-$};
\draw (0,0) [dashed] circle (1);
\draw[thick,->] (180+45:1) to (45:1);
\draw[thick,->] (-45:1) to (90+45:1);
\draw[thick,->] (-70:1) to (70:1);
\fill (0,0) circle (0.07);
\fill (45:0.48) circle (0.07);
\begin{scope}[shift={(3,0)}]
\draw (-1.5,0) node {$+$};
\draw (0,0) [dashed] circle (1);
\draw[thick,->] (180+45:1) to (45:1);
\draw[thick,->] (-45:1) to (90+45:1);
\draw[thick,->] (180+70:1) to (180-70:1);
\fill (0,0) circle (0.07);
\fill (180+45:0.48) circle (0.07);
\end{scope}
\begin{scope}[shift={(6,0)}]
\draw (-1.5,0) node {$+$};
\draw (0,0) [dashed] circle (1);
\draw[thick,->] (180+45:1) to (45:1);
\draw[thick,->] (-45:1) to (90+45:1);
\draw[thick,->] (-70:1) to (70:1);
\fill (0,0) circle (0.07);
\fill (-45:0.48) circle (0.07);
\end{scope}
\begin{scope}[shift={(9,0)}]
\draw (-1.5,0) node {$-$};
\draw (0,0) [dashed] circle (1);
\draw[thick,->] (180+45:1) to (45:1);
\draw[thick,->] (-45:1) to (90+45:1);
\draw[thick,->] (180+70:1) to (180-70:1);
\fill (0,0) circle (0.07);
\fill (180-45:0.48) circle (0.07);
\draw (1.7,0) node {$=0$};
\end{scope}
\begin{scope}[shift={(3,3)}]
\draw (0,0) [dashed] circle (1);
\draw[thick,->] (180+45:1) to (45:1);
\draw[thick,->] (-45:1) to (90+45:1);
\fill (0,0) circle (0.07);
\end{scope}
\begin{scope}[shift={(6,3)}]
\draw (-1.5,0) node {$:=$};
\draw (0,0) [dashed] circle (1);
\draw[thick,->] (180+45:1) to (45:1);
\draw[thick,->] (-45:1) to (90+45:1);
\draw[thick,decorate,decoration={snake,amplitude=.4mm,segment length=2mm,post length=0mm}] (180+45:0.67) to (-45:0.67);
\end{scope}
\end{tikzpicture}
\caption{$4T$ relation in chord diagrams}
\label{fig:4TRelation}
\end{figure}
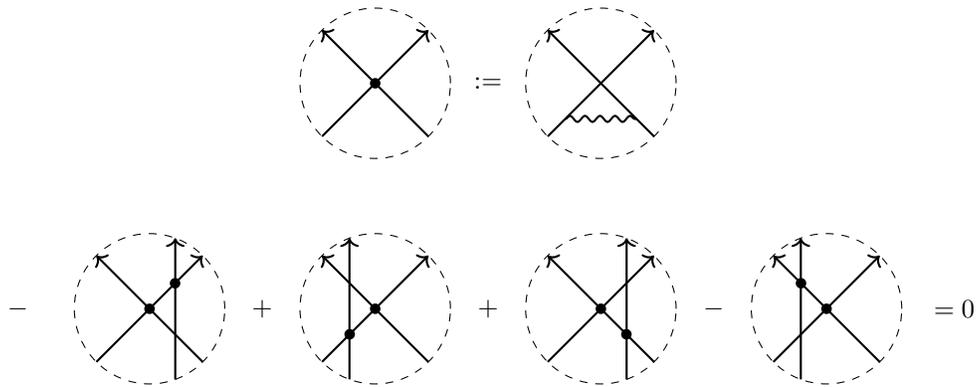

The {\it algebra of chord diagrams} $Ch(\Sigma)$ on a surface $\Sigma$ is a vector space spanned by $\ZZ$-linear combinations of chord diagrams with
\begin{itemize}

\item the commutative multiplication of $[D_1]$ and $[D_2]$ is defined as $[D_1][D_2]=[D_1\sqcup D_2]$, i.e. as the homotopy class of their disjoint union.
T
\item The Poisson bracket is a "universal" version of the Atiyah and Bott Poisson
bracket.  For two chord diagrams $D_1$ and $D_2$ their Poisson bracket is
\[
\{[D_1], [D_2]\}=\sum_{p\in D_1\cap D_2} \sum_{p\in D_1\cap D_2} \epsilon(p) [D_1\#_p D_2]
\]
where the sum is taken over the intersection points of representatives $D_1$ and $D_2$ and $D_1\#_p D_2$
is the result of the insertion of a chord in a small vicinity of point $p$. It is clear that the right side
does not depend on the choice of representatives and on where the extra cord in inserted.

\end{itemize}

The Poisson center of $Z(\Sigma)\subset Ch(\Sigma)$ is, conjecturally, generated by chord diagrams which are
contractible to the boundary of $\Sigma$.

To define a conjecturally superintegrable system on $Ch(\Sigma)$ choose a system of simple
curves on the surface. Define the subalgebra of Hamiltonians $B(C)$ and the subalgebra of chord diagrams
which are contractible to $C$ and the subalgebra of first integrals $J(C)$ and the subalgebra of chord diagrams
which can be separated from $C$. We have a natural inclusion of Poisson subalgebras:
\[
Z(\Sigma)\subset B(C)\otimes Z(\Sigma)\subset J(C)\subset Ch(\Sigma)
\]
We conjecture that this chain of inclusions define a superintegrable systems.
In this infinite dimensional setting the superintegrability means that the space of all nontrivial Poisson derivations $Der_{J(C)}(Ch(\Sigma))$
of $Ch(\Sigma)$ which act trivially on $J(C)$ is generated as an $Ch(\Sigma)$-module by Hamilton derivations corresponding to elements from $B(C)$.

\subsection{Quantization and the algebra of links in a cylinder} The natural question about any integrable system is
how to quantize it. In the case of superintegrable systems on chord diagrams there is a natural quantization.

There is a natural associative algebra that can be naturally associated to a surface.
The his is the algebra of links, or linked graphs, in $\Sigma\times I$ where $I$ is an interval.
This algebra, which we will denote $A(\Sigma)$ is the space of $\ZZ$-linear combinations of
links in $\Sigma\times I$. The natural associative multiplication on this space is the "placing of one link
on the top of the other":
\[
[L_1]\ast [L_2]=[L_1\sqcup L2]
\]
where in the right side of this equation we assume  that $L_1\subset \Sigma[1,1/2]$ and $L_2\subset \Sigma\times [1/2,0]$.
here $[L]$ is the topological link which the isotopy class of a geometrical link $L: {S^1}^{\times n}\to \Sigma\times I$.

The algebra $A(\Sigma)$ has a natural filtration with $Ch(\Sigma)$ being isomorphic to its associated graded
algebra. In this sense $A(\Sigma)$ is the quantization of $Ch(\Sigma)$.

Finite dimensional representations of the algebra $A(\Sigma)$ can be easily constructed using \cite{RT}.
Such representations use representation theory of $U_q({\mathfrak g})$ at roots of unity.
We will present details in a separate publication.

For generic $q$, the quantization can be done by quantizing Fock and Rosly brackets by $R$-matrices, see for example \cite{AlekseevGrosseSchomerus'1995,AlekseevSchomerus'1996}. It is worth noting that a closely related question for the extended moduli space with the choice of flags at marked points on the boundary can be studied by means of Cluster Algebras \cite{FockGoncharov'2006}, see for example \cite{SchraderShapiro'2017,GekhtmanShapiroVainshtein'2018,ChekhovMazzoccoRubtsov'2017} and references therein.

\end{document}